\newtheorem{theorem}{Theorem}[section]
\newtheorem{lemma}[theorem]{Lemma}
\newtheorem{observation}{Observation}
\newtheorem{definition}[theorem]{Definition}
\newtheorem{claim}[theorem]{Claim}
\newif\ifabstract
\newif\iffull
\def\<#1>{%
    \expandafter\ifx\csname<#1>\endcsname\relax
        \errmessage{abbreviation <#1> undefined!}
    \else
        \csname<#1>\endcsname
    \fi
}
\def\abbr#1#2{%
    \expandafter\def\csname<#1>\endcsname{#2}%
}
\newcommand\congsub{\mathrel{%
  \ooalign{\raise0.2ex\hbox{$\sqsubset$}\cr\hidewidth\raise-0.8ex\hbox{\scalebox{0.9}{$\sim$}}\hidewidth\cr}}}
\def\calS{{\mathcal{S}}}  %
\def\planter/{{\texttt{planter}}}
\def\topcomp/{{\texttt{top}}}
\def\leftcomp/{{\texttt{left}}}
\def\rightcomp/{{\texttt{right}}}
\def\arm/{{\texttt{arm}}}
\def\blocker/{{\texttt{blocker}}}
\def\bitalley/{{\texttt{bitAlley}}}
\def\blocker/{{\texttt{blocker}}}
\def\bumper/{{\texttt{bumper}}}
\title{Universal Simulation of Directed Systems in the abstract Tile Assembly Model Requires Undirectedness}
\author{
 Jacob Hendricks%
    \thanks{Department of Computer Science and Information Systems, University of Wisconsin - River Falls,
    \protect\url{jacob.hendricks@uwrf.edu}
    Supported in part by National Science Foundation Grants CCF-1117672 and CCF-1422152.}
\and
  Matthew J. Patitz%
    \thanks{Department of Computer Science and Computer Engineering, University of Arkansas,
    \protect\url{patitz@uark.edu}
    Supported in part by National Science Foundation Grants CCF-1117672 and CCF-1422152.}
\and
 Trent A. Rogers%
    \thanks{Department of Computer Science and Computer Engineering, University of Arkansas,
    \protect\url{tar003@uark.edu}
    Supported in part by National Science Foundation Grants CCF-1117672 and CCF-1422152, and Graduate Research Fellowship Grant DGE-1450079}
}
\date{}
\begin{document}

\maketitle
\vspace{-6ex}
\begin{abstract}
As a mathematical model of tile-based self-assembling systems, Winfree's abstract Tile Assembly Model (aTAM) has proven to be a remarkable platform for studying and understanding the behaviors and powers of self-assembling systems.  Furthermore, as it is capable of Turing universal computation, the aTAM allows algorithmic self-assembly, in which the components can be designed so that the rules governing their behaviors force them to inherently execute prescribed algorithms as they combine.  This power has yielded a wide variety of theoretical results in the aTAM utilizing algorithmic self-assembly to design systems capable of performing complex computations and forming extremely intricate structures.  Adding to the completeness of the model, in FOCS 2012 the aTAM was shown to also be \emph{intrinsically universal}, which means that there exists one single tile set such that for any arbitrary input aTAM system, that tile set can be configured into a ``seed'' structure which will then cause self-assembly using that tile set to simulate the input system, capturing its full dynamics modulo only a scale factor.  However, the ``universal simulator'' of that result makes use of nondeterminism in terms of the tiles placed in several key locations when different assembly sequences are followed.  This nondeterminism remains even when the simulator is simulating a system which is \emph{directed}, meaning that it has exactly one unique terminal assembly and for any given location, no matter which assembly sequence is followed, the same tile type is always placed there. The question which then arose was whether or not that nondeterminism is fundamentally required, and if any universal simulator must in fact utilize more nondeterminism than directed systems when simulating them.

In this paper, we answer that question in the affirmative:  the class of directed systems in the aTAM is not intrinsically universal, meaning there is no universal simulator for directed systems which itself is always directed.  This result provides a powerful insight into the role of nondeterminism in self-assembly, which is itself a fundamentally nondeterministic process occurring via unguided local interactions.  Furthermore, to achieve this result we leverage powerful results of computational complexity hierarchies, including tight bounds on both best and worst-case complexities of decidable languages, to tailor design systems with precisely controllable space resources available to computations embedded within them. We also develop novel techniques for designing systems containing subsystems with disjoint, mutually exclusive computational powers. The main result will be important in the development of future simulation systems, and the supporting design techniques and lemmas will provide powerful tools for the development of future aTAM systems as well as proofs of their computational abilities.

\end{abstract}

\section{Introduction}\label{sec:intro}

Self-assembly is the process by which relatively simple components begin in a disorganized state and, without external guidance but only by following local rules of interaction, autonomously combine to form more complex structures.  Self-assembling systems are ubiquitous in nature, and self-assembly processes govern the formation of everything from ice crystals to cellular membranes, and despite the seemingly random nature of these systems, they serve as a ratchet for the generation of complexity on scales from the nano \cite{ke2012three,rothemund2004algorithmic} to the macro \cite{Whitesides16042002}. The random motions of components are leveraged to allow binding opportunities to growing structures, and if the dynamics of interactions fall into ranges which are restrictive enough, without being too restrictive, ordered assemblies can form.  Clearly, nondeterminism plays key roles in such systems, and our main result helps to elucidate one of them.

The abstract Tile Assembly Model (aTAM) is a mathematical abstraction of self-assembling systems based on square ``tile'' components which have ``glues'' on their sides that allow them to bind together when glues on abutting edges of tiles have matching types.  Despite being a very simplified model which uses geometrically basic building blocks, the aTAM is computationally universal \cite{Winf98} and a powerful model allowing for very efficient algorithmic self-assembly of shapes \cite{SolWin07,RotWin00}.  Another noteworthy aspect of the model is that it is intrinsically universal (IU) \cite{IUSA}, meaning that there exists a single tile set, $U$, such that given any arbitrary aTAM system $\calT$, $U$ can be given an initial configuration which will cause it to faithfully simulate the full dynamics of $\calT$ modulo a constant scale factor (dependent on $\calT$).  Since the result of \cite{IUSA}, several other results related to IU have been used to examine and classify the relative powers of a variety of models of self-assembly and classes of systems within them \cite{2HAMIU,Duples,jSignals3D,BreakableDuples,2HAMSim,IUNeedsCoop,Polygons,Polyominoes,OneTile}, thus developing a complexity hierarchy which can be used to categorize models and systems within them.

In this paper, we investigate the problem of characterizing the role of nondeterminism within the aTAM, which has previously been explored in a variety of different aspects \cite{BryChiDotKarSek10,Dot09,KaoSchweller08}.  At its core, the aTAM is an asynchronous and nondeterministic model in which tile attachments to a growing assembly, while constrained by the requirement that sufficient matching glues must bind, are random with respect to the sequence of locations and sometimes the particular types of tiles which bind.  The amount of nondeterminism of different aTAM systems can vary wildly, with some systems having uncountably infinite sets of producible, or even terminal (i.e. those which cannot grow any further), assemblies and/or sequences of assembly, to those having exactly one producible assembly and even some with just one possible assembly sequence.  This leads to questions about whether or not, and possibly how much, nondeterminism is required to give the aTAM its full power.  In this paper, we focus on this question from the perspective of the ``universal aTAM simulator'' of \cite{IUSA}, which by design has several so-called ``points of competition'', where different assembly sequences of the simulator, as it simulates a system $\calT$, race to grow paths to those points, with the first path to arrive causing a tile type specific to that path to be placed.  The fact that there are multiple assembly sequences, each growing a different path first, causes nondeterminism in the types of tiles placed in these locations.  The use of such locations is so fundamental to that universal simulator's design, allowing it to continue growth of portions of the assembly without having to rely on future paths which may or may not ever arrive, that even when it is simulating directed aTAM systems, which are those that have exactly one terminal assembly and only one possible tile type in any location regardless of the assembly sequence, the simulator itself must be undirected.  It has remained unknown whether or not such nondeterminism is fundamentally required by a universal simulator, and in Theorem~\ref{thm:directedNotIU} we prove that it is.  That is, we prove that the class containing all directed aTAM systems is not IU, meaning that there exists no tile set $U$ such that, given an arbitrary directed aTAM system, $U$ can be configured to create an aTAM  system which simulates it while itself being directed.  Stated another way, it means that any universal simulator for the aTAM must be more nondeterministic than some of the systems which it simulates.

While our main result presents key insights into the properties required of aTAM and other tile-based simulators, and shows how nondeterminism with respect to the selection of assembly sequences can force nondeterminism with respect to assemblies produced by any universal simulator, other key contributions of this paper include the development of several new system design techniques and tools useful in proving properties about the computational resources available to be harnessed by embedded algorithms, which themselves provide additional insights into the computations possible using static combinations of matter filling non-reusable space.  More specifically, we make use of computational complexity results which combine extremely tight worst-case and best-case space complexity bounds for decidable languages \cite{VeryHardLanguages}, as well as novel techniques for controlling the ``input bandwidth'' and geometries of carefully designed subassemblies which perform complex computations that are effectively hidden from each other.  These designs are likely to be useful in further tile-based self-assembly results, especially impossibility results.  Furthermore, we develop several important and potentially very useful tools which can be used to characterize properties of tile assembly systems which are simulating others, e.g. Lemma~\ref{lem:noCheating} which proves that the space complexity of computations which can be performed by a system simulating a type of system known as a zig-zag system is asymptotically no greater than that of the computations which can be performed by the original system, despite the scale factor allowed the simulator.

Section~\ref{sec:prelims} provides a set of preliminary definitions used throughout the paper, and the following section a formal statement of our main result. Next are two sections dedicated to a high-level overview of the proof,
\iffull
with sections including the full technical details following.
\else
and due to space constraints full technical details can be found in \cite{DirectedNotIUArxiv}.
\fi

\vspace{-10pt}
\section{Preliminaries}\label{sec:prelims}

In this section we provide an informal definition of the aTAM and then define what it means for one tile assembly system to simulate another, and the notion of intrinsic universality.

\subsection{Informal description of the abstract Tile Assembly Model}
\label{sec-tam-informal}

This section gives a brief informal sketch of the abstract Tile Assembly Model (aTAM). %
See Section~\ref{sec-tam-formal} for a formal definition of the aTAM.

A \emph{tile type} is a unit square with four sides, each consisting of a \emph{glue label}, often represented as a finite string, and a nonnegative integer \emph{strength}. A glue~$g$ that appears on multiple tiles (or sides) always has the same strength~$s_g$. %
There are a finite set $T$ of tile types, but an infinite number of copies of each tile type, with each copy being referred to as a \emph{tile}. An \emph{assembly}
is a positioning of tiles on the integer lattice $\Z^2$, described  formally as a partial function $\alpha:\Z^2 \dashrightarrow T$. %
Let $\mathcal{A}^T$ denote the set of all assemblies of tiles from $T$, and let $\mathcal{A}^T_{< \infty}$ denote the set of finite assemblies of tiles from $T$.
We write $\alpha \sqsubseteq \beta$ to denote that $\alpha$ is a \emph{subassembly} of $\beta$, which means that $\dom\alpha \subseteq \dom\beta$ and $\alpha(p)=\beta(p)$ for all points $p\in\dom\alpha$.
Two adjacent tiles in an assembly \emph{interact}, or are \emph{attached}, if the glue labels on their abutting sides are equal and have positive strength. %
Each assembly induces a \emph{binding graph}, a grid graph whose vertices are tiles, with an edge between two tiles if they interact.
The assembly is \emph{$\tau$-stable} if every cut of its binding graph has strength at least~$\tau$, where the strength   of a cut is the sum of all of the individual glue strengths in the cut.

A \emph{tile assembly system} (TAS) is a triple $\calT = (T,\sigma,\tau)$, where $T$ is a finite set of tile types, $\sigma:\Z^2 \dashrightarrow T$ is a finite, $\tau$-stable \emph{seed assembly},
and $\tau$ is the \emph{temperature}.
An assembly $\alpha$ is \emph{producible} if either $\alpha = \sigma$ or if $\beta$ is a producible assembly and $\alpha$ can be obtained from $\beta$ by the stable binding of a single tile.
In this case we write $\beta\to_1^\calT \alpha$ (to mean~$\alpha$ is producible from $\beta$ by the attachment of one tile), and we write $\beta\to^\calT \alpha$ if $\beta \to_1^{\calT*} \alpha$ (to mean $\alpha$ is producible from $\beta$ by the attachment of zero or more tiles).
When $\calT$ is clear from context, we may write $\to_1$ and $\to$ instead.
We let $\prodasm{\calT}$ denote the set of producible assemblies of $\calT$.
An assembly is \emph{terminal} if no tile can be $\tau$-stably attached to it.
We let   $\termasm{\calT} \subseteq \prodasm{\calT}$ denote  the set of producible, terminal assemblies of $\calT$.
A TAS $\calT$ is \emph{directed} if $|\termasm{\calT}| = 1$. Hence, although a directed system may be nondeterministic in terms of the order of tile placements,  it is deterministic in the sense that exactly one terminal assembly is producible (this is analogous to the notion of {\em confluence} in rewriting systems).

\vspace{-5pt}
\subsection{Simulation}
\label{sec:simulation_def}

To state our main results, we must formally define what it means for one TAS to ``simulate'' another.  Our definitions come from \cite{IUNeedsCoop}.
Intuitively, simulation of a system $\calT$ by a system $\calS$ requires that there is some scale factor $m \in \Z^+$ such that $m \times m$ squares of tiles in $\calS$ represent individual tiles in $\calT$, and there is a ``representation function'' capable of inspecting assemblies in $\calS$ and mapping them to assemblies in $\calT$.

From this point on, let $T$ be a tile set, and let $m\in\Z^+$.
An \emph{$m$-block supertile} over $T$ is a partial function $\alpha : \Z_m^2 \dashrightarrow T$, where $\Z_m = \{0,1,\ldots,m-1\}$.
Let $B^T_m$ be the set of all $m$-block supertiles over $T$.
The $m$-block with no domain is said to be $\emph{empty}$.
For a general assembly $\alpha:\Z^2 \dashrightarrow T$ and $(x_0, x_1)\in\Z^2$, define $\alpha^m_{x_0,x_1}$ to be the $m$-block supertile defined by $\alpha^m_{x_0, x_1}(i_0, i_1) = \alpha(mx_0+i_0, mx_1+i_1)$ for $0 \leq i_0,i_1< m$.
For some tile set $S$, a partial function $R: B^{S}_m \dashrightarrow T$ is said to be a \emph{valid $m$-block supertile representation} from $S$ to $T$ if for any $\alpha,\beta \in B^{S}_m$ such that $\alpha \sqsubseteq \beta$ and $\alpha \in \dom R$, then $R(\alpha) = R(\beta)$.

For a given valid $m$-block supertile representation function $R$ from tile set~$S$ to tile set $T$, define the \emph{assembly representation function}\footnote{Note that $R^*$ is a total function since every assembly of $S$ represents \emph{some} assembly of~$T$; the functions $R$ and $\alpha$ are partial to allow undefined points to represent empty space.}  $R^*: \mathcal{A}^{S} \rightarrow \mathcal{A}^T$ such that $R^*(\alpha') = \alpha$ if and only if $\alpha(x_0, x_1) = R\left(\alpha'^m_{x_0,x_1}\right)$ for all $(x_0,x_1) \in \Z^2$.
For an assembly $\alpha' \in \mathcal{A}^{S}$ such that $R(\alpha') = \alpha$, $\alpha'$ is said to map \emph{cleanly} to $\alpha \in \mathcal{A}^T$ under $R^*$ if for all non empty blocks $\alpha'^m_{x_0,x_1}$, $(x_0,x_1)+(u_0,u_1) \in \dom \alpha$ for some $u_0,u_1 \in U_2$ such that $u_0^2 + u_1^2 \leq 1$, or if $\alpha'$ has at most one non-empty $m$-block~$\alpha^m_{0, 0}$.

In other words, $\alpha'$ may have tiles on supertile blocks representing empty space in $\alpha$, but only if that position is adjacent to a tile in $\alpha$.  We call such growth ``around the edges'' of $\alpha'$ \emph{fuzz} and thus restrict it to be adjacent to only valid supertiles, but not diagonally adjacent (i.e.\ we do not permit \emph{diagonal fuzz}).

In the following definitions, let $\mathcal{T} = \left(T,\sigma_T,\tau_T\right)$ be a TAS, let $\mathcal{S} = \left(S,\sigma_S,\tau_S\right)$ be a TAS, and let $R$ be an $m$-block representation function $R:B^S_m \rightarrow T$.

\begin{definition}
\label{def-equiv-prod} We say that $\mathcal{S}$ and $\mathcal{T}$ have \emph{equivalent productions} (under $R$), and we write $\mathcal{S} \Leftrightarrow \mathcal{T}$ if the following conditions hold:
\begin{enumerate}
        \item $\left\{R^*(\alpha') | \alpha' \in \prodasm{\mathcal{S}}\right\} = \prodasm{\mathcal{T}}$.
        \item $\left\{R^*(\alpha') | \alpha' \in \termasm{\mathcal{S}}\right\} = \termasm{\mathcal{T}}$.
        \item For all $\alpha'\in \prodasm{\mathcal{S}}$, $\alpha'$ maps cleanly to $R^*(\alpha')$.
\end{enumerate}
\end{definition}

\begin{definition}
\label{def-t-follows-s} We say that $\mathcal{T}$ \emph{follows} $\mathcal{S}$ (under $R$), and we write $\mathcal{T} \dashv_R \mathcal{S}$ if $\alpha' \rightarrow^\mathcal{S} \beta'$, for some $\alpha',\beta' \in \prodasm{\mathcal{S}}$, implies that $R^*(\alpha') \to^\mathcal{T} R^*(\beta')$.
\end{definition}

\begin{definition}
\label{def-s-models-t} We say that $\mathcal{S}$ \emph{models} $\mathcal{T}$ (under $R$), and we write $\mathcal{S} \models_R \mathcal{T}$, if for every $\alpha \in \prodasm{\mathcal{T}}$, there exists $\Pi \subset \prodasm{\mathcal{S}}$ where $R^*(\alpha') = \alpha$ for all $\alpha' \in \Pi$, such that, for every $\beta \in \prodasm{\mathcal{T}}$ where $\alpha \rightarrow^\mathcal{T} \beta$, (1) for every $\alpha' \in \Pi$ there exists $\beta' \in \prodasm{\mathcal{S}}$ where $R^*(\beta') = \beta$ and $\alpha' \rightarrow^\mathcal{S} \beta'$, and (2) for every $\alpha'' \in \prodasm{\mathcal{S}}$ where $\alpha'' \rightarrow^\mathcal{S} \beta'$, $\beta' \in \prodasm{\mathcal{S}}$, $R^*(\alpha'') = \alpha$, and $R^*(\beta') = \beta$, there exists $\alpha' \in \Pi$ such that $\alpha' \rightarrow^\mathcal{S} \alpha''$.
\end{definition}

The previous definition essentially specifies that every time $\mathcal{S}$ simulates an assembly $\alpha \in \prodasm{\mathcal{T}}$, there must be at least one valid growth path in $\mathcal{S}$ for each of the possible next steps that $\mathcal{T}$ could make from $\alpha$ which results in an assembly in $\mathcal{S}$ that maps to that next step.

\begin{definition}
\label{def-s-simulates-t} We say that $\mathcal{S}$ \emph{simulates} $\mathcal{T}$ (under $R$) if $\mathcal{S} \Leftrightarrow_R \mathcal{T}$ (equivalent productions), $\mathcal{T} \dashv_R \mathcal{S}$ and $\mathcal{S} \models_R \mathcal{T}$ (equivalent dynamics).
\end{definition}

\newcommand{\REPL}{\mathsf{REPR}}
\newcommand{\frakC}{\mathfrak{C}}

\vspace{-5pt}
\subsection{Intrinsic Universality}
\label{sec:iu_def}
Now that we have a formal definition of what it means for one tile system to simulate another, we can proceed to formally define the concept of intrinsic universality, i.e., when there is one general-purpose tile set that can be appropriately programmed to simulate any other tile system from a specified class of tile systems.

Let $\REPL$ denote the set of all supertile representation functions (i.e., $m$-block supertile representation functions for some $m\in\Z^+$).
Define $\frakC$ to be a class of tile assembly systems, and let $U$ be a tile set.
Note that each element of $\frakC$, $\REPL$, and $\mathcal{A}^U_{< \infty}$ is a finite object, hence encoding and decoding of simulated and simulator assemblies can be represented in a suitable format for computation in some formal system such as Turing machines.

\begin{definition}\label{def:iu-specific-temp}
We say $U$ is \emph{intrinsically universal} for $\frakC$ \emph{at temperature} $\tau' \in \Z^+$ %
if there are computable functions $\mathcal{R}:\frakC \to \REPL$ and $S:\frakC \to \mathcal{A}^U_{< \infty}$ such that, for each $\mathcal{T} = (T,\sigma,\tau) \in \frakC$, there is a constant $m\in\N$ such that, letting $R = \mathcal{R}(\mathcal{T})$, $\sigma_\mathcal{T}=S(\mathcal{T})$, and $\mathcal{U}_\mathcal{T} = (U,\sigma_\mathcal{T},\tau')$, $\mathcal{U}_\mathcal{T}$ simulates $\mathcal{T}$ at scale $m$ and using supertile representation function~$R$.
\end{definition}
That is, $\mathcal{R}(\mathcal{T})$ outputs a representation function that interprets assemblies of $\mathcal{U}_\mathcal{T}$ as assemblies of $\mathcal{T}$, and $S(\mathcal{T})$ outputs the seed assembly used to program tiles from $U$ to represent the seed assembly of $\mathcal{T}$.

\begin{definition}
\label{def:iu-general}
We say that~$U$ is \emph{intrinsically universal} for $\frakC$ if it is intrinsically universal for $\frakC$ at some temperature $\tau'\in Z^+$.
\end{definition} %

\begin{definition}
We say that $\frakC$ is intrinsically universal if there exists some $U$ that is intrinsically universal for $\frakC$ and for every $\calT \in \frakC$ and $\mathcal{U}_\calT$ which simulates it, $\mathcal{U}_\calT \in \frakC$.
\end{definition}

\section{The Directed aTAM is not Intrinsically Universal}

Let $\frak{D}$ represent the class of all tile assembly systems within the aTAM which are directed.

\begin{theorem}\label{thm:directedNotIU}
$\frak{D}$ is not intrinsically universal.
\end{theorem}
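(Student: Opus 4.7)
The plan is proof by contradiction. Suppose some tile set $U$ at temperature $\tau'$ is intrinsically universal for $\frak{D}$; equivalently, for every directed aTAM system $\calT$ there exist a seed assembly $\sigma_\calT$ over $U$ and a scale factor $m$ such that $\mathcal{U}_\calT = (U,\sigma_\calT,\tau')$ simulates $\calT$ and is itself directed. My goal is to exhibit a particular directed $\calT \in \frak{D}$ for which no such $\mathcal{U}_\calT$ can be directed.

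I would build $\calT$ from a \planter/ sub-assembly that lays down a structural backbone, from which several \arm/ sub-assemblies grow independently. Each \arm/ carries out a zig-zag computation whose space budget is calibrated by choosing a decidable language with matching tight best-case and worst-case space complexity bounds in the sense of \cite{VeryHardLanguages}. After computing, the arms grow into a common \bitalley/ region where they meet; different assembly sequences of $\calT$ correspond to different orderings in which the arms arrive. By arranging the arms so that their outputs agree, and the \bitalley/ so that it places identical tiles regardless of arrival order, $\calT$ is directed.

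The core argument then applies Lemma~\ref{lem:noCheating} to a hypothetical directed simulator $\mathcal{U}_\calT$ with scale factor $m$. That lemma asserts that the space available inside $\mathcal{U}_\calT$ for reproducing a zig-zag computation is at most a constant multiple of the space available inside $\calT$, so the factor-$m$ scale-up provides no asymptotic computational advantage. On the other hand, because $\mathcal{U}_\calT$ is directed, the tiles placed in the macrotiles corresponding to the \bitalley/ must agree under every assembly sequence, including those in which different arms arrive first. The geometry of $\calT$ will be arranged so that $\mathcal{U}_\calT$ is forced to commit to tile placements inside the \bitalley/ macrotiles before every arm has delivered its output; for those committed placements to be consistent across race outcomes, $\mathcal{U}_\calT$ would have to compute, locally within a bounded number of macrotiles, a function strictly harder than $\calT$'s own zig-zag budget permits. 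This contradicts Lemma~\ref{lem:noCheating}.

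The principal obstacle will be making the forced-commitment step rigorous: it must be ruled out that $\mathcal{U}_\calT$ has some clever strategy for deferring growth in the \bitalley/ macrotiles until after all arms have arrived, or for smuggling information across macrotile boundaries in order to synchronize its commitments. This is where the techniques hinted at in the introduction — controlling input-bandwidth and isolating sub-assemblies with mutually disjoint computational powers — become central: the arms must be laid out so that any attempt by $\mathcal{U}_\calT$ to read one arm's output from within another's macrotile creates a divergence between the race outcomes, violating directedness. A secondary but equally important step is verifying that $\calT$ itself is directed, which will follow from the deterministic zig-zag structure of each arm's computation and the agreement of their outputs across all assembly orderings.
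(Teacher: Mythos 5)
Your proposal has the right outer shell (proof by contradiction, a carefully engineered directed $\calT$, hard languages with tight space bounds \`a la \cite{VeryHardLanguages}, and an invocation of Lemma~\ref{lem:noCheating}), but the construction you sketch does not actually produce the contradiction, and the mechanism that does produce it in the paper is missing.

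The central issue is that your design has all the arms agreeing, and you then argue that $\mathcal{U}_\calT$ is ``forced to commit'' prematurely. But if the arms always agree, nothing forces premature commitment: the simulator can simply wait until enough information has arrived and then place the agreed-upon tiles. What the paper actually builds is a \emph{bifurcation}. The \leftcomp/ and \rightcomp/ modules present bit strings on opposite sides of a one-tile-wide \bitalley/; in most subiterations at least one bit matches and a tile is \emph{cooperatively} placed across the gap, but in exactly one ``empty'' subiteration per iteration no bits match and a one-tile-wide \arm/ grows \emph{through} the gap. The contradiction is then geometric: the claimed simulator must use scale factor $m>1$ (because the number of distinct \arm/ types exceeds $|U|$), so a macrotile representing an \arm/ tile is strictly wider than one tile, yet cooperation across the simulated gap forces the left/right probe macrotiles to come within a single tile of one another. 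These two requirements cannot both be met by the same probe growth. The complexity-theoretic machinery (Lemma~\ref{lem:noCheating}, the almost-everywhere/infinitely-often space bounds, the mutual obfuscation between modules via limited input bandwidth) is there precisely to rule out the simulator ``guessing'' in advance which case it is in, so that its probes must behave identically in empty and non-empty subiterations infinitely often (Lemma~\ref{lem:noSneak}). Your proposal names the anti-cheating ingredient but omits both the cooperate-versus-arm dichotomy and the scale-factor-versus-gap-width geometric impossibility, which together are the actual engine of the contradiction. A secondary gap: you assert the forced-commitment step as what ``will be arranged,'' but never identify a concrete obstruction; without the $m>1$ versus one-tile-gap clash, Lemma~\ref{lem:noCheating} by itself only bounds the simulator's space, it does not force it to place a wrong tile.
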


Theorem~\ref{thm:directedNotIU} states that there exists no aTAM tile set $U$ such that, for any directed aTAM tile assembly system $\mathcal{D} \in \frak{D}$, where $\mathcal{D} = (T,\sigma,\tau)$, there exists a directed aTAM system $\mathcal{U}_{\mathcal{D}} \in \frak{D}$, where $\mathcal{U}_{\mathcal{D}} = (U,\sigma_{\mathcal{D}},\tau')$, scale factor $m \in \mathbb{N}$, and representation function $R: B^U_m \rightarrow T$, such that $\mathcal{U}_{\mathcal{D}}$ simulates $\mathcal{D}$ under $m$-block representation function $R$ at scale factor $m$.  Essentially, there exists no ``universal'' tile set such that for any directed aTAM system, that tile set can be configured in a simulating system which simulates the original and is itself directed too.

Our proof of Theorem~\ref{thm:directedNotIU} will be by contradiction.  Therefore, assume that such a universal tile set $U$, which can be used to simulate any directed system while using a directed system, exists.  Given that $U$, we define an aTAM system $\calT = (T,\sigma,2)$ which is directed and forms an infinite terminal assembly, explain the growth of $\calT$, and verify that it is directed. We provide a high-level overview of $\calT$ in Section~\ref{sec:the-system-overview}.  We then show why there exists no directed aTAM system $\calS = (U,\sigma_{\calT},\tau')$ which simulates $\calT$.  Section~\ref{sec:proof-overview} contains a very high-level overview of that proof.  Full details of $\calT$
\iffull
can be found in Section~\ref{sec:the-system}, and for the impossibility proof in Section~\ref{sec:proof-details}.
\else
and of the impossibility proof can be found in \cite{DirectedNotIUArxiv}.
\fi

\section{Overview of the Directed aTAM System $\calT$}\label{sec:the-system-overview}

At the highest level, $\calT$ self-assembles an infinite structure, starting from a single seed tile placed at the origin, and growing from left to right.  In well-defined intervals, as the assembly grows eastward it initiates upward growths, an infinite series of sets of three ``modules'' which are subassemblies able to grow almost entirely independently of each other once the main horizontal growing structure has placed the tiles which serve as the ``input'' for the growth of each.  The aTAM is computationally universal \cite{Winf98}, and in fact it is quite straightforward to design a tile assembly system which simulates the computation of an arbitrary Turing machine $M$ (e.g. \cite{jSADS,jCCSA}) by growing rows of tiles, one above the other, where each row represents the full configuration of $M$ at a given time step (i.e. the tape contents, read/write head location, and state) in the values of the glues encoded on their north sides, and the row immediately above it represents the full configuration of $M$ at the next time step (by designing the tile types appropriately so that the only tiles which can attach above a given row ensure that the new northern glue above a position which just had the read/write head encodes the value that would have been output given the state of $M$ and the cell's previous value, and depending on the direction the head would have moved, either the tile representing the cell to the left or write would have a glue encoding the new state of $M$ and the current value of that cell).  To provide a logically infinite tape, the tiles can be designed to grow rows ``on demand'' by extending a row by one tile each time the simulated read/write head attempts to move past the end of the currently represented row.

The three modules which grow upward are logically grouped so that there is one of each type in a set.  These three modules are designed so that they simulate three computations which require asymptotically differing space resources.  As each set is initiated with inputs of increasing values, and as the assembly grows infinitely to the right, those space requirements ensure that the smallest module cannot perform the computations of the larger two, and the mid-sized module cannot perform the computations of the largest.  The computations carried out by each set of grouped modules as well as the geometries to which they are each constrained are carefully designed such that two of the modules are necessarily completely ``ignorant'' of the eventual outputs of the others.  However, these two modules are designed so that after performing their computations, they grow assemblies representing bit strings corresponding to the outputs of their computations in locations across a one tile wide gap from each other, which we call the \bitalley/.  In locations where output bits of the two computations match, tiles attach between tiles for those bit positions.  The third module independently computes the results of the computations of both other modules and if and only if there will be no matching bits between them, it grows an assembly which is a single tile wide path down through the \bitalley/ (thus it is guaranteed not to crash into any tiles in the \bitalley/, regardless of the ordering of tile attachments).  As the overall assembly grows further right, the inputs to the modules increase and the computations simulated by the modules require more resources and the \bitalley/s become arbitrarily long.  We are able to first show that $\calT$ is directed, and then that no simulating system can be built using the tiles of a universal simulating tile set $U$ and be itself directed.  This is because any such directed simulator is forced by the dynamics of correct simulation, the mutual obfuscation of computations across modules, and geometric constraints, to effectively create bottlenecks which do not allow enough information to be transmitted to the growing assembly for correct growth and therefore simulation.  The intuition is that the simulator has to make ``guesses'' about when it may need to place tiles which cooperate across a \bitalley/ (i.e. glues from the tiles on both sides of the gap are required to allow the attachment of one between them) which, due to the fact that space cannot be reused in the aTAM, doom it to failure.  Furthermore, these guesses are required not by nondeterminism about which tiles can be placed in locations by $\calT$, since after all $\calT$ is directed, but rather due to the ordering of arrival of tiles - the particular assembly sequence which may be followed.

\vspace{-5pt}
\subsection{Overview of modules of $\calT$}
\vspace{-5pt}

Figure~\ref{fig:counter-and-wedge-TMs-overview} shows a schematic depiction of a portion of the terminal assembly of $\calT$.  We now give a very high-level description of each of the main modules, and full details can be found in
\iffull
Section~\ref{sec:the-system}.
\else
\cite{DirectedNotIUArxiv}.
\fi

\begin{figure}[htp]
\begin{center}
\includegraphics[width=6.5in]{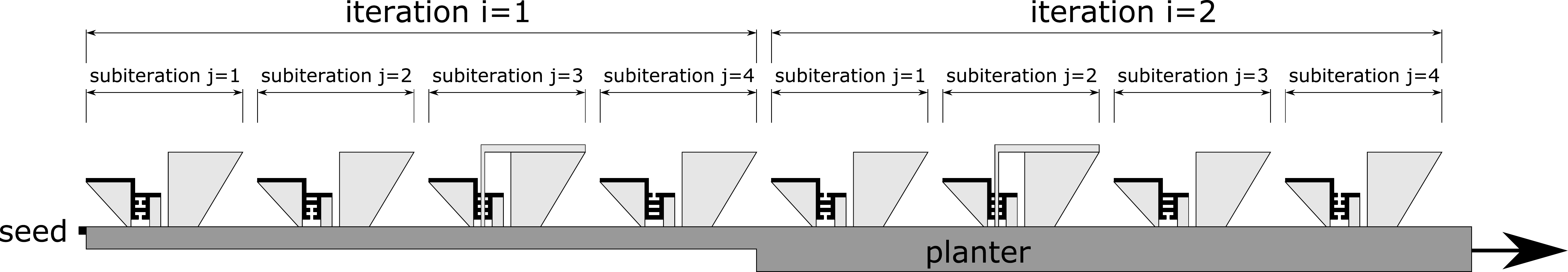}
\caption{A high-level schematic depiction of a portion of the infinite assembly produced by a directed aTAM system $\mathcal{T}$ which cannot be simulated by any directed universal simulator.}
\label{fig:counter-and-wedge-TMs-overview}
\end{center}
\vspace{-20pt}
\end{figure}

Beginning from the seed, the module which grows horizontally and initiates growth of sets of modules to its north is called the \planter/.  The \planter/ grows in a zig-zag, up and down manner, growing one column at a time.  Essentially, its job is to manage a set of nested counters, whose values are used to (1) determine the correct spacing between the modules to the \planter/'s north, and (2) serve as input to those modules.  The outermost of the nested counters counts $0 < i < \infty$, with each $i$ being what we call an \emph{iteration}.  For each value of $i$ that it counts, it holds that counter constant while it increments an inner counter from $0$ to (approximately) $2^i$.  For each value of $j$ it initiates the growth of what we call a \emph{subiteration}.  See Figure~\ref{fig:counter-and-wedge-TMs-iteration} for a high-level overview of one type of subiteration. For each subiteration, the \planter/ counts out a sequence of spacing columns (i.e. columns whose sole purpose is to put horizontal space between modules) while also computing the value $\log(i)$ and then rotating the values of the bits representing $\log(i)$ upward so that they are encoded in a row of glues on the north sides of the northern tiles of the \planter/\footnote{Note that throughout this paper, $\log$ means $\log_2$, and we use the shorthand $\log(i)$ to mean $\lceil \log(i) \rceil$.}.  From these, a \leftcomp/ module begins growth.  This module performs a stacked up series of $i$ Turing machine simulations on progressively increasing input values, with each simulation outputting a $0$ (for a rejecting computation) or a $1$ (accepting).  At the top of the stack of computations, the string of output bits is rotated to the right and then grown downward to the right of the \leftcomp/ module.  Once that growth reaches a specially marked location, the values of those bits are rotated to the right where they are presented as the eastern glues of the tiles forming the \bitalley/. (See Figure~\ref{fig:gap-overview} for a depiction of a southern portion of a \bitalley/.)

\begin{figure}[htp]
\vspace{-20pt}
\begin{center}
\includegraphics[width=5.5in]{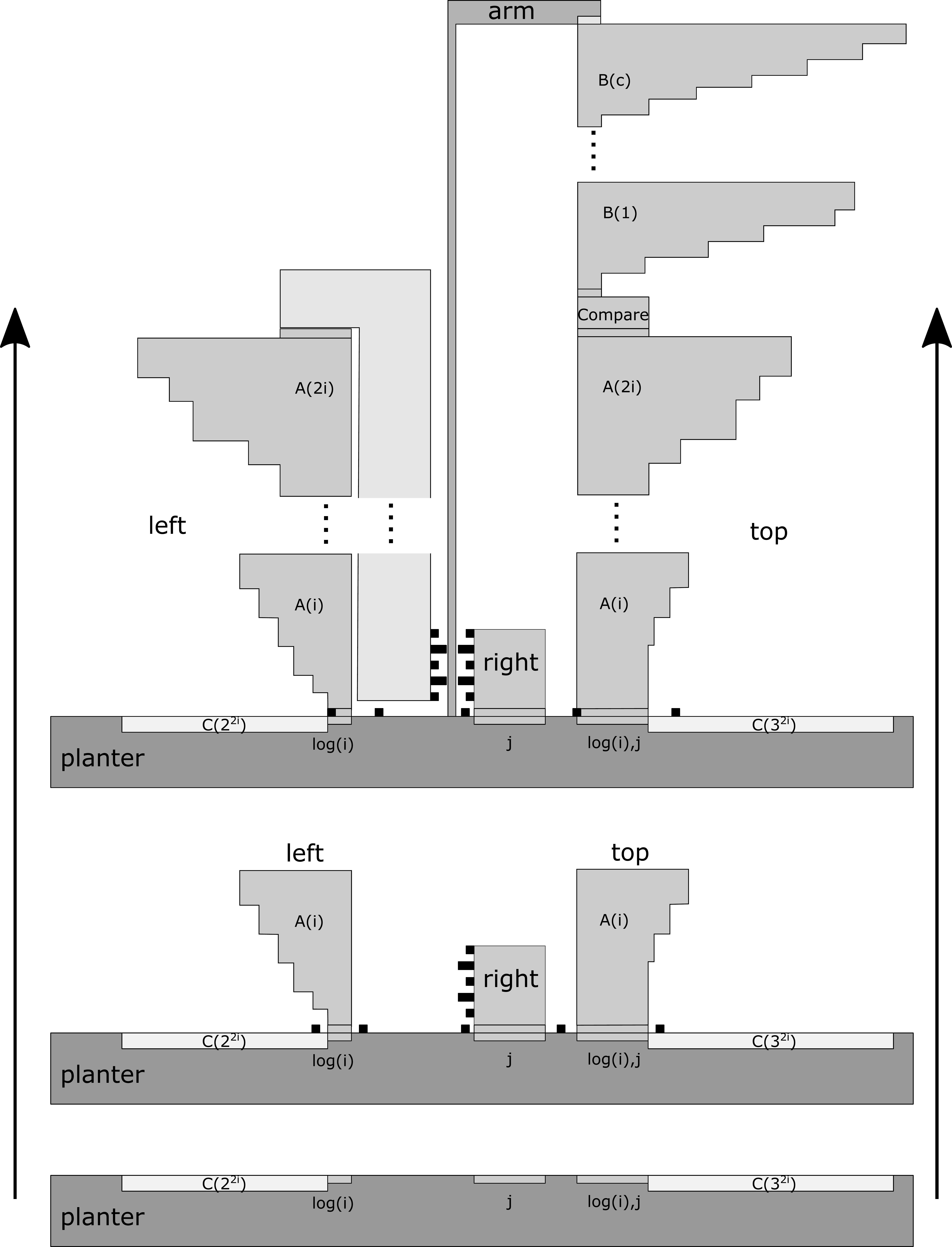}
\caption{A high-level schematic depiction of one possible ordering of growth of the modules of an empty subiteration.  (Bottom) The \planter/ lays out the inputs for the modules at the necessary spacings to prevent them from colliding, (Second) The \leftcomp/, \rightcomp/, and \topcomp/ modules begin growth, (Top) Once the \topcomp/ completes it initiates the growth of the \arm/ which grows down through the \bitalley/. Note that an \arm/ only grows in the \bitalley/ of an empty subiteration, unlike the \bitalley/ in Figure~\ref{fig:gap-overview} which shows tiles cooperatively binding across the \bitalley/ of a non-empty subiteration. Also, empty subiterations occur exponentially more rarely than non-empty ones.}
\label{fig:counter-and-wedge-TMs-iteration}
\end{center}
\vspace{-20pt}
\end{figure}

After growing a few spacing columns past the initiation point of the \leftcomp/ module, the \planter/ rotates the value of $j$ to its north side to initiate growth of a \rightcomp/ module.  This module simply rotates the values of the bits of $j$ to the left so they can be presented across the \bitalley/ from the bits output by the \leftcomp/.  Note that as the iteration number $i$ increases, so does the number of bits presented on each side of the \bitalley/, as the \leftcomp/ performs (approximately) $i$ Turing machine simulations, and \rightcomp/ actually receives the value of $j$ in binary padded with $0$'s as necessary to be the same length.

The final module to be initiated by the \planter/ in each subiteration is the \topcomp/ module.  This module receives as input both the values $\log(i)$ and $j$.  It first performs the same $i$ simulations that the \leftcomp/ performs, generating the same output bits.  It then compares those bits to the bits of $j$ to determine if there are any locations where the bits are the same.  If there are, then in the \bitalley/ there will be tiles which attach between them across the gap in those locations, and the \topcomp/ module halts its growth (in this subiteration).  It is guaranteed that in exactly one subiteration of each iteration that there will be no matching bits, since each subiteration performs the same \leftcomp/ computations on the same input and there is a unique subiteration for every possible bit string of length $i$, exactly one of which can be the complement of \leftcomp/'s output on that input.  In this special subiteration of the iteration, which we call the \emph{empty} subiteration (because the \bitalley/ will be empty of tiles cooperating across the gap), the \topcomp/ performs a new set of computations to determine which of a large number (relative to the number of tile types in the claimed universal simulator $U$) of \arm/ modules to grow.  The \arm/ module grows over to a position directly above the \bitalley/, then grows a single tile wide column of tiles down through the \bitalley/ until it crashes into the \planter/, with the specific type of tile used for the \arm/ determined by the final computations performed by the \topcomp/ module.
This completes the growth of a subiteration, and the growth of subiterations and iterations occurs for infinite numbers of each.

\begin{wrapfigure}{r}{.45\textwidth}
\vspace{-30pt}
\begin{center}
\includegraphics[width=2.3in]{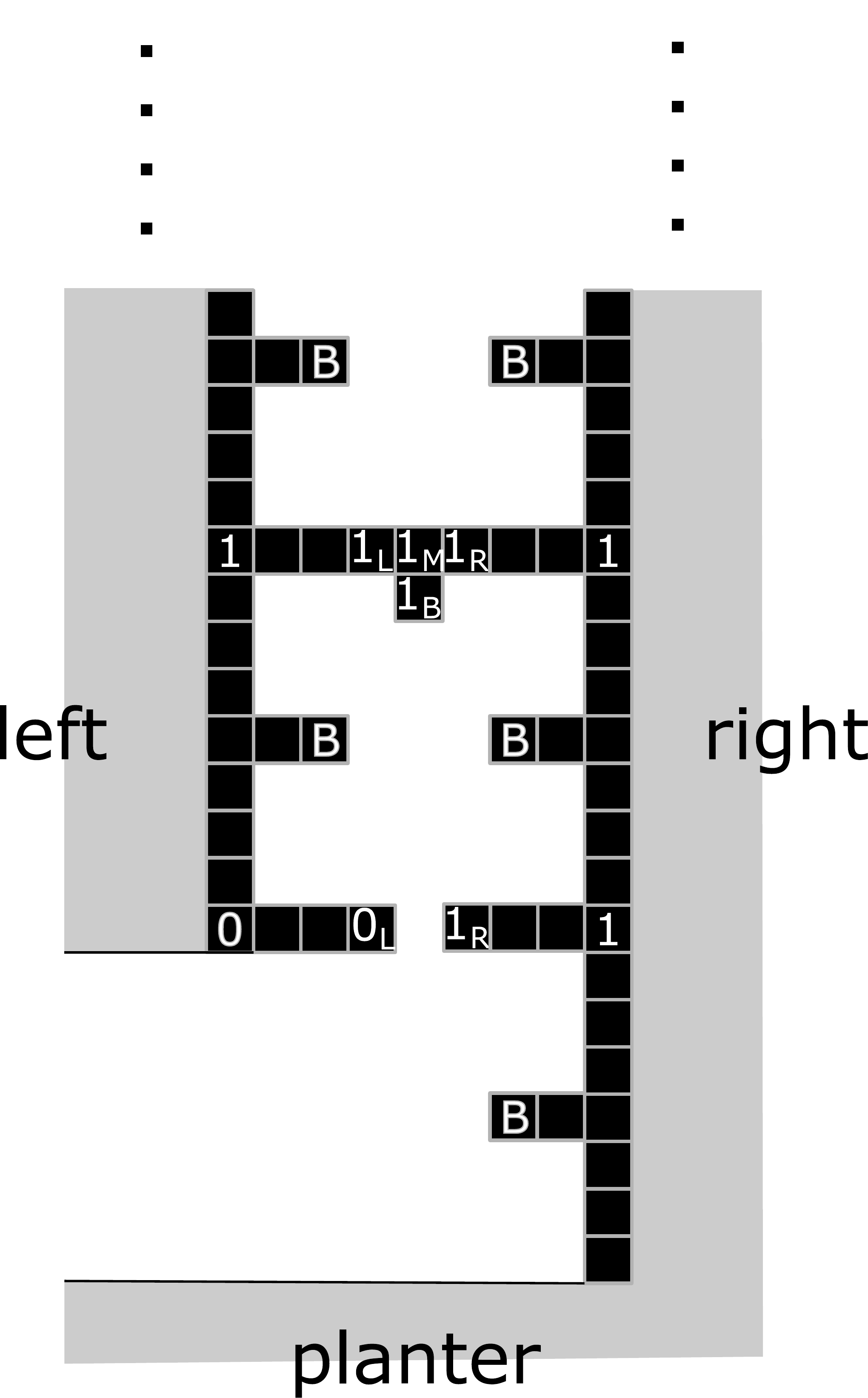}
\caption{Example \bitalley/ portion between \leftcomp/ and \rightcomp/ modules of a non-empty subiteration.}
\label{fig:gap-overview}
\end{center}
\vspace{-40pt}
\end{wrapfigure}

\vspace{-15pt}
\subsection{Directedness of $\calT$}
\vspace{-5pt}

The system $\calT$ is directed because there are no locations where tiles of multiple types might be placed during different assembly sequences, and this is ensured by carefully dictating the growth of each module (all grow in zig-zag manners), and the amount of space required for each is carefully computed and accounted for by the \planter/ so none of them can collide.  Finally, the \arm/ will only grow in empty subiterations, which can be assured by the \topcomp/ module performing the computations of \leftcomp/ and comparing the output bits to $j$, so it will never collide with tiles in the \bitalley/.  Thus, despite the fact that there are an infinite number of unique assembly sequences in $\calT$, they all result in the exact same terminal assembly in the limit.

\vspace{-10pt}
\section{Overview of Impossibility of Simulation}\label{sec:proof-overview}
\vspace{-5pt}

In this section we provide a high-level overview of the proof that $\calS$ does not simulate $\calT$.  More details can be found in 
\iffull
Section~\ref{sec:proof-details}.
\else
\cite{DirectedNotIUArxiv}.
\fi

The general idea behind the proof that $\calS$ cannot simulate $\calT$ is based around creating a situation in $\calT$ where there is a one tile wide gap between two tiles such that, depending on their types, they may or may not cooperate to place a tile in between them (i.e. a tile may bind using one glue from each of them).  However, if and only if all of these tiles in the \bitalley/ do not cooperate to place a tile between them, another assembly will grow between them without binding to either of their glues.  In $\calT$, the gap is exactly one tile wide and so is the assembly that may grow down through it. Since we are proving by contradiction, assume that such an $\calS$ exists and that it has tile set $U$ with size $|U| = t$. We design $\calT$ such that the number of unique \arm/ module tiles (which are the ones that grow between the two tiles if they do not cooperate) is exponentially larger than $t$.  This forces the simulation scale factor $m$ used by $\calS$ to be larger than $1$ because any macrotile created from tiles in $U$ must have enough tiles to uniquely identify any of the tile types in $\calT$.  Then we also note that geometrically, the only way to get two tiles to cooperate to place a tile in between them is for them to grow to positions with less than or equal to a single tile wide gap between them, which is not enough room for the macrotile of an \arm/ module, with $m > 1$, to pass through if necessary.  While the general idea seems simple, first, care must be taken in designing $\calT$ so that an \arm/ module will be grown if and only if the tiles will not cooperate across the gap, with no chance for a disagreement and collision since $\calT$ must be directed, so the portion of the assembly which initiates the growth of the \arm/ must be able to compute the tiles which will appear across the gap from each other.  Then, it must be shown that $\calS$ is forced to grow all the way to a single tile wide gap even when cooperation won't be necessary, thus blocking the \arm/. The main difficulties arise with the realization that the simulating system could attempt to compute in advance if cooperation will occur and, if so,  grow to the one tile wide gap which allows for cooperation, but if not, stop growth short of that to leave enough room for the \arm/ module to grow through.  The resulting complexity of $\calT$ arises from the need to create a system which is ``confusing'' enough for the simulator that the modules growing the macrotiles representing the tiles which may cooperate across the gap are unable to pre-compute the answer to whether or not cooperation will be necessary.  Essentially, the fact that $\calS$ cannot both cooperate and/or grow a full tile-representing assembly through a single tile wide gap dooms it to failure, but extensive machinery is required to force the situation.

A key tool in the proof is that in an arbitrary subiteration $j$ of an arbitrary iteration $i$, the output of the \leftcomp/ module is impossible to compute from within either the \planter/ or the \rightcomp/ modules, and the output of the \rightcomp/ is impossible to compute from within the \leftcomp/.  The reason for this is that (1) the Turing machines being simulated within the \leftcomp/ modules are deciding languages which cannot be recognized in infinitely often best-case space complexity \cite{VeryHardLanguages} which is greater than the space resources available to the \planter/ and \rightcomp/ modules, and thus the outputs of \leftcomp/ modules cannot be computed by them, and (2) the input $j$ passed to the \rightcomp/ module is asymptotically much greater in size than the amount of information which can be input to the \leftcomp/ module through the only $\log(\log(i))$ macrotiles allowed in the bottom row of the \leftcomp/ module to encode the value $\log(i)$, making it unable to get asymptotically more than a $\log$ size chunk of the \rightcomp/ module's input.  It is also important to note the languages being decided within the \leftcomp/ are recognized in almost everywhere worst case space complexity which is accounted for by the spacing columns of the \planter/, guaranteeing that for all but a finite number of computations, the \leftcomp/ will be able to successfully complete its computations.  It will prematurely abort any computations which attempt to run beyond those space bounds, but since there are guaranteed to be only a finite number of those, the goals of the construction and correctness of the proof aren't compromised.  It is important that these essentially arbitrarily tight bounds on the space complexities of languages is shown to be possible by Theorem 4.1 of \cite{VeryHardLanguages}, which allows for the computations embedded within the modules to be designed with great precision.  In a similar manner, the computations performed by the upper portion of the \topcomp/ module require space complexity greater that that available to either the \planter/ or \leftcomp/ of the same subiteration.  We note that
\iffull
Lemma~\ref{lem:noCheating}
\else
Lemma~$9.14$ (see \cite{DirectedNotIUArxiv})
\fi
is instrumental in proving the above facts, and is also an important tool which can be used in future simulation-based results in the aTAM, as it proves that an assembly performing a simulation of a system growing in a zig-zag manner, despite its arbitrarily large (but constant) scale factor, has asymptotically no greater space resources available than the orignal system.  The technical tools we have developed for this proof, as well as the incorporation of results from complexity theory allowing for precisely defined languages in terms of space complexity, provide a host of new construction and proof techniques which we feel will be useful for a variety of future results.

To prevent the simulator from being seeded with answers to the necessary computations, the assembly of $\calT$ must grow infinitely many iterations and subiterations.  To prevent other types of ``cheating'', rather than having potential locations of cooperation across a single gap between two tiles, the \bitalley/ becomes arbitrarily long, between an arbitrarily large set of pairs of tiles.  To prove all of the necessary properties of the simulator $\calS$ requires many more details and the use of several additional technical lemmas which may possibly be of independent interest and utility.  Please see
\iffull
Section~\ref{sec:proof-details} and Section~\ref{sec:tech-lemmas}
\else
\cite{DirectedNotIUArxiv}
\fi
for full details.

\iffull

\section{Formal description of the abstract Tile Assembly Model}
\label{sec-tam-formal}

In this section we provide a set of definitions and conventions that are used throughout this paper.

We work in the $2$-dimensional discrete space $\Z^2$. Define the set
$U_2 = \{(0,1), (1,0), (0,-1), (-1,0)\}$ to be the set of all
\emph{unit vectors} in $\mathbb{Z}^2$.
We also sometimes refer to these vectors by their
cardinal directions $N$, $E$, $S$, $W$, respectively.
All \emph{graphs} in this paper are undirected.
A \emph{grid graph} is a graph $G =
(V,E)$ in which $V \subseteq \Z^2$ and every edge
$\{\vec{a},\vec{b}\} \in E$ has the property that $\vec{a} - \vec{b} \in U_2$.

Intuitively, a tile type $t$ is a unit square that can be
translated, but not rotated, having a well-defined ``side
$\vec{u}$'' for each $\vec{u} \in U_2$. Each side $\vec{u}$ of $t$
has a ``glue'' with ``label'' $\textmd{label}_t(\vec{u})$--a string
over some fixed alphabet--and ``strength''
$\textmd{str}_t(\vec{u})$--a nonnegative integer--specified by its type
$t$. Two tiles $t$ and $t'$ that are placed at the points $\vec{a}$
and $\vec{a}+\vec{u}$ respectively, \emph{bind} with \emph{strength}
$\textmd{str}_t\left(\vec{u}\right)$ if and only if
$\left(\textmd{label}_t\left(\vec{u}\right),\textmd{str}_t\left(\vec{u}\right)\right)
=
\left(\textmd{label}_{t'}\left(-\vec{u}\right),\textmd{str}_{t'}\left(-\vec{u}\right)\right)$.

In the subsequent definitions, given two partial functions $f,g$, we write $f(x) = g(x)$ if~$f$ and~$g$ are both defined and equal on~$x$, or if~$f$ and~$g$ are both undefined on $x$.

Fix a finite set $T$ of tile types.
A $T$-\emph{assembly}, sometimes denoted simply as an \emph{assembly} when $T$ is clear from the context, is a partial
function $\pfunc{\alpha}{\Z^2}{T}$ defined on at least one input, with points $\vec{x}\in\Z^2$ at
which $\alpha(\vec{x})$ is undefined interpreted to be empty space,
so that $\dom \alpha$ is the set of points with tiles.  An assembly is {\it $\tau$-stable}
if it cannot be broken up into smaller assemblies without breaking bonds of total strength at least $\tau$, for some $\tau \in \mathbb{N}$.
We write $|\alpha|$ to denote $|\dom \alpha|$, and we say $\alpha$ is
\emph{finite} if $|\alpha|$ is finite. For assemblies $\alpha$
and $\alpha'$, we say that $\alpha$ is a \emph{subassembly} of
$\alpha'$, and write $\alpha \sqsubseteq \alpha'$, if $\dom \alpha
\subseteq \dom \alpha'$ and $\alpha(\vec{x}) = \alpha'(\vec{x})$ for
all $x \in \dom \alpha$.

We now give a brief formal definition of the aTAM. See \cite{Winf98,RotWin00,Roth01,jSSADST} for other developments of the model.  Our notation is that of \cite{jSSADST}, which also contains a more complete definition.

Self-assembly begins with a {\it seed assembly} $\sigma$ and
proceeds asynchronously and nondeterministically, with tiles
adsorbing one at a time to the existing assembly in any manner that
preserves $\tau$-stability at all times.  A {\it tile assembly system}
({\it TAS}) is an ordered triple $\mathcal{T} = (T, \sigma, \tau)$,
where $T$ is a finite set of tile types, $\sigma$ is a seed assembly
with finite domain, and $\tau \in \N$.  A {\it generalized tile
assembly system} ({\it GTAS})
is defined similarly, but without the finiteness requirements.  We
write $\prodasm{\mathcal{T}}$ for the set of all assemblies that can arise
(in finitely many steps or in the limit) from $\mathcal{T}$.  An
assembly $\alpha \in \prodasm{\mathcal{T}}$ is {\it terminal}, and we write $\alpha \in
\termasm{\mathcal{T}}$, if no tile can be $\tau$-stably added to it. It is clear that $\termasm{\mathcal{T}} \subseteq \prodasm{\mathcal{T}}$.

An assembly sequence in a TAS $\mathcal{T}$ is a (finite or infinite) sequence $\vec{\alpha} = (\alpha_0,\alpha_1,\ldots)$ of assemblies in which each $\alpha_{i+1}$ is obtained from $\alpha_i$ by the addition of a single tile. The \emph{result} $\res{\vec{\alpha}}$ of such an assembly sequence is its unique limiting assembly. (This is the last assembly in the sequence if the sequence is finite.) The set $\prodasm{T}$ is partially ordered by the relation $\longrightarrow$ defined by
\begin{eqnarray*}
\alpha \longrightarrow \alpha' & \textmd{iff} & \textmd{there is an assembly sequence } \vec{\alpha} = (\alpha_0,\alpha_1,\ldots) \\
                               &              & \textmd{such that } \alpha_0 = \alpha \textmd{ and } \alpha' = \res{\vec{\alpha}}. \\
\end{eqnarray*}
If $\vec{\alpha} = (\alpha_0,\alpha_1,\ldots)$ is an assembly sequence in $\mathcal{T}$ and $\vec{m} \in \mathbb{Z}^2$, then the $\vec{\alpha}$\emph{-index} of $\vec{m}$ is $i_{\vec{\alpha}}(\vec{m}) = $min$\{ i \in \mathbb{N} | \vec{m} \in \dom \alpha_i\}$.  That is, the $\vec{\alpha}$-index of $\vec{m}$ is the time at which any tile is first placed at location $\vec{m}$ by $\vec{\alpha}$.  For each location $\vec{m} \in \bigcup_{0 \leq i \leq l} \dom \alpha_i$, define the set of its input sides IN$^{\vec{\alpha}}(\vec{m}) = \{\vec{u} \in U_2 | \mbox{str}_{\alpha_{i_{\alpha}}(\vec{m})}(\vec{u}) > 0 \}$.

We say that $\mathcal{T}$ is \emph{directed} (a.k.a. \emph{deterministic}, \emph{confluent}, \emph{produces a unique assembly}) if the relation $\longrightarrow$ is directed, i.e., if for all $\alpha,\alpha' \in \prodasm{T}$, there exists $\alpha'' \in \prodasm{T}$ such that $\alpha \longrightarrow \alpha''$ and $\alpha' \longrightarrow \alpha''$. It is easy to show that $\mathcal{T}$ is directed if and only if there is a unique terminal assembly $\alpha \in \prodasm{T}$ such that $\sigma \longrightarrow \alpha$.

A set $X \subseteq \Z^2$ {\it weakly self-assembles} if there exists
a TAS ${\mathcal T} = (T, \sigma, \tau)$ and a set $B \subseteq T$
such that $\alpha^{-1}(B) = X$ holds for every terminal assembly
$\alpha \in \termasm{T}$.  Essentially, weak self-assembly can be thought of
as the creation (or ``painting'') of a pattern of tiles from $B$ (usually taken to be a
unique ``color'') on a possibly larger ``canvas'' of un-colored tiles.

A set $X$ \emph{strictly self-assembles} if there is a TAS $\mathcal{T}$ for
which every assembly $\alpha\in\termasm{T}$ satisfies $\dom \alpha =
X$. Essentially, strict self-assembly means that tiles are only placed
in positions defined by the shape.  Note that if $X$ strictly self-assembles, then $X$ weakly
self-assembles. (Let all tiles be in $B$.)

\section{Details of the Directed System $\calT$}\label{sec:the-system}

In this section, we provide details of the construction of $\calT$ as well as explaining its growth and verifying that $\calT \in \frak{D}$, i.e. that $\calT$ is directed.

\subsection{Languages and Turing machines used}

The decidable languages and the Turing machines which decide them and are simulated within the \leftcomp/ and \topcomp/ modules of $\calT$ are defined as follows.

\begin{enumerate}
    \item[] Let $a = 2^{2t}$ where $t = |U|$ is the size of the simulator's tile set
    \item[] Let $L_A \subset \mathbb{N}$ be a decidable language such that $L_A$ can be decided in almost everywhere worst-case space complexity $2^{n/2}$ and $L_A$ cannot be recognized in infinitely often best-case space complexity $(1/2)2^{n/2}$, i.e. for all but finitely many $n$, $L_A$ requires space greater than $(1/2)2^{n/2}$. Let $A$ be a deterministic Turing machine which decides $L_A$ within space $2^{n/2}$ almost everywhere, i.e. for all but finitely many $n$.  Note that such an $L_A$ is guaranteed to exist by Theorem 4.1 of \cite{VeryHardLanguages}.
    \item[] Let $L_{A_H} \subset \mathbb{N}$ be a decidable language such that $L_{A_H}$ can be decided in almost everywhere worst-case space complexity $2^n$ and $L_A$ cannot be recognized in infinitely often best-case space complexity $(1/2)2^n$, and $A_H$ be a deterministic Turing machine which decides $L_{A_H}$ within space $2^n$ almost everywhere.  (The existence of this language is also guaranteed by Theorem 4.1 of \cite{VeryHardLanguages}.)  Note that this means it also uses no more than $2^{2^n}$ time steps almost everywhere.
    \item[] Let $A^+$ be a Turing machine which, on input $x$, does the following. For each $2^x \le y < 2^{x+1}-1$, $A^+$ simulates $A(y)$ and records a $0$ if $A$ rejects, and a $1$ if it accepts. Then, it simulates $A_H(2^{x+1})$, recording a $0$ or $1$, accordingly.  Furthermore, while $A^+$ is performing any simulation of $A(z)$ or $A_H(z)$, it bounds the space used by the computation and if the machine attempts to use $2^{2z}+1$ unique tape cells, it halts that simulation and records a $0$ for it.  Furthermore, it also bounds the time used and if it attempts to use more than $2^{2^{z}}$ time steps, it halts that simulation and records a $0$ for it.  Once the full series of simulations of $A$ and then the one of $A_H$ have completed, $A^+$ halts with a binary sequence of length $2^x$ on its tape, representing the outputs of each of the $2^x$ computations. Furthermore, $A^+$ uses a one-way-infinite-to-the-left tape.
        \item[] Note that given input $x$, the maximum amount of space used by $A^+$ will be used during its computation of $A_H(2^{x+1})$, which is bounded by $A^+$ at $2^{2x+1}$.
        \item[] Furthermore, by the fact that $L_A$ can be decided in almost everywhere worst-case space complexity $2^{n/2}$, and that for any constant $c$ there exists some $i'$ such that for all $i > i'$, $2i > ci'$, then for all computations $A(i)$ beyond those for a constant number of values, $A(i)$ is guaranteed to halt without using space greater than $2^{i}$ or more than $2^{2^i}$ time steps, and thus all computations of $A$ on inputs greater than that will successfully complete without being halted by $A^+$.  An analogous argument can be made for $L_{A_H}$ and $A_H$.
    \item[] Let $L_B \subset \mathbb{N}$ be a decidable language such that $L_B$ can be decided in almost everywhere worst-case space complexity $3^{n/2}$ and $L_B$ cannot be recognized in infinitely often best-case space complexity $(1/2)3^{n/2}$, i.e. for all but finitely many $n$, $L_B$ requires space greater than $(1/2)3^{n/2}$. Let $B$ be a deterministic Turing machine which decides $L_B$ within space $3^{n/2}$ almost everywhere, i.e. for all but finitely many $n$.  Note that such an $L_B$ is guaranteed to exist by Theorem 4.1 of \cite{VeryHardLanguages}.
    \item[] Let $L_{B_H} \subset \mathbb{N}$ be a decidable language such that $L_{B_H}$ can be decided in almost everywhere worst-case space complexity $3^n$ and $L_B$ cannot be recognized in infinitely often best-case space complexity $(1/2)3^n$, and $B_H$ be a deterministic Turing machine which decides $L_{B_H}$ within space $3^n$ almost everywhere.  (The existence of this language is also guaranteed by Theorem 4.1 of \cite{VeryHardLanguages}.)
    \item[] Let $R$ be a Turing machine which uses a one-way-infinite-to-the-right tape and takes as input two binary strings, $i$ and $j$, where $i = |j|$, and performs the following computations:
          \begin{enumerate}
            \item[] $R$ first runs $A^+(\log(i))$ and compares the resulting $i$-bit string to $j$.
            \item[] If those bit strings match on at least one bit, $R$ halts.
            \item[] Otherwise (i.e. when they differ on every bit), for $i \leq k < i + a - 1$, $R$ simulates $B(k)$ and records a $0$ if $B$ rejects $k$, and a $1$ if it accepts. Then it simulates $B_H(i+a-1)$, recording a 0 or 1 accordingly.  Furthermore, while $R$ is performing any simulation of $B(k)$ or $B_H(k)$, it bounds the space used by the computation and if $B(k)$ attempts to use $3^{k^2}+1$ unique tape cells (i.e. space equal to $3^{k^2}+1$), it halts that simulation and records a $0$ for it.
            \item[] Once the full series of simulations of $B$ have completed, $R$ halts with a binary sequence of length $a$ on its tape, representing the outputs of each of the $a$ computations.
            \item[] Note that given inputs $i$ and $j$, the maximum amount of tape cells used by $R$ will be used during the computation of $B_H(i+a-1)$, which is bounded by $R$ at $3^{(i+a-1)^2}$.
            \item[] Furthermore, by the facts that $L_{B_H} \in \DSPACE(3^n)$ and $L_B \in \DSPACE(3^{\frac{n}{2}})$,  and that for any constant $c$ there exists some $i'$ such that for all $i > i'$, $i^2 > ci$, then for all computations $B(i)$ beyond those for a constant number of values, $B_H(i)$ is guaranteed to halt without using space greater than $3^{|i|^2}$, and thus all computations of $B$ or $B_H$ on inputs greater than that will successfully complete without being halted by $R$.
          \end{enumerate}
\end{enumerate}

Let $\mathcal{T} = (T,\sigma,2)$ be the directed aTAM system which self-assembles the infinite shape sketched in Figure~\ref{fig:counter-and-wedge-TMs-overview}. We will discuss the assembly it produces in a modular way. The seed $\sigma$ consists of a single tile placed at the origin. From the east side of the seed, the ``\planter/'' module forms.

\subsection{\planter/}

The \planter/ module does the following (and is conceptually somewhat similar to the ``\planter/'' module discussed in Section 4.5 of \cite{jCCSA}).  It is in general a log-height binary counter (i.e. a binary counter which represents consecutive numbers by bit strings in consecutive columns, and each column having height equivalent to the number of bits in the number being represented by the counter) which enumerates the positive integers, counting from $1$ to $\infty$.  For each positive integer $i \in \mathbb{Z}^+$, it initiates the growth of an \emph{iteration}, which consists of $2^i$ \emph{subiterations}, each of which initiates growth of a set of modules which grow from the north side of the \planter/.  The function of the \planter/ is to correctly space out those modules by putting them at well-defined locations, as well as to provide input to each via north-facing glues of rows of tiles which initiate the growth of each module.

To do this, the \planter/ actually contains a series of embedded counters, with the counter for the iterations, $i$, being the ``outer'' counter.  For each value of $i$, before the \planter/ increments $i$ again, it begins a nested counter $j$ which it iterates over the values $0 \le j < 2^{2^{\log(i)}} = 2^i$\footnote{Here and throughout the paper, we will use the shorthand $\log(x)$ to mean $\lceil \log(x) \rceil$ and also use the shorthand $x = 2^{\log(x)}$ despite the fact that $x \le 2^{\lceil \log(x) \rceil}$, as it will not impact the correctness of our arguments.}.  (During the columns where the value of $i$ is not incremented, its bit values are simply passed forward, i.e. to the right, unchanged.)  For each value of $j$ it will initiate growth of a subiteration by incorporating additional counters used to guarantee correct spacing of modules.  Therefore, there are also counters nested within each subiteration $j$.  The first of these is a counter which counts from $0$ to $2^{2i}$, by first computing $2^{2i}$ (simply by starting from the binary string $1$ and while counting from $0$ to $i$ adds another bit position for each count, e.g. $10$, $100$, etc.), and then starting a counter at $0$ which increments each column while passing the value $2^{2i}$ through to the right and checking each counted value until it matches $2^{2i}$ at which point it halts.  The horizontal distance grown by that counter is used to create enough space for the next module whose growth will be initiated on the north side of the \planter/ at this point.

While continuing to grow to the east, the \planter/ now computes the value $\log(i)$ (by simply counting the length of the binary representation of $i$) and rotates a copy of the value $\log(i)$, whose length is $\log(\log(i))$, northward so that after growing another $\log(\log(i))$ columns, the binary value of $\log(i)$ is represented by the north-facing glues of $\log(\log(i))$ tiles.  This binary representation of $\log(i)$ will serve as the initiation point for the growth of a \leftcomp/ module, to be discussed later.  In addition the \planter/ exposes a $\tau$ strength north-facing glue two tiles to the west of the beginning of the north-facing glues which represent the binary value of $\log(i)$ and also exposes a $\tau$ north-facing glue two tiles to the east of the end of the glues which represent the binary value of $\log(i)$.  These two glues allow for the growth of two tiles which attach via north and south glues which we call a \bumper/. Also, in locations not specifically mentioned, the northern glues of the northernmost tiles of the \planter/ are $\null$.  The \planter/ now grows an additional $2^{i} + \log(i) +11$ columns to the right, at which point it rotates a copy of the binary value of $j$ to the north.  This binary representation of $j$ will serve as the initiation location for a \rightcomp/ module, also to be discussed below.  Next, the \planter/ grows another $7$ spacing columns to the right, and then rotates copies of the binary representations of both $i$ and $j$ to the north, so that the value $i$,$j$ can serve as the initiation for a \topcomp/ module.  Similar to the initiation point for the \leftcomp/ module, we place $\tau$ strength glues around the initiation points of the \rightcomp/ and \topcomp/ initiation point which allow for the growth of \bumper/s.

The last bit of growth for the \planter/ in a subiteration is to grow the spacing rows which are necessary for the \topcomp/ module (since for its growth above the \planter/ it will grow both upward and to the right).  For this, in a manner analogous to the way it grew the $2^{2i}$ spacing rows for the \leftcomp/ module, it now grows $3^{2i}$ spacing rows.  At this point, the \planter/'s growth in relation to the $j$th subiteration of the $i$th iteration is complete.  It now increments the value of $j$ (assuming it is $< 2^i$, or specifically $< 2^{2^{\log(i)}}$), and grows the necessary columns for the next subiteration, or, if $j$ now equals $2^i$, it resets $j$ to $0$ and increments the value of $i$ to begin a new iteration.  This occurs for infinitely many iterations.

The important features of the \planter/ are that (1) it initiates the appropriate growth to allow the modules of each subiteration to grow independently of the \texttt{planter} once it has written the necessary input values for those modules, (2) that in between the inputs for the modules it creates a necessary amount of spacing columns to ensure that no modules will collide with each other due to the space complexities of the languages whose accepting Turing machines are being simulated by each (to be explained below), (3) all growth is performed in an up and down zig-zag manner, with each column completely growing before the column to its right begins growth either at the top or bottom of the column (depending on whether the column is zigging or zagging), and (4) the number of tiles used in each column, i.e. the height of the \planter/, is never more than the minimum needed for each position to represent a single bit of each counter value being propagated through.  Since the largest value of any such counter is $3^{2i}$, that means that the maximum height of the counter during any iteration $i$ is $\log(3^{2i}) = O(2i)$.  See Figure~\ref{fig:counter-and-wedge-TMs-iteration} for a sketch of the formation of an iteration.

\subsection{\leftcomp/}
For all values $i$, the \leftcomp/ module receives the input $\log(i)$, whose width is $\log(\log(i))$.  The \leftcomp/ module performs a simulation of $A^+(\log(i))$ by growing a zig-zag Turing machine which begins with a tape whose width is the width of its input plus one, with the extra cell representing a specially marked \texttt{blank} tape cell denoting the current end of the tape.  When, and only when, $A^+$ tries to access the leftmost tape cell, that cell is interpreted as a regular \texttt{blank} and the end marker is moved one position to the left to a tile which grows that column one position to the left.  In this way, the simulation of $A^+$ uses rows whose width are the same as the number of tape cells used by $A^+ + 1$.  Figure~\ref{fig:counter-and-wedge-TMs-iteration} shows how the system $\calT$ simulates the computations that compose $A^+$. The system simulates the computation $A^+$ by passing each computation $A(j)$ which composes $A^+$ three pieces of information via glues exposed on the north of the previous computation $A(j-1)$: 1) the input to the machine $A(j)$, 2) the outputs of the previous computations $A(k)$ for all $k<j$, and 3) the value $2^{\log(i)+1}$. In this way, the machine is able to simulate the computations which compose $A^+$, and halt on the last machine $A(i)$, and then write the outputs of all the machines along the north border of the machine $A(i)$ as shown in Figure~\ref{fig:counter-and-wedge-TMs-iteration}.  In addition, we embed a counter in the \leftcomp/ module which counts the number of total time steps used by all machines.  Since each computation in $A^*$ can use at most $2^{2^{i}}$ time steps, the total time steps used by all machines is at most $i*2^{2^{i}}$.  Consequently, to store the number of time steps used, space $\log(i) + 2^{i}$ is needed to store the values of the counter.  Note that we can embed this in the same cells which are used in the computation of $A^+$.

Once $A^+(\log(i))$ halts, with a bit string of length $i$ (i.e. $2^{\log(i)}$) and the total number of time steps $ts$ output to its north, those output bits are rotated up and to the right, three columns to the right beyond the right side of the computations below, and then a $\log(i) + 2^i$-width counter counts down starting from the value $ts$ until it is a distance $8\times 2^{\log(i)} + 4$ above the planter at which point the output bits of $A^+(\log(i))$ are rotated to the right, one at a time and with $7$ tiles vertically between each.  Figure~\ref{fig:gap-overview} shows an example of the first two bits output by the \leftcomp/ module.  Notice that between each bit that is output there are seven tiles and a tile which is a distance of three tiles away from each output bit grows two tiles via $E-W$ glue attachment which we call a \bumper/.

Recall that the computation $A_H$ in the series of computations performed by $A^+$ requires the most space and it was chosen such that for almost all $x$ it has worst case space complexity $2^{|x|}$.  Consequently, this means that for almost all iterations $i$ the \leftcomp/ module will need at most space $2^{2i}$. Note that the \planter/ was designed to space the modules so that in iteration $i$ the \leftcomp/ module has $2^{2i}$ space available for use.  Consequently, for almost all $i$, all of the computations performed in the \leftcomp/ module will complete without prematurely halting.

\subsection{\rightcomp/}

The \rightcomp/ module receives the input $j$ and simply rotates those bits upward and to the left, and also spaces them out with $7$ tiles in between each bit, along with $7$ before the first bit, so that they are output on the west side of the \rightcomp/ module across $7i$ rows (since $j$ consists of $i$ bits). An example of the first two bits output by the \rightcomp/ module can be seen in Figure~\ref{fig:gap-overview}.

\subsection{\topcomp/}

As soon as the \planter/ completes the formation of the portion of its northern row encoding the input for the \topcomp/ module, namely the encoding of $i$ and $j$ near the eastern side of a subiteration, the \topcomp/ module is able to begin simulating the Turing machine $R$ which is defined above.  Recall that $R$ first runs $A^+(\log(i))$.  As shown in Figure~\ref{fig:counter-and-wedge-TMs-iteration} the \topcomp/ module first simulates the $A^+$ computation on input $\log(i)$ in the same manner as the \leftcomp/ module with the exception that it simulates a TM which computes the $A^+$ computation using a one-way-infinite-to-the-right tape (rather than to the left).  As the \topcomp/ module simulates $A^+$ on input $i$, it also propagates the value of $j$ via glues.

Next, the Turing machine $R$ compares the output of $A^+(i)$ to the value $j$, so we design the \topcomp/ module so that it mimics this behavior.  Once \topcomp/ finishes it's simulation of $A^+$ on input $i$, it then compares the output of this computation to the value $j$.  If they match on any bits, the growth of \topcomp/ terminates.  However, if they differ in every single position, the \topcomp/ module simulates the series of computations $B(k)$ for $i \leq k < i+a-1$ and $B_H(i+a-1)$ in a manner similar to it's simulation of the $A^+$ computation.  After the \topcomp/ completes the simulation of the last machine $B_H$, it outputs the $a$-bit string which is the output of the series of computations to its north. This initiates the growth of the \arm/ module as shown in Figure~\ref{fig:counter-and-wedge-TMs-iteration}.

Recall that the computation $B_H$ in the series of computations requires the most space and it was chosen such that for almost all $x$, it has worst case space complexity $3^{|x|}$.  Consequently, this means that for almost all iterations $i$ the \topcomp/ module will need at most space $3^{2i}$. Note that the \planter/ was designed to space the modules so that in iteration $i$ the \topcomp/ module has $3^{2i}$ space available for use.  Consequently, for almost all $i$, all of the computations performed in the \topcomp/ module will complete.

\subsection{\arm/}

If \topcomp/ initiates the growth of the \arm/ module, there are a possible $2^a$ different binary values written as output by the \topcomp/ computations, and the function of the first row of the \arm/ is to grow to the west across those output bits so that when that completes it has selected one of $2^a$ possible types of \arm/s to grow.  The arm module is essentially first a horizontal counter which grows leftward $7+j+4$ (to pass over the spacing columns between the \rightcomp/ and \topcomp/ modules, the \rightcomp/ module, and end up directly over the center position of the \emph{bit alley} between the \leftcomp/ and \rightcomp/ modules.  Once the \arm/ reaches the end of its leftward growth, it then initiates a downward growing column from its leftmost column.  This downward growing column consists of a single repeating tile type so that the column eventually crashes into the \planter/ between the \leftcomp/ and \rightcomp/ modules of the subiteration.  The tile type of the column is determined by the value written by the \topcomp/ module, and can be any of $2^a$ types.

\subsection{\bitalley/}

Figure~\ref{fig:gap-overview} shows a portion of possible assembly growth between the \leftcomp/ and \rightcomp/ modules of a subiteration, say the $j$th of iteration $i$.  As noted in the above section, the counter which grows from the top of \leftcomp/ causes the growth of tiles as shown in Figure~\ref{fig:gap-overview} which depends on the output of $A^+(i)$. If the bit is a $0$, the rightmost of those tiles is of type $0_L$, else it's of type $1_L$.

The height of \rightcomp/ is $8i+4$, and spaced out similarly to the bits on the east side of the \leftcomp/ module, it has the bits of $j$ presented on its west side.  Also similarly to the opposite side, from each bit a pair of tiles attach with the westernmost being of type $0_R$ or $1_R$, depending on each bit value of $j$.

Since the bit strings exposed by \leftcomp/ and \rightcomp/ are aligned with each other, at each position where a bit on the left matches one on the right, a tile cooperatively binds to the two tiles, either $0_L$ and $0_R$ or $1_L$ and $1_R$.  Without loss of generality, assume such a matching bit position has value $0$.  Then, between the $0_L$ and $0_R$ which are at the same height and separated horizontally by a single space, a tile of type $0_M$ binds ``across the gap.''  Finally, a tile of type $0_B$ attaches to the south of the $0_M$ tile and growth related to this bit is complete.  Such attachments of $0_M$, $0_B$, $1_M$, and $1_B$ tiles occur at each position where bits match across \leftcomp/ and \rightcomp/, and at each position where they differ, the final tile attachments are the $0_L$, $0_R$, $1_L$, and $1_R$ tiles.

Finally, recall that exactly in subiterations where the bits in every position of \leftcomp/ and \rightcomp/ differ, an \arm/ grows, which results in a single-tile-wide column of tiles growing southward from the \arm/, between the $0_L$, $0_R$, $1_L$, and $1_R$ tiles of the \leftcomp/ and \rightcomp/ modules, and crashing into the \planter/.  Since this occurs if and only if every bit position differs, there is no possibility of a cooperatively placed $0_M$ or $1_M$ tile blocking the growth of this column.

\subsection{Summary of computations}

\begin{table}[h]
\centering
\setlength{\tabcolsep}{.5em}
\begin{tabular}{|c|c|c|c|c|}\hline\rowcolor{black!20!white}
Module & Input & Computation & Space complexity & Also able to compute \\\hline
\planter/ & none & count, $\log(i)$ & $O(i)$ & none - insufficient space\\\hline
\topcomp/ & $j$,$i$ & $A^+(\log(i))$,$B(\log(i))$ & $O(3^i)$ & $A^+(\log(i))$, $R(i)$\\\hline
\leftcomp/ & $\log(i)$ & $A^+(\log(i))$ & $O(2^i)$ & none - insufficient input \\\hline
\rightcomp/ & $j$ & $none$ & $O(i)$ & none - insufficient space\\\hline
\end{tabular}
\caption{The computations performed by each module in subiteration i,j. \label{tbl:comps}}
\end{table}

\subsection{The system $\calT$ is directed}
Since the interesting components that compose our system are based off zig-zag systems which are clearly directed, the only potential sources of nondeterminism are 1) the modules which perform computations using too much computational resources and crashing into each other, and 2) the arm growing from the \topcomp/ module causing a race condition to be created between the arm and a tile that is placed cooperatively in the bit-alley.  The first situation is prevented from arising by the counters embedded in the \leftcomp/ and \topcomp/ modules and the appropriate spacing provided by the \planter/.  The second scenario cannot arise due to the fact that the \topcomp/ grows an \arm/ if and only if the output of $A^+(\log(i))$ disagrees on all bits with $j$ which means that no tile can be cooperatively placed in the \bitalley/.  Thus, $\mathcal{T}$ is a directed aTAM system.

\section{Details of Impossibility of Simulation}\label{sec:proof-details}

The proof that $\calS$ does not simulate $\calT$ consists of two main portions, each geared toward showing that it is impossible for modules within a subiteration to receive and utilize any information about the output of the complex computations occurring in the \leftcomp/ or \topcomp/ modules prior to those computations occurring, or outside of the modules performing those computations.  Such information could potentially have allowed $\calS$ to remain directed while accurately simulating $\calT$, but instead the lack of such prior information and the chance to effectively utilize it leads to a contradiction that $\calS$, and thus $U$, exists.

We first show that the probes that the \leftcomp/ and \rightcomp/ modules grow on the sides of the \bitalley/ of each subiteration must be grown, in at least an infinite number of iterations, so that there is nothing unique about those in the empty subiteration.  Intuitively, this means that at least infinitely often the probes grown in subiterations must be ignorant of whether they will need to cooperate across the \bitalley/ and therefore ``attempt'' to grow to positions that leave no more than a single tile wide gap in the \bitalley/ to allow for correct simulation in situations where cooperation will be required across the \bitalley/.

The second main point shows that, given the infinite set of iterations just proven to exist, it must be the case that the bottlenecks created across the \bitalley/ are either not sufficient to allow the necessary \arm/ modules to grow or conversely for the necessary cooperation to occur across the \bitalley/.  Part of this relies on showing that the only way the necessary variety of possible \arm/ modules could grow is for the probes which partially block the \bitalley/ to encode information about at least which subset of possible \arm/ modules the actual \arm/ module to be grown will be selected from, in advance of the \topcomp/ module completing its computations which determine the \arm/ type.  This would allow the probe tiles to assist in the formation of the \arm/ modules, but is shown to be impossible.

The proofs of each of these portions rely upon the fact that if either of those types of information were provided in advance to the growing modules, then it would be possible to construct Turing machines which simulate the assembly of $\calS$, inspect the subassemblies thus created, and utilize that information to solve instances of computations which are known to require more space resources than such Turing machines would be using, providing the necessary contradictions.  Note the tight reliance upon the computational complexities of the corresponding decidable languages and the ability to use the tools we have developed to quantify and bound the computational resources available to the subassemblies performing the computations.  Many of these tools can be found as technical lemmas, with associated proofs, in Section~\ref{sec:tech-lemmas}.

\subsection{Empty subiterations cannot be uniquely marked in advance}

In this section, we show that it is impossible for the \leftcomp/ and \rightcomp/ modules of empty subiterations to ``cheat'' by not growing valid complementary pairs of probes, and we show how that prevents $\calS$ from successfully simulating $\calT$.  We first define some useful terms and show some properties which must be true of the assembly produced by $\calS$.

\begin{definition}
Given a subassembly $\alpha \sqsubset \alpha_U$ which represents the single tile wide vertical portion of an \arm/ in $\calT$, let $\alpha_r \sqsubset \alpha$ be the largest subassembly of $\alpha$ such that, below some initial subassembly $\alpha_0 \sqsubset \alpha$ which occurs at the top of $\alpha$, $\alpha$ consists only of repeated, nonoverlapping copies of $\alpha_r$, one immediately below the other, and the topmost immediately below $\alpha_0$.  Note that the bottommost copy of $\alpha_r$ may be truncated, and if there is no such repeating portion of $\alpha$ then $\dom(\alpha_r) = \emptyset$.
We say that the \emph{shape} of $\alpha$ is $\dom(\alpha_0) \cup \dom(\alpha_r')$ where $\alpha_r' \sqsubset \alpha$ is the topmost copy of $\alpha_r$.
\end{definition}

\begin{claim}\label{clm:bounded-arms}
Let $S$ be the set of all unique shapes of \arm/s in $\alpha_U$.  Then $n_{arms} = |S|$ depends only upon the number of tile types in $U$, $t = |U|$, and the scale factor of the simulation, $m$.
\end{claim}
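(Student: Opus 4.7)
The plan is to show that every \arm/ shape $\dom(\alpha_0) \cup \dom(\alpha_r')$ lies inside a fixed rectangle whose dimensions are functions of $t$ and $m$ only. This immediately bounds $|S|$ by the number of subsets of such a rectangle, which is itself a function of $t$ and $m$.

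First, I would bound the horizontal extent of $\alpha$. Since the vertical portion of an \arm/ in $\calT$ is a single tile wide column, its representation in $\calS$ at scale $m$ must occupy a single column of $m$-block macrotiles. Including any permissible fuzz (which, by the clean-mapping condition of simulation, may extend at most one macrotile on either side of a valid supertile), $\alpha$ lies in a vertical strip of width at most $3m$.

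Second, I would leverage the assumption that $\calS$ is directed. After the topmost macrotile of the vertical column is placed (by the horizontal portion of the arm transitioning to downward growth), each subsequent macrotile row is uniquely determined by the glue configuration exposed to its north by the row immediately above; lateral neighbors of the \arm/ in $\calT$ are empty positions in the \bitalley/, so positions adjacent to the column in $\calS$ may host only fuzz, which cannot supply new cooperative glues. The number of possible glue configurations across the $O(m)$-wide top boundary of a macrotile row is at most $N(t,m) := (g+1)^{O(m)}$, where $g$ is the number of distinct glue labels appearing in $U$ and is itself bounded by a function of $t$. By the pigeonhole principle, among the first $N(t,m)+1$ macrotile rows of the column some pair must share the same configuration; from that point downward the column is forced to repeat with a fundamental period of height at most $m\cdot N(t,m)$. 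Correspondingly, the transient $\alpha_0$ occupies at most $m\cdot N(t,m)$ tile rows and the fundamental period occupies at most $m\cdot N(t,m)$ tile rows.

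Combining these bounds, $\dom(\alpha_0) \cup \dom(\alpha_r')$ is contained in a rectangle of width $3m$ and height $2m\cdot N(t,m)$, so the number of distinct shapes (as subsets of such a rectangle, up to translation) is at most $2^{O(m^2\cdot N(t,m))}$, a quantity depending only on $t$ and $m$. The main obstacle I anticipate is the precise interpretation of ``largest'' in the definition of $\alpha_r$: for the shape to be bounded, $\alpha_r$ should be taken to be (a translate of) the fundamental period produced by the pigeonhole argument rather than an arbitrary integer multiple of it, and a brief argument is needed to reconcile this with the stated definition (for instance, observing that once a glue configuration recurs the entire remainder of the column is forced to repeat with that fundamental period, and hence any valid $\alpha_r$ consists of an integer number of fundamental-period copies whose shape is determined by that period). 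A secondary subtlety is confirming that no lateral cooperative interaction contributes to the column's growth in $\calS$, which follows from the empty-\bitalley/ geometry of $\calT$ in empty subiterations combined with the clean-mapping constraints on valid simulation.
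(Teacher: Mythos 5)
Your high-level strategy mirrors the paper's: bound the horizontal extent of the simulated \arm/ to a $3m$-wide strip (macrotile column plus one macrotile of fuzz on each side), then use a pigeonhole argument to show the vertical growth becomes periodic, and finally count the bounded prefix and period. The difference is in what you pigeonhole over and what licenses the conclusion of periodicity. The paper pigeonholes over \emph{window movies} --- data that record not just which glues eventually appear along a horizontal cut, but also the locations and \emph{orderings} of their arrival --- and then invokes the Closed Rectangular Window Movie Lemma of \cite{jDuples} to conclude that the subassembly between two cuts with identical window movies can be repeated. Accordingly, the paper's bound carries a factorial factor $3m!(4t)^{3m}$ accounting for orderings. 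You instead pigeonhole over static ``glue configurations'' (no orderings) and appeal to directedness plus an assertion that ``each subsequent macrotile row is uniquely determined by the glue configuration exposed to its north by the row immediately above.''

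That assertion is where there is a genuine gap. In the aTAM, the terminal subassembly below a cut is not in general determined by the set of glues that eventually cross the cut: tiles can attach below the cut, cooperate upward with tiles above the cut, which in turn expose new glues downward, and so on, so that the order in which glues arrive at the cut matters. The window-movie machinery exists precisely to capture this interleaving. Directedness guarantees a unique terminal assembly globally, but it does not by itself give you the local claim that two cuts with the same unordered glue set have translation-identical subassemblies below them; you would need to argue separately that for a directed system the window can replace the window movie, or simply invoke the window movie lemma as the paper does and absorb the $3m!$ factor into your constant. A secondary point: your parenthetical that fuzz ``cannot supply new cooperative glues'' is not accurate --- fuzz tiles carry glues and can participate in cooperation; the relevant fact is that fuzz is confined to within one macrotile of a non-empty macrotile, so everything lives inside the $3m$-wide strip and the analysis at the strip's horizontal cuts captures all interactions. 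Your closing observation about reconciling the pigeonhole period with the ``largest $\alpha_r$'' in the definition is a reasonable thing to flag and is handled implicitly in the paper by taking the repeat given by the window movie lemma as the canonical $\alpha_r$.
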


\begin{proof}
To prove Claim~\ref{clm:bounded-arms}, we utilize Lemma 5.10, the Closed rectangular window movie lemma (CRWML), of \cite{jDuples}.  Note that each \arm/ is a single tile wide in $\calT$, and that in any empty subiteration it is possible for the \topcomp/ and \arm/ to grow before either the \leftcomp/ or \rightcomp/ have even begun growth.  In $\calS$ it must be possible to simulate such an assembly sequence, and that means that during the growth of the macrotiles in $\calS$ representing the \arm/ tiles of $\calT$ in such an assembly, there must be no tiles beyond a single macrotile to either their left or right sides (as those locations must map to empty space and therefore tiles in those regions are considered fuzz, which is not allowed to extend further than a single macrotile away from some macrotile which maps to a nonempty location). %
In such an assembly, the maximum width of any horizontal cut across such an \arm/ subassembly in $\calS$ is $\le 3m$, or the width of $3$ macrotiles.
Let $h = 2(3m!(4t)^{3m})$ be a constant to be explained below. To characterize all possible \arm/ shapes, we iterate over every possible configuration of tiles from $U$ which form any subset of a $3m$-tile-wide line, and for each simulate all assembly sequences which place tiles only to the south of that line, until they reach a distance $2h$ or they produce an assembly to which no additional tiles can attach below that configuration but which doesn't reach a distance of $2h$.  (We don't need to consider any which grow above the line because if that growth eventually influences the assembly below, it must do so via a path containing a tile which crosses that line, and the configuration consisting of the current configuration plus that tile will also be simulated.)  We save the shape $s$ of the \arm/ created by that configuration in $S$ provided the following hold:  (1) all assemblies produced from all possible assembly sequences starting from that configuration grew only within the $3$ macrotile wide space directly below it, and all place the same tiles at all locations, and (2) $s$ is not already in $S$.

A window movie (as defined in \cite{jDuples,IUNeedsCoop}), is a set containing the locations, types, and orderings of arrival of glues along a cut across an assembly.  If we consider windows (i.e. boxes which separate a grid graph into interior and exterior portions) whose top edges cut directly across \arm/ subassemblies and whose other edges do not pass through any portion of an assembly, we note that the number of window movies is $\leq 3m!(4t)^{3m}$ (hence our previous choice of $h$).  If we inspect all \arm/s whose shapes were saved into $S$, we first note that the number which belong to \arm/s which did not reach distance $2h$ cannot be larger than the number of ways to tile the $3m \times 2h$ region using only $t$ tile types, which is a constant dependent only upon $t$ and $m$.  We then inspect each whose \arm/ did grow to a distance of $2h$, and we note that by the pigeonhole principle, any subassembly representing an \arm/ which longer than $h$ must have two windows $w \neq w'$ such that their window movies are the same (i.e. the same glues arrive along those cuts in the same locations and orders).  If we let $\alpha$ and $\beta$ of the CRWML both be such an \arm/ assembly, and the windows be $w$ and $w'$, the CRWML tells us that the subassembly of the \arm/ between the first and second identical window movie locations could grow again after the second, and this can be applied an arbitrary number of times to show that that subassembly can be repeated indefinitely (unless blocked by some other subassembly).  (This repeating subassembly corresponds to the $\alpha_r$ of the definition of the shape of an \arm/.)  Because the number of window movies possible is determined only by $t$ and $m$, so is the number of repeating subassemblies corresponding to $\alpha_r$ in the definition of the shape of an \arm/.  Since the prefix $\alpha_0$ is also bounded by height $h$, there are a fixed number of possible prefixes, again only dependent upon $t$ and $m$, so the total number of unique shapes is also bounded by a constant dependent only upon $t$ and $m$.
\end{proof}

\begin{definition}
The set of probes which grow on the left side of the \rightcomp/ module of each subiteration can be compared with every \arm/ shape to generate a subset of \arm/ shapes which would not be geometrically blocked by those probes.  That is, assuming that the macrotiles representing all probes have fully completed growth, those \arm/s which (starting from a horizontal offset relative to those probes which would be the same as if they had grown from a \topcomp/ module) can grow downward past the probes without any collisions occurring, meaning that they would never be able to place a tile, in the absence of the probes, which disagrees with a tile placed by a probe.  We call these subsets of \arm/s the \emph{probe-dodgers} of that probe set.
\end{definition}

\begin{definition}
Given an assembly $\alpha \in \prodasm{S}$ containing the maximum number of tiles which can be placed in subassembly $i_j$ (i.e. the $j$th subiteration of the $i$th iteration) without growing above the southernmost row of macrotiles of \leftcomp/ or any tiles of \topcomp/, the \emph{signature} of a subiteration is the combination of the full specification of the macrotiles representing the fully grown \bumper/s to the left and right of the \leftcomp/ module, plus the full specification of the row of macrotiles below and between them, plus the full definition of the probe-dodgers set of that subiteration.
\end{definition}

A signature can be determined for a subiteration by growing the \planter/ module to the right beyond the \rightcomp/ module, the full \rightcomp/ module, and as many tiles as possible between the bumpers surrounding the \leftcomp/ module without growing above the first row of macrotiles in the \leftcomp/ module.  This can be always be done because $\calS$ must be able to follow the dynamics of $\calT$, in which there are assembly sequences which do exactly this.

\begin{lemma}\label{lem:noSneak}
For each iteration $i$ of an infinite set of iterations, the empty subiteration $i_j$ of that iteration must have a signature which is identical to that of another subiteration $i_k$, for $j \neq k$, of that iteration.
\end{lemma}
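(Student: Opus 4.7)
My plan is to prove Lemma~\ref{lem:noSneak} by contradiction. Suppose that for all but finitely many iterations $i$, the signature of the empty subiteration $i_{j_e(i)}$ is unique among the signatures of all $2^i$ subiterations of iteration $i$, where $j_e(i)$ denotes the bit-complement of $A^+(\log(i))$. The first key observation is that within a fixed iteration $i$, the bumper macrotiles and the single row of macrotiles between them are entirely determined by the \leftcomp/ input $\log(i)$ placed by the \planter/, which is identical in every subiteration of iteration $i$. Hence a subiteration's signature varies with $j$ only through its probe-dodgers set, and the uniqueness hypothesis becomes: for all but finitely many $i$, the probe-dodgers set of $i_{j_e(i)}$ differs from that of every other subiteration of $i$.

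I would then construct a Turing machine $M$ that exploits this hypothetical uniqueness to decide, on input $n = 2^{\log(i)+1}$, the bit $A_H(n)$, which equals the last bit of $A^+(\log(i))$. After extracting $i$ from $n$, $M$ enumerates $j \in \{0,\dots,2^i-1\}$ with an $O(i)$-bit counter, and for each $j$ simulates $\calS$ restricted to the signature region of subiteration $i_j$ to compute the probe-dodgers set of $i_j$. It uses a nested $O(i)$-bit counter over $j'$ to test whether that set collides with that of any $i_{j'}$, $j'\neq j$; by hypothesis, the unique surviving $j$ must equal $j_e(i)$, and the last bit of $\neg j$ is the desired output. The crucial space estimate is that, by Lemma~\ref{lem:noCheating}, the simulation of $\calS$'s \rightcomp/ region uses space asymptotically no more than $\calT$'s \rightcomp/ module itself (namely $O(i)$, since \rightcomp/ is essentially a pass-through of the $i$-bit input $j$), while the probe-dodgers set itself is representable in constant space (subset of a set of size $n_\mathrm{arms}$ from Claim~\ref{clm:bounded-arms}). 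Hence $M$ runs in space linear in $i$ with a constant depending only on $m$ and $|U|$.

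The contradiction is then obtained from the infinitely-often best-case lower bound on $L_{A_H}$ at input size $|n| = \log(i)+2$, which is $(1/2)\cdot 2^{\log(i)+2} = 2i$. With a sufficiently tight choice of $L_{A_H}$ afforded by Theorem~4.1 of~\cite{VeryHardLanguages} (and, if needed, by strengthening the best-case lower bound to dominate the constants introduced by $m$ and $|U|$), $M$'s space falls strictly below $2i$ infinitely often, contradicting the lower bound and forcing the negation of the uniqueness hypothesis.

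The hardest part will be pinning down the constants in $M$'s space bound tightly enough via Lemma~\ref{lem:noCheating} so that the overhead from the scale factor $m$ and tile-set size $|U|$ does not exceed the best-case lower bound; this is precisely what the careful choice of $L_{A_H}$ in Section~\ref{sec:the-system} is designed to accommodate. A secondary subtlety is that the uniqueness hypothesis alone permits several subiterations to exhibit unique signatures simultaneously, in which case $M$ must further distinguish $j_e(i)$ from other unique-signature candidates; I plan to handle this by invoking the additional necessary condition that $j_e(i)$'s probe-dodgers must be non-empty (since an \arm/ module must have some geometric path through the \bitalley/ in the empty subiteration), together with the directedness of $\calS$ to rule out inconsistent candidates.
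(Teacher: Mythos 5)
Your high-level strategy mirrors the paper's: assume the empty subiteration's signature is unique for almost all $i$, then construct a low-space Turing machine that computes the signatures, identifies the empty subiteration, extracts information about $A^+(\log(i))$, and contradicts the tight best-case space lower bound from \cite{VeryHardLanguages}. However, there are two genuine gaps.

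First, your ``key observation'' that the bumper macrotiles and the row between them are determined by $\log(i)$ alone, and hence the signature varies with $j$ only through the probe-dodgers set, is not correct. That claim holds at the level of $\calT$'s tiles, but the signature is defined in terms of the \emph{full specification of the macrotiles} in $\calS$, and $\calS$ is free to encode extra information about $j$ (or anything else the \planter/ has computed) inside those $m\times m$ blocks while still mapping to the correct $\calT$-tile under $R$. Indeed, this ability of $\calS$ to ``smuggle'' information through macrotile internals is precisely the adversarial behavior the signature notion is built to capture. The paper therefore does not assume the bumper/row part is constant across $j$; instead it carefully \emph{counts} the number of distinguishable signatures as $O(t^{m^2\log\log i}) = O(\log^{c}(i))$, treating the macrotile contents and the probe-dodgers set together. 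If your $M$ compares only probe-dodgers sets it may fail to detect that two subiterations with identical probe-dodgers actually have distinct signatures (or vice versa), so the ``unique survivor'' need not be $j_e(i)$. The fix is simply to compare full signatures rather than only probe-dodgers sets, but as written this is an error.

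Second, your route from a unique survivor to the specific bit $A_H(n)$ depends on $M$ pinning down $j_e(i)$ \emph{exactly}, and your proposed resolution of the multiple-unique-signature-candidates subtlety (non-empty probe-dodgers plus directedness) does not obviously work: several subiterations with unique signatures can all have non-empty probe-dodger sets, and directedness of $\calS$ does not by itself rule them out. This is exactly what the paper's Observation~\ref{obs:tech-tm} (and its constant-size version, Lemma~\ref{lem:tech-tm-constant}) is designed for: rather than insisting on a unique answer, it shows that even \emph{narrowing} $A^+(\log(i))$ to a polylog-size candidate set is impossible in the available space, by splitting into the case where the candidates agree on some bit (contradicting the $L_A$ lower bound) and the case where they all disagree (in which case a cheap elimination run reveals the $L_{A_H}$ bit). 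Your direct reduction to the single $L_{A_H}$ bit only covers the second branch of that case analysis and is not a complete substitute. Finally, you correctly flag the issue that $M$'s $O(i)$ space hides a multiplicative constant depending on $m$ and $|U|$ while the $L_{A_H}$ lower bound near $|n| = \log i + O(1)$ is only a fixed multiple of $i$; the paper's choice of constants and its use of the narrowing lemma are designed to absorb this, and any rewrite along your lines would need to engage with that bookkeeping rather than defer it.
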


\begin{proof}
We will prove Lemma~\ref{lem:noSneak} by contradiction.  Therefore, assume that for no more than a constant number $c$ of iterations do the iterations of $\calS$ have empty subiterations which have signatures identical to some other in their iteration.  Thus, for each iteration $i > c$, the empty subiteration has a signature unique among all others in its iteration.

By Lemma~\ref{lem:independent-subassemblies}, we know that the full \leftcomp/ module can grow from tiles which form on paths only from its signature, since the locations of the signature would be analogous to $\gamma_M$ of the Lemma, and this includes \bumper/s and it must be possible to grow \leftcomp/ completely with or without the \bumper/s, thus any additional paths of tiles that could contribute to the growth of the \leftcomp/ would have to go through the \bumper/s and thus by growing the \bumper/s the paths which would have grown through them can be continued toward the \leftcomp/ module, allowing it to fully grow.

We now calculate the number of unique signatures in subiteration $i_j$.  To fully specify the path between the macrotiles representing the bumpers to the left and right of the \leftcomp/ module, we first note that that can consist of a maximum of $\log(\log(i))$ macrotiles to represent the first row of the \leftcomp/ module plus another constant number of macrotiles to specify the remaining macrotiles on that path, for a total of $O(\log(\log(i)))$ macrotiles.  The number of unique ways to fully specify the entire contents of $O(\log(\log(i)))$ macrotiles (which is greater than or equal to the number of ways to fully specify just the northern row), is $O(t^{m^2\log(\log(i))}) = O(\log^c(i))$.  Additionally, we note that $n_{arms}$ is a constant independent of $i$ (by Lemma~\ref{clm:bounded-arms}), and thus that is also true for the size of the power set of all \arm/ shapes, which represents the full set of possible probe-dodger sets. This means that the number of unique probe-dodger sets is constant relative to $i$, and therefore adding in specification of one of the constant number of probe-dodger set only allows for a constant multiplier for the number of unique signatures, resulting in a total of $O(\log^c(i))$ possible unique signatures for subiterations of iteration $i$.

Given the total of $O(\log^c(i))$ possibly unique signatures, we can apply Observation~\ref{obs:tech-tm} where $|E| = O(\log^c(i))$ and $|x| = \log(\log(i))$ and note that it shows that $F(n) \in \Omega(2^n)$.  However, to calculate the signatures of all subiterations of iteration $i$ requires only that we simulate the planter in such a way as to remember only the most recent two columns at any given time, requiring a maximum space $O(i)$ (i.e. bounded by its maximum height), and also to record the signatures of the unique subiterations, requiring $O(\log^c(i)^2) < 2^i$.  However, this contradicts the $F(n)$ of Observation~\ref{obs:tech-tm}, and therefore it must be the case that for an infinite number of iterations, the empty subiterations of those iterations have signatures which are identical to those of at least one other subiteration in their respective iterations.

\end{proof}

The following lemma states that for the empty subiteration, the $\arm/$ that assembles between the macro bit-alley must have a ``pinch point''. 

\begin{lemma}\label{lem:2cases}
For $i,j,k\in\N$ and some iteration $i$, suppose that the empty subiteration $i_j$ and a distinct subiteration $i,k$ have identical signatures. Additionally, let $\arm/_{i,j}$ denote the $\arm/$ that assembles in the macro bit-alley of subiteration $i,j$, and let $P_k$ be the left probes of $\leftcomp/^U_{i,k}$ and $P'_k$ be the right probes of $\leftcomp/^U_{i,k}$. Then,
there exists subconfigurations $P_j$ and $P'_j$ of $\arm/_{i,j}$ such that $P_j$ is congruent to $P_k$ and $P'_j$ is a congruent subconfiguration of $P'_k$. 
Additionally, the subconfiguration $C$ of $\arm/_{i,k}$ corresponding to $\arm/_{i,k}$ restricted to $\dom(\arm/_{i,k})\setminus (\dom(P_k) \cup \dom(P'_k))$ has the property that there exists a single tile $t_k$ in $C$ such that removing this tile from $C$ partitions $C - t_k$ into two nonempty sets of tiles such that no two tiles of these sets are adjacent. 
\end{lemma}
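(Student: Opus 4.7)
The plan is to exploit the identical signatures of the two subiterations to transplant the probes $P_k$ and $P'_k$ from subiteration $i,k$ into the empty subiteration $i,j$, and then to combine directedness of $\calS$ with the geometric requirements of cooperative binding in $i,k$ to force the arm $\arm/_{i,j}$ to pass through a one-tile-wide pinch. First, I would observe that because $i,j$ is the unique empty subiteration of iteration $i$ (there is exactly one subiteration whose $j$ is the complement of the output of $A^+(\log(i))$), the distinct subiteration $i,k$ must be non-empty. Hence, somewhere along the bit alley of $i,k$ the output bits of \leftcomp/ and \rightcomp/ agree, which forces a cooperative macrotile attachment across the alley; by the definition of cooperation in $\calS$, such an attachment requires the ends of $P_k$ and $P'_k$ to extend to within a single-tile gap (at the $\calT$ scale) of one another, creating a geometric constriction.

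Next, I would apply Lemma~\ref{lem:independent-subassemblies} to the signature shared by $i,j$ and $i,k$. Since a signature fixes the bumpers, the row below and between them, and the probe-dodgers determined by the right-probe configuration, the assembly sequences that produce $P_k$ and $P'_k$ from the signature in $i,k$ can be replayed from the congruent signature in $i,j$. Note that $P_k$ depends only on the output of $A^+(\log(i))$, which is identical in the two subiterations, while $P'_k$ depends on $k \neq j$; this is precisely why $P_j$ will end up congruent to all of $P_k$, whereas $P'_j$ is only forced to be congruent to a subconfiguration of $P'_k$ (namely the portion of $P'_k$ that can still grow in $i,j$ given the different right-hand input and the presence of the arm).

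Third, because $\calS$ is assumed to be directed, any tiles placed along these transplanted growth paths must agree with whatever tiles $\arm/_{i,j}$ places in the same locations. So, in the regions where the paths overlap, the matching subconfigurations sit inside $\arm/_{i,j}$, giving the claimed $P_j, P'_j \sqsubseteq \arm/_{i,j}$. In regions where the paths do not overlap, the arm must route without colliding with the extended probes, which constrains its shape to hug the pinch.

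Finally, I would argue the pinch-point claim by a local geometric analysis at the height where $P_k$ and $P'_k$ cooperate in $i,k$. At that height the only space left between the ends of the ported probes is a single macrotile's width, and the arm $\arm/_{i,j}$ (inflating the single-tile-wide arm of $\calT$) must thread this gap; within the overlapping region, the arm strip necessarily collapses to a one-tile-wide passage on some row, and this single tile is $t_k$. Removing $t_k$ separates the part of $C$ above the pinch from the part below, and because the two sides of the pinch are blocked by $P_j$ on one side and by the congruent subconfiguration of $P'_k$ on the other, no tile of the upper component is adjacent to any tile of the lower component in $C - t_k$. The main obstacle will be making this last geometric argument fully rigorous: one must rule out ``side leaks'' around $t_k$, which requires carefully tracking which cells of the bit-alley macrotile block are occupied by the transplanted probes versus by the arm, and invoking the fact that macrotile fuzz cannot extend diagonally across the gap.
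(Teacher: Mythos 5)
Your high-level plan tracks the paper's proof closely: you invoke Lemma~\ref{lem:independent-subassemblies} together with the identical-signature hypothesis to ``transplant'' the probes from $i,k$ into $i,j$, and you derive the pinch point from the geometric requirement that cooperative binding in the non-empty subiteration $i,k$ forces the probe ends to come within a single-tile gap. The ingredients are the right ones, but there are two places where you diverge from, or fall short of, the paper's argument.

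First, your pinch-point claim is stated as following ``by the definition of cooperation in $\calS$,'' and you honestly flag at the end that making this rigorous -- ruling out ``side leaks'' around $t_k$ and tracking what fuzz is and is not allowed -- is the hard part. The paper does not re-derive this inline; it cites Lemma~\ref{lem:gap-must-exist}, a dedicated lemma whose proof goes through a nontrivial cut-and-path argument (Claim~\ref{obs:nogap}): one assumes a strength-zero cut across the probing region, shows that any path from the $L$-macrotile to the $R$-macrotile must then route through regions representing $B$ tiles or tiles adjacent to them, and uses Lemma~\ref{lem:independent-subassemblies} plus an adversarial assembly-sequence argument to contradict valid simulation. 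That is exactly the ``side leaks'' issue you worried about, and without invoking (or reproducing) that lemma your pinch-point step is incomplete.

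Second, your reasoning for why $P'_j$ is only a \emph{congruent subconfiguration} of $P'_k$ (rather than fully congruent, as with $P_j$) appeals to the fact that the $\rightcomp/$ inputs $j$ and $k$ differ. That is a reasonable intuition, but it is not what the paper uses, and it is not by itself sufficient: the signature already constrains only the left-side machinery and the probe-dodger set, so the right probes in $i,j$ and $i,k$ could a priori be quite different. The paper instead argues from the fact that both subiterations share the same (nonempty) probe-dodger set -- nonempty because $\arm/_{i,j}$ must in fact assemble -- which is what forces a translated portion of $P'_k$ to be compatible with (and hence present in) the arm. Your argument should be replaced with, or at least supplemented by, that probe-dodger reasoning to get the precise ``subconfiguration'' conclusion.

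Finally, a small typo watch: the lemma statement writes $P'_k$ as the right probes of $\leftcomp/^U_{i,k}$, which should be $\rightcomp/^U_{i,k}$; and the final claim about $C$ is phrased in terms of $\arm/_{i,k}$, which can only sensibly mean the arm in the empty subiteration $i,j$ viewed against the probes of $i,k$ (there is no arm in the non-empty subiteration). You implicitly interpreted these correctly, but be careful when formalizing.
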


\begin{proof}

First, by Lemma~\ref{lem:independent-subassemblies}, the fact that the left computation of subiteration $i,j$ and subiteration $i,k$ have bumpers on the left and right side, and the assumption that these subiteration have identical signature, it follows that $P_j$ is congruent to $P_k$. Then, as both subiterations have the same set of probe-dodgers and this set must be non-empty since the $\arm/_{i,j}$ must assemble in the $i,j$ subiteration, it must be the case that $P'_j$ is a congruent subconfiguration of $P'_k$.

Finally, by Lemma~\ref{lem:gap-must-exist}, it must be the case that the gap between $P_k$ and $P'_k$ in $i_k$ must be single tile wide or less in order for the probes in $P_k$ and $P'_k$ to cooperatively place a tile. Therefore, the subconfiguration $C$ of $\arm/_{i,k}$ corresponding to $\arm/_{i,k}$ restricted to $\dom(\arm/_{i,k})\setminus (\dom(P_k) \cup \dom(P'_k))$ has the property that there exists a single tile $t_k$ in $C$ such that removing this tile from $C$ partitions $C - t_k$ into two nonempty sets of tiles such that no two tiles of these sets are adjacent. 

\end{proof}

\subsection{Turing machines simulating tile assembly systems}
In this section we prove a couple of claims on the amount of space a Turing machine requires to simulate a system which grows certain modules of the system $\calT$.

We call a tile an $L$ tile if it is of type $0L$ or $1L$.  We call a tile an $R$ tile if it is of type $0R$ or $1R$.  We define the bit-alley region of a subiteration $i,j$ to be the points which lie in between $\leftcomp/_{i,j}$ and $\rightcomp/_{i,j}$ (that is, the points which lie on the same row as a tile in $\leftcomp/_{i,j}$ and $\rightcomp/_{i,j}$) and have the same x-coordinates as points which lie between the $L$ and $R$ tiles in subiteration $i,j$.  We define the macro bit-alley of an iteration $i, j$ to be the macrotile equivalent of the bit-alley region.  Let $(x,y)$ be the bottom leftmost corner of a macrotile location which lies between an $L$ and $R$ macrotile.  We call the region $R=\{(x',y') | x-2 \leq x' \leq x+c+2, y-2 \leq y' \leq y+c+2\}$ a probing region.  Let $\alpha' \in \prodasm{\calT}$.  We say that a module $\gamma \sqsubseteq \alpha$ is not assembled in $\alpha'$ if $\dom(\gamma) \cap \dom(\alpha') = \emptyset$.  Now, let $\alpha' \in \prodasm{\mathcal{U}}$.  Similarly, we say that a macro module $\gamma \subseteq \alpha_U$ is not assembled in $\alpha'$ if the module is not assembled in the assembly $R^*(\alpha')$.

Let $\calT$ be TAS and let $\alpha \prodasm{\calT}$.  We say that a subconfiguration $\gamma' \sqsubseteq \alpha$ grows from a subconfiguration $\gamma \sqsubseteq \alpha$ provided that there exists a path in the binding graph $G_{\alpha}$ from a tile in $\gamma$ to all tiles in $\gamma'$.  Here, $\gamma$ and $\gamma'$ are assumed to be connected.  Also, we say that $\gamma$ and $\gamma'$ are a distance of at most $1$ apart if there exist tiles $t \sqsubseteq \gamma$ and $t' \sqsubseteq \gamma'$ such that the Manhattan distance between $t$ and $t'$ in $G_{\alpha}$ is at most $1$.  Otherwise we say that the distance between $\gamma$ and $\gamma'$ is greater than 1.

\begin{claim} \label{claim:TMprobes}
Let $i, j \in \mathbb{N}$.  The subconfigurations grown in the macro bit-alley from $\leftcomp/^U_{i,j}$ and $\rightcomp/^U_{i,j}$ can be output by a Turing machine $M'$ which runs in space $2^{i+1}$.
\end{claim}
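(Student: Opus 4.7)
My plan is to build $M'$ so that it never has to simulate the interior of $\leftcomp/^U_{i,j}$ or $\rightcomp/^U_{i,j}$ at full macrotile resolution. Instead, $M'$ will directly execute the underlying Turing machine $A^+$ to obtain the bit string that the \leftcomp/ computation eventually places on its east-facing probes, then simulate $\calS$ only in the thin probe region and in the macro bit-alley itself, emitting the resulting macrotiles as output.

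Concretely, given $i$ and $j$, $M'$ first runs $A^+$ on input $\log(i)$ as an ordinary Turing machine, recording the resulting $i$-bit output string $b_L$. Because $A^+$ bounds and aborts each of its subcalls $A(z)$ and $A_H(z)$ at a hard cell budget, the direct simulation never exceeds that budget; moreover, for all but finitely many $i$, $A$ and $A_H$ in fact use only the almost-everywhere space complexities afforded by Theorem~4.1 of \cite{VeryHardLanguages}, which comfortably fit inside $2^{i+1}$ cells. The finitely many exceptional values of $i$ can be hard-wired into $M'$'s finite control without affecting the asymptotic bound.

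With $b_L$ and $j$ in hand, $M'$ next reconstructs the probe configurations along the sides of the macro bit-alley. For each $L$-probe and $R$-probe, the logical tile type ($0_L$, $1_L$, $0_R$, $1_R$) is determined by the corresponding bit of $b_L$ or $j$, and because $\calS$ correctly simulates $\calT$, this logical tile is represented by a unique $m \times m$ macrotile of $U$ that $M'$ produces by locally simulating $\calS$'s attachment rules inside the $m \times m$ region using the known boundary glues. Each probe costs $O(1)$ additional space to simulate, and the running list of probe descriptions along the macro bit-alley occupies only $O(i)$ cells.

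Finally, $M'$ takes these probe macrotiles as boundary conditions and simulates $\calS$'s growth inside the macro bit-alley itself, an $m \times O(im)$ strip; here Lemma~\ref{lem:noCheating} lets us conclude that the zig-zag growth of $\calS$ in this strip can be traced using space asymptotically no greater than the strip's own dimensions, i.e.\ $O(i)$ cells. $M'$ emits the resulting macrotiles and halts. The principal obstacle lies in the first step: one must carefully verify that the $2^{i+1}$ budget accommodates the largest subcomputation performed by $A^+(\log(i))$---in particular the call to $A_H$ on its largest argument---which follows from the almost-everywhere worst-case space complexity of $A_H$ together with a finite hard-wired patch for the exceptional inputs.
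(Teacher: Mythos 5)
Your proposal contains a genuine gap that the paper's proof is carefully designed to avoid. The claim asks for a Turing machine that outputs the actual subconfigurations (tile-level contents of the macrotile regions) that the simulator $\calS$ grows in the macro bit-alley---not merely which logical tile types of $\calT$ those macrotiles represent. Your plan is to first compute the \emph{logical} output bits $b_L$ by running $A^+$ directly, and then ``locally'' reconstruct each probe macrotile by ``simulating $\calS$'s attachment rules inside the $m\times m$ region using the known boundary glues.'' But those boundary glues are precisely what you do not know without having already traced $\calS$'s growth through the entire $\leftcomp/^U_{i,j}$ and $\rightcomp/^U_{i,j}$ modules. The representation function $R$ is many-to-one, so ``the macrotile representing $0_L$'' is not a well-defined object determined by the logical tile alone: the particular $m\times m$ configuration used at a given location depends on what state information $\calS$ has threaded through it, which in turn depends on the full assembly history of the simulated \leftcomp/ module. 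Your appeal to $\calS$ being directed only gives uniqueness of the global terminal assembly; it does not make individual macrotiles locally computable from the logical tile type they map to.

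The paper's proof sidesteps this by never shortcutting to the logical output. It simulates $\calS$ itself, and the content of Claim~\ref{claim:TMprobes} is really a space bound on that simulation. The key tool is an adaptation of Lemma~\ref{lem:noCheating}: because $\leftcomp/_{i,j}$ in $\calT$ grows as a zig-zag assembly (the $A^+$ computation followed by the counter), the glue-sequence-table machinery of the no-cheating lemma lets one trace $\calS$'s growth of $\leftcomp/^U_{i,j}$ using space proportional to the width of the widest zig-zag row, which is $O(2^i)$; the \planter/ portion costs $O(i^2)$ and $\rightcomp/^U_{i,j}$ another $O(i^2)$. Your application of Lemma~\ref{lem:noCheating} to the narrow bit-alley strip alone does not help, because the information entering that strip from the west is produced by $\leftcomp/^U_{i,j}$, and that is exactly the part you tried to bypass. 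If you want to salvage your approach you would need an argument that the macrotile contents on the eastern face of $\leftcomp/^U_{i,j}$ are a computable function of $b_L$ (and $i,j$) alone, which is false in general and not something the simulation definitions guarantee.
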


In this proof we rely on a straight forward adaptation of Lemma~\ref{lem:noCheating}.  The straight forward adaptation of the lemma we discuss holds because of the key insight in the proof of Lemma~\ref{lem:noCheating} that in order for a Turing machine to simulate a system which is simulating a zig-zag system, it only needs to ``remember'' a bounded number of tiles depending on the scale factor of the simulation and the width of the system that is being simulated.  This key insight allows a Turing machine to not only output the result of a computation that takes place in the simulating system, but it also allows us to construct a Turing machine which outputs the configurations contained in certain macrotile regions of the producible assemblies of the simulating system.

\begin{proof}
Note that in $\calT$ the planter grows in a zig-zag fashion.  Consequently, it follows from a straight forward adaptation of Lemma~\ref{lem:noCheating} that the configuration of the row of macrotiles which compose the first rows of the $\leftcomp/^U_{i,j}$ and $\rightcomp/^U_{i,j}$ can be determined in space $O(i^2)$ since the width of the counter in $\calT$ is $O(i^2)$.  %

\begin{figure}[htp]
\begin{center}
\includegraphics[width=2.0in]{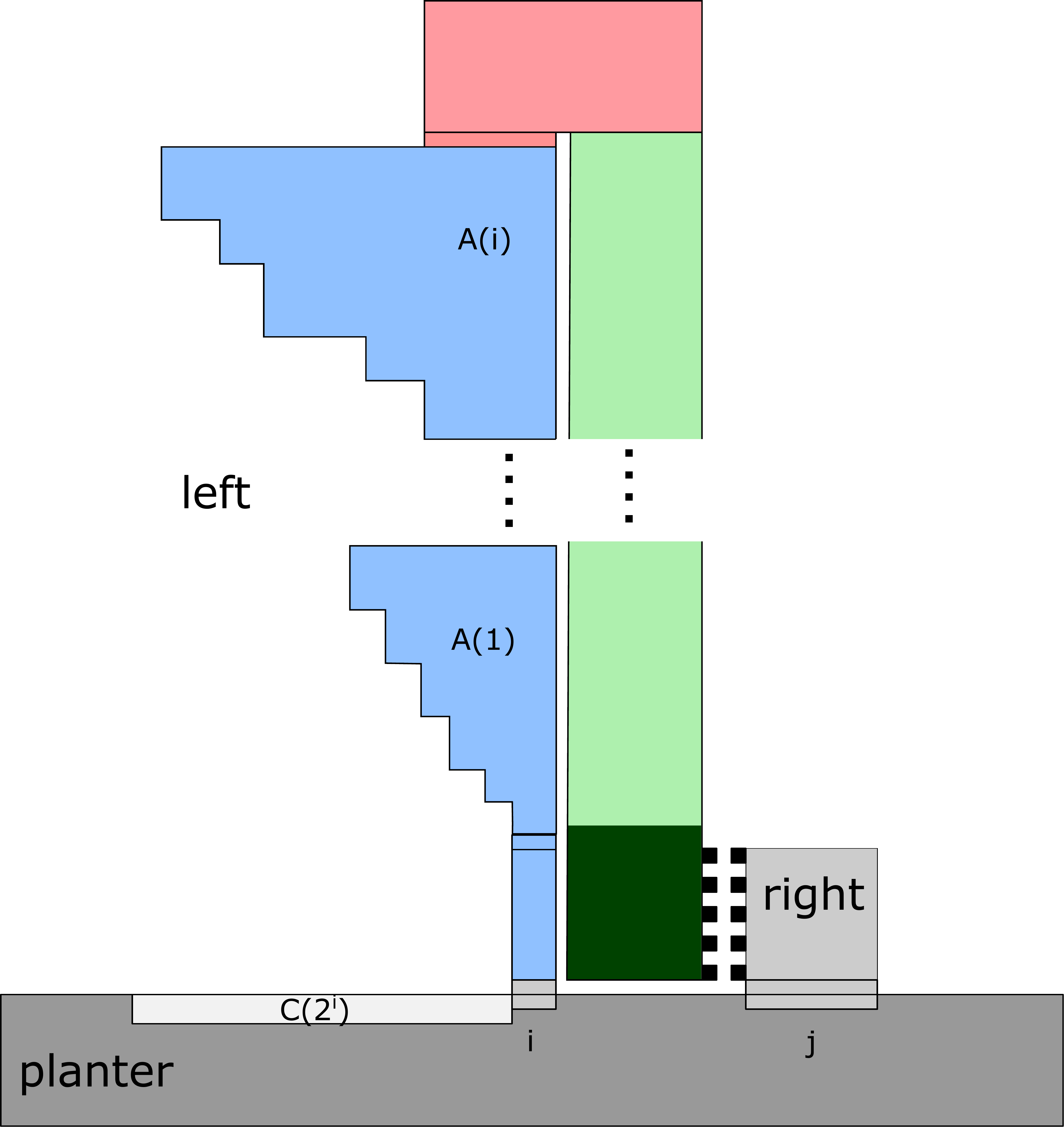}
\caption{A depiction of a portion of the $\planter/$, $\leftcomp/_{i,j}$, and $\rightcomp/_{i,j}$ of the configurations of the $(i,j)$-subiteration in $\calT$. The blue, red, light green, and dark green regions of $\leftcomp/_{i,j}$ correspond to various subassemblies of $\leftcomp/_{i,j}$. The dark green region is an $i\times i$ block of tiles which can be determined in $O(2^i)$ space.}
\label{fig:nocheating-lemma-adaptation}
\end{center}
\end{figure}

Figure~\ref{fig:nocheating-lemma-adaptation} shows a portion of the $\planter/$, $\leftcomp/_{i,j}$, and $\rightcomp/_{i,j}$. The configuration $\leftcomp/_{i,j}$ has been divided into five regions.
\begin{enumerate}
\item The blue region corresponds to the zig-zag assembly which computes $A^+(i)$,
\item the red region corresponds to the assembly which turns the $i$ bits calculated by $A^+(i)$ toward the counter,
\item the light green region corresponds to the zig-zag counter assembly which grows toward the $\planter/$, counting to $2^{2^i}$ before growing the probes of the left side of the bit-alley, and
\item the dark green region which corresponds to the subassembly containing the left portion of the bit-alley.
\end{enumerate}

Now consider the subconfiguration $L$ in $\mathcal{U}$ which represents the subassembly corresponding to the dark green region in Figure~\ref{fig:nocheating-lemma-adaptation}. A straightforward adaptation of the proof of Lemma~\ref{lem:noCheating} shows that $L$ can be determined in $O(2^i)$ space. This adaptation consists of modifying the proof so that the glue sequence tables and the assemblies produced by the procedures InitAssembly, InitGST, UpdateAssembly, and UpdateGST follow the zig-zag assembly sequence of the subassembly in the blue region before turning and following the zig-zag assembly sequence of the light green region. Both of these zig-zag assemblies have width in $O(2^i)$. Consequently, there exists a Turing machine which runs in space $O(2^i)$ and outputs the configuration in the dark green region of $\leftcomp/^U_{i,j}$.  Then, as $\rightcomp/^U_{i,j}$ consists of $O(i^2)$ tiles, the claim holds.
\end{proof}

\begin{claim}
Let $i,j \in \mathbb{N}$ be such that $i, j$ is an empty subiteration in $\alpha_U$.  Let $\alpha_U' \in \prodasm{mathcal{U}}$ be such that the $\leftcomp/^U_{i,j}$ and $\rightcomp/^U_{i,j}$ of subiteration $i, j$ is assembled, but the $\topcomp/^U_{i,j}$ module has not been assembled.  Then if $\alpha_U'$ contains configurations $P$ and $P'$ which grow from $\leftcomp/^U_{i,j}$ and $\rightcomp/^U_{i,j}$ respectively in the macro bit-alley of subiteration $i,j$ which are at most a distance of 1 apart, then there exists a Turing machine $M'$ which takes $\alpha_U'$ as input and outputs a set of $t$ arm types, denoted $A'$, such that the arm which grows in subiteration $i,j$ in $\alpha_U$ is in the set $A'$.  Furthermore, $M'$ runs in space $O(|U|\times c^2)$.
\end{claim}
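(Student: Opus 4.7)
The plan is to exploit the geometric bottleneck created when $P$ and $P'$ approach within distance $1$ in the macro bit-alley. By the same reasoning used in Lemma~\ref{lem:2cases}, the close proximity of these two subconfigurations forces any further growth bridging them to occur through a single-tile bottleneck: since the $\arm/$ in $\calT$ is a single-tile-wide column, its macrotile representation in $\calS$ (of side length $c$) must be largely supplied by extensions of $P$ and $P'$, with only one tile position actually spanning the remaining gap between them.

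The Turing machine $M'$ proceeds as follows. Given $\alpha_U'$ as input on a read-only tape, $M'$ first locates the pinch point where $P$ and $P'$ come within distance $1$, using pointer-based scanning that requires no workspace beyond storing coordinates. Then, for each tile type $u \in U$ (at most $t = |U|$ candidates) whose glues are consistent with those exposed by $P$ and $P'$ at the bottleneck position, $M'$ performs a bounded local simulation: it assumes $u$ occupies the bottleneck and simulates the deterministic local growth filling in at most one macrotile of size $c \times c$ together with its immediate neighborhood, using $O(c^2)$ workspace. From the resulting macrotile $M'$ consults the representation function $R$ to recover the corresponding tile of $\calT$, which in turn identifies a candidate $\arm/$ type; all such candidates are collected into the output set $A'$, which thus has cardinality at most $t$.

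For correctness, the key observation is that the $\arm/$ that actually grows in subiteration $i,j$ of $\alpha_U$ must pass through the same bottleneck, and its bottleneck tile must be some element of $U$ whose glues match those exposed by $P$ and $P'$ at that position. Hence the macrotile containing the true bottleneck tile is one of the $t$ macrotiles enumerated by $M'$, and the true $\arm/$ type appears in $A'$. The total space used by $M'$ is $O(c^2)$ for the per-candidate simulation plus $O(t)$ bookkeeping to record $A'$, which is bounded by $O(|U| \cdot c^2)$ as claimed.

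The main obstacle I anticipate is making rigorous the claim that the $\arm/$'s macrotile representation is genuinely determined (up to the $t$-fold ambiguity) by the single bottleneck tile. This requires combining the pinch-point geometry with the fact that $\calT$'s $\arm/$s are single-tile-wide repeating columns, so that the ``head'' macrotile --- which must contain the bottleneck --- together with local propagation fixes the entire repeating structure below. One must also carefully handle the case where paths from $P$ and $P'$ already contribute tiles to the arm's macrotile column, ensuring that no additional nondeterminism leaks in from outside the local $c \times c$ simulation window and that the glue constraints at the bottleneck really are what $M'$ reads off of $\alpha_U'$.
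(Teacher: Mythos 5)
Your proposal takes essentially the same approach as the paper: the paper's $M'$ likewise enumerates all $|U|$ tile types placed at the single-tile gap between $P$ and $P'$, constructs a system $\calT_t = (U,\sigma_t,\tau)$ for each such $t$ with seed $\sigma_t$ consisting of the probing-region subconfiguration $\gamma$ plus the candidate tile at the bottleneck, simulates each locally until two full macrotile regions form, reads off the corresponding $T$-tile type via the representation function, and collects the resulting $\arm/$ types into a set of size at most $|U|$. One small caution: where you filter candidates by ``glues consistent with those exposed by $P$ and $P'$ at the bottleneck,'' note that the arm tile passing through the pinch point need not bind to $P$ or $P'$ at all (it may receive its input glue only from the macrotile above it in the descending $\arm/$), so this filter should be understood as ``does not conflict'' --- the paper sidesteps the subtlety by simply trying all $t\in U$ without a filter; otherwise your argument matches the paper's.
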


\begin{proof}
Let the hypotheses hold, and assume that $\gamma$ is the subconfiguration grown in the probing region $R$ which contains $P$ and $P'$.  By assumption there exists subconfigurations $P$ and $P'$ which are a distance of 1 apart.  We can use these two subconfigurations to construct $|U|$ different systems $\calT_t = (U, \sigma_t, \tau)$.  For each $t \in U$ the system $\calT_t$ is constructed by constructing $\sigma_t$ so that it consists of $\gamma$ with the tile type $t$ placed in the single tile wide gap between $P$ and $P'$.  Note that there could be more than one single tile wide gap between $P$ and $P'$.  It doesn't matter which one we choose as long as we choose the same one when constructing different systems. We can then simulate the growth of all the systems with a Turing machine $M'$ in the following manner.  For each $\calT_t$, $M'$ simulates the assembly of the system until two full macrotile regions form.  At that point, the Turing machine is then able to determine what tile types the macrotiles map to in $T$.  Observe that since the configuration $\gamma \cup t$ ``cuts'' the bit-alley region, and an arm must be able to grow in the subiteration since it is assumed to be empty, there exists at least one $t$ such that $\gamma \cup t$ which grows into two full macrotile regions and does not place any tiles outside of the bit-alley.  From this, the Turing machine can determine what types of arms are able to grow into the subiteration.  This Turing machine can clearly be designed to run space $O(|U| \times c^2)$.
\end{proof}

\subsection{A Contradiction} \label{sec:contradiction}
Let the $B^+$ machine be defined analogously to the $A^+$ machine described in Section~\ref{sec:the-system}, but for the series of computations $B_k$ for $0 \leq k \leq a-1$ and $B_H$.  Note that there does not exist a machine which outputs $B^+(x)$ and runs in time $G(n)$ for $G(n) \in o(3^n)$ for infinitely many inputs $x\in \mathbb{N}$.  This follows from the description of the languages that the machines $B^+$ decides which are described in Section~\ref{sec:the-system}.

In this section, we show that under the hypothesis there exists a simulator $\mathcal{U}$ for $\calT$ we can construct a Turing machine $M^*$ which outputs $B^+(x)$ and runs in time $G(n)$ for $G(n) \in o(3^n)$ for infinitely many inputs $x\in \mathbb{N}$.  This will contradict the assumption that the language $L_{B_H}$ cannot be recognized in i.o. space complexity $G(n)$ for $G(n) \in o(3^n)$.

For each input $x \in \mathbb{N}$, $M^*$ does the following.  First, $M^*$ simulates the growth of an assembly $\alpha_e \in \prodasm{\mathcal{U}}$ such that $\alpha_e$ only grows the \planter/ and signatures (which means that $\alpha_e$ also contains \rightcomp/ modules) in order to determine whether the empty subiteration in iteration $x$ is unique.  Note that this requires space $O(|x|)$ by Lemma~\ref{lem:noCheating1} since this is the space used by the \planter/ in $\calT$.  If the empty subiteration in iteration $x$ is unique, the machine $M^*$ simply runs the machine $B^+$ on input $x$, outputs $B^+(x)$ and halts.  That is, compared to $B^+(x)$, $M^*$ with input $x$ is no more space efficient in this case.

\subsubsection{Creating a set $E$ of $t$ arm types}
We refer to the tile types of $T$ that assemble the various $\arm/$s in $\calT$ as arm types. By an abuse of notation, we refer to the macrotile that maps to an arm type as an arm type of $\mathcal{U}$. Then, if the empty subiteration is non-unique $M^*$ creates a set $E$ of $t$ arm types such that the arm type which grows into the empty subiteration $i_j$ is guaranteed to be in $E$. It does so in the following manner.  Denote the empty subiteration by $x_j$ and denote the non-empty subiteration with an identical signature by $x_k$.  Next, the TM $M^*$ determines the left and right probes grown by $\leftcomp/^U_{x_k}$ and $\rightcomp/^U_{x_k}$.  It follows from Claim~\ref{claim:TMprobes} that this can be done in space $2^{|x|}$.   It follows from Lemma~\ref{lem:2cases} that any left probe $P$ of $\leftcomp/^U_{x, k}$ and any right probe $P'$ of $\rightcomp/^U_{x, k}$ are such that the $\arm/^U_{i,j}$ is \emph{consistent} with the translation of $P$ and $P'$ by some $\vec{v}$.  That is, they do not have different tiles in the same location after the translation.

\begin{figure}[htp]
\begin{center}
\includegraphics[width=5.0in]{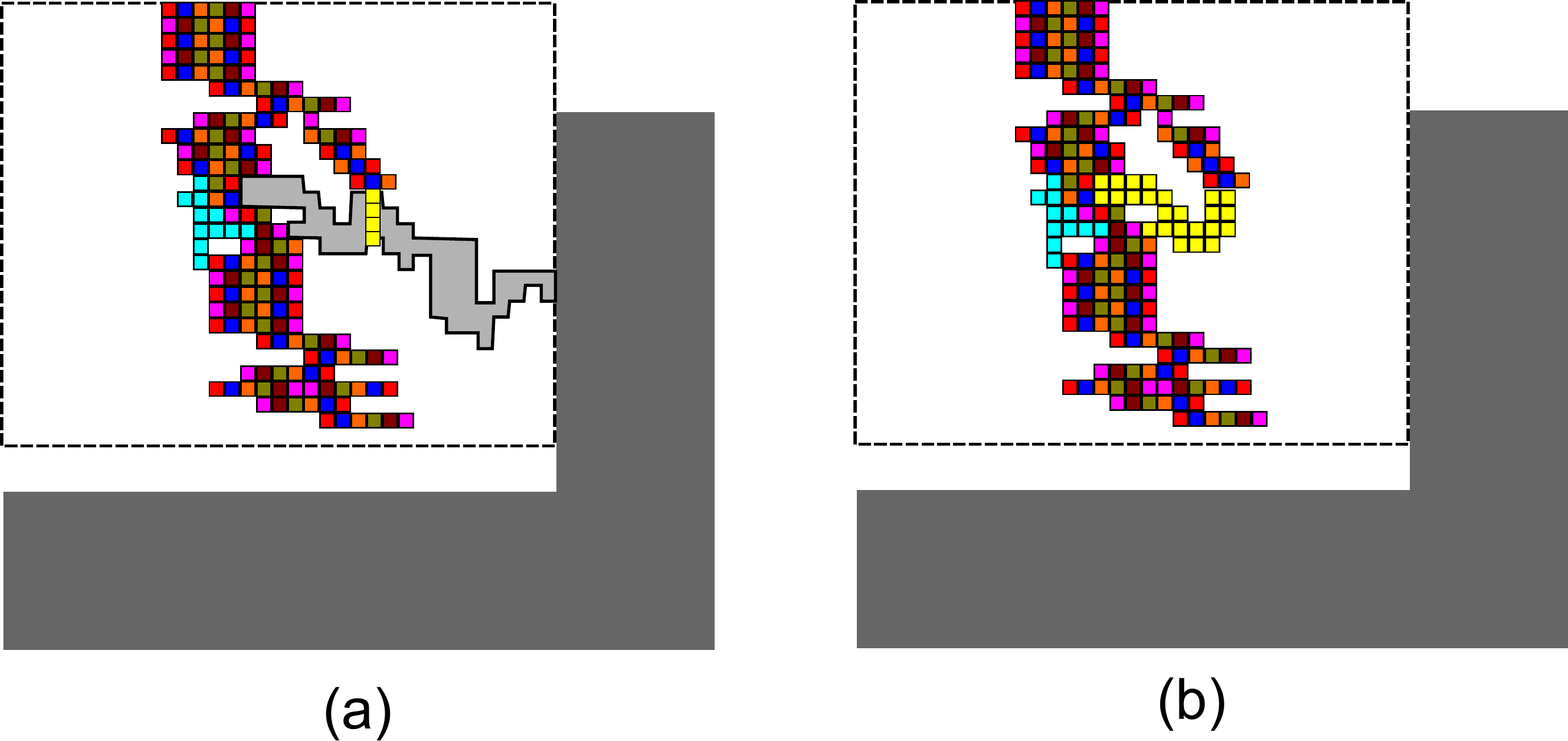}
\caption{If a subconfiguration $\gamma$ (shown as the yellow strip in part (a)) grows a strip of tiles which is consistent with and completely spans another subconfiguration $\gamma'$, then whatever grows in $\gamma'$ after the strip can grow as shown in part (b).}
\label{fig:subcut}
\end{center}
\end{figure}

Note that at this point we know that there is some strip of $\arm/^U_{x,j}$ that contains tiles in $P$ and is completely consistent with the subconfiguration $P'$.  Note that it follows from Lemma~\ref{lem:independent-subassemblies} that if $\arm/^U_{x,j}$ grows a subconfiguration of tiles which completely cuts the region where it must be consistent with $P'$ as shown in Figure~\ref{fig:subcut}  part (a), then it must be the case that it can grow all of $P'$ which grows after the cut as shown in part (b) of Figure~\ref{fig:subcut}.  Furthermore, note that this growth can occur without any tile placements outside of $\dom(P')$.  Consequently this means if we want to create the set $E$, all we need to do is for each tile $u \in U$, create a system which contains only the probe $P$ and the tile $u$ placed at position $p_t$.  We then grow the system until the assembly which we obtain is terminal or the diameter of the assembly we obtain is greater than $2c$.  If the diameter of the assembly is greater than $2c$ and some portion of it maps to an arm tile $t_a$ under the representation function, then we add $t_a$ to our set $E$.  Let $A$ be a subconfiguration and let $R$ be the infinite which consists of infinite columns such that $\dom(A) \subset R$ and if any column is removed from $R$, it is no longer true that $\dom(A) \subset R$.  Then we call the southern boundary of $A$ the set of points $x \in \dom(A)$ such that $x$ contains a path $p$ completely contained in $R$ to a point which lies to the south of any point in $\dom(A)$.  Intuitively, this is the set of tiles on the ``bottom path'' of $A$.  Note this does not just include tiles in $A$ which have an empty location to the south.  If the assembly is terminal and has diameter less than $2c$, we then create a new seed which consists of the previous seed and the tile $t_s$ in $P'$ such that 1) it is not contained in the previous seed, 2) it is one of tiles in the southernmost boundary of $P'$) and 3) the path of points contained in the southern boundary tiles from $t_s$ to $p_t$, denoted $\partial_s$, is such that all the points in $\partial_s$ were in the previous seed.  We then repeat this process for all $u \in U$.

The Turing machines $M^*$ runs the algorithm shown in Algorithm~\ref{alg:arm-types}.  This algorithm is just a formalization of the intuitive idea discussed above.  Here are the variables we use in this algorithm:
\begin{enumerate}
\item $P$ is the left probe grown from $\leftcomp/^U_{x,k}$.
\item $P'$ is the right probe grown from $\rightcomp/^U_{x,k}$.
\item $p_t$ is a point which lies adjacent to both $P$ and $P'$ (i.e. the single tile wide gap between $P$ and $P'$).
\item $P^* = \dom(P) \cup \dom(P') \cup p_t$.
\item $\partial_{P^*}$ is the \emph{southern boundary} of $P^*$.
\item Intuitively $Q$ is the min queue which contains tiles on the southern boundary of $P'$ and they are added to the queue based off of how far they are away from $p_t$.
\end{enumerate}

\begin{algorithm}
\caption{An algorithm for constructing the arm types for a non-unique empty subiteration.~\label{alg:arm-types}}
\SetKwRepeat{Do}{do}{while}%
	\KwIn{$P, P', U, p_t$ as described above}
	\KwOut{A set $E$ of arm types of size at most $|U|$.}
	set $Q$ to be a min queue of tiles in $\partial_{P'}$ where the key for a tile $t'$ is the length of the path from $\dom(t')$ to $p_{r'}$ in the grid graph restricted to only points in $\partial_{P^*}$\;
	\For{$t^* \in U$}{
        set $\calT^*$ to the system $(U, \sigma_t, \tau)$ where $\sigma_t$ is the assembly $P$ with tile $t^*$ placed at point $p_t$\;
	\Do{$Q$ is not empty}{
      grow $\calT$ until the assembly $\alpha$ is terminal or $\diam(\alpha)> 2c$\;
      \If{$\diam(\alpha)> 2c$}{
      	break\;
      }\Else{
      	$\sigma_t = \sigma_t \cup \operatorname{pop}(Q)$ (that is, redefine $\sigma_t$ to be the assembly $P$ with tile $t^*$ placed at point $p_t$ and the tile popped from $Q$ at its tile location\;
      }
    }

        \If{$\alpha$ contains a macrotile which maps to an arm tile, $t_a$ say}{
			add $t_a$ to $E$\;
        }
	
	}
\end{algorithm}

\subsubsection{Using $E$ to compute $B^+(x)$}
Note that the arm tiles in $E$ correspond to the output of the machines $B_i$, so we can think of each arm tile in $E$ as corresponding to a string $x_0x_1x_2...x_{2t-1}$ where $x_i$ represents the output of $M_{B_i}$ on the input received by \topcomp/.  Once $M*$ determines the $t$ potential arms that can grow into the empty subiteration $x, j$, it can create a set of $t$ strings of length $2t$ which correspond to the arms.  Furthermore, it must be the case that one of these $t$ strings corresponds to the output of $B^+$ on input $x$.  Thus, we have a set of $t$ strings of length $2t$ which contains the solution to $B_0(x) B_1(x) \hdots B_H(x)$.  Then by Lemma~\ref{lem:tech-tm-constant}, for almost all inputs, $M^*$ requires at least space $\Omega(3^{\frac{n}{2}})$, but as we observed $M^*$ only uses space $O(2^{n+1})$.  This is a contradiction.

\subsubsection{Proof of correctness for the algorithm which generates $E$}
We begin by noting that the set $E$ produced by Algorithm~\ref{alg:arm-types} is guaranteed to contain a translation of a macrotile grown in $\arm/^U_{x_j}$. The algorithm implicitly described above is guaranteed to grow a macrotile that the arm grows for the following reasons.  Let $t_p$ be the tile such that the arm $\arm/^U_{x_j}$ is consistent with the configuration $P \cup t_p \cup P'$ and let $\partial_{P^*}$ be the southern boundary of this configuration.  We show that there is an assembly sequence in $\mathcal{U}$ such that the tiles placed in $\partial_{P^*}$ before growth continues to the south of $\partial_{P^*}$ is the same as some seed in our algorithm.  Let $P_j$ and $P_j'$ be the subconfigurations grown from $\leftcomp/^U_{x,j}$ and $\rightcomp/^U_{x,j}$ respectively such that $P_j \cup P_j' \congsub P \cup P'$.  First, there is guaranteed to be an assembly sequence where $P_j$ is present since the assumption that subiteration $i_j$ and subiteration $i_k$ have the same signatures implies that there exists an assembly where their $\leftcomp/^U$ modules are exactly the same up to translation.  This means that there exists an assembly where $P_j$ is present before $\arm/^U_{x_j}$ grows into the $\bitalley/^U_{x_j}$.  In addition, there is an assembly sequence where first $P_j$ appears (because of the previous point), and then a translation of $t_p$ appears next in $\partial_P$ so that it prevents the cooperative growth of the macrotile which grew from the probes in the subiteration $x_k$ (otherwise the macrotile grown in the simulated bit-alley of iteration $x_k$ could assemble).  Finally, there exists an assembly sequence where after $\arm/^U_{x_j}$ places a macrotile on the southern boundary of $\dom(P_j')$ all of the tiles in $P_j'$ which lie to the west of the strip can grow with only tile placements in $\dom(P_j')$.

Now, notice that when $U$ and $c$ are fixed, Algorithm~\ref{alg:arm-types} runs in constant time.

Finally, we claim that the algorithm adds at most $t$ arm types to the set $E$.  Indeed, this is true since it is clear from the algorithm that at most one arm type can be added to $E$ with each iteration of the outer loop of which there are $|U|$.

\section{Technical Lemmas}\label{sec:tech-lemmas}
In this section we prove a number of technical lemmas which will be of assistance in later sections.  The first technical lemma we prove shows that in a directed system if a subconfiguration $\gamma$ ``grows through'' another subconfiguration $\gamma'$, then the subconfiguration $\gamma'$ can grow the portion of $\gamma$ that assembles after $\gamma$ ``grows through'' $\gamma'$.  The second technical lemma roughly states that the growth of macrotile that represents a bit-alley tile in $\calT$ must stem from the cooperative placement of tiles by subconfigurations grown from the left and right machines.  In the third lemma, we show that in a system $\mathcal{U}=(U, \sigma, \tau)$ the number of assemblies that can grow from subconfigurations which are exactly the same for all but one tile is no more than $|U|$.  This lemma will be used in a later section to show that the number of arms that can grow into an empty subiteration which has at most a one tile wide gap is at most $|U|$.  The next technical lemma we prove in this section shows that if a Turing machine $M$ is able to narrow down the solution space for the outputs of a series of computations on some input $x$, then $M$ must use the same amount of space as some of the computations in the series.  This lemma will allows us to put constraints on the types of tricks the adversary is able to use in order to simulate the system $\calT$. Finally, we will prove a technical lemma which proves that the space complexity of computations possible within an assembly simulating a zig-zag assembly is asymptotically no greater than the space complexity of the system being simulated.

\subsection{Miscellaneous Definitions}
\begin{definition}
Let $R \subset \mathbb{Z}^2$ be an $m \times n$ rectangular region with the bottom leftmost corner at location $(0,0)$.  Then \\
1) $perim_S = \{x | x = (p, 0), p \in [0, n]\}$,\\
2) $perim_E = \{x | x = (n, p), p \in [0, m]\}$,\\
3) $perim_N = \{x | x = (p, m), p \in [0, n]\}$, and\\
4) $perim_W = \{x | x = (0, p), p \in [0, m]\}$.\\
\end{definition}

\begin{definition}
Let $\calT = (T, \sigma, 2)$ be a directed TAS and let $\alpha \in \termasm{\calT}$.  Let $\gamma \sqsubseteq \alpha$ and $\beta \sqsubseteq \alpha$.  Then we say $\gamma$ and $\beta$ are congruent and write $\gamma \cong \beta$ if $|\dom(\beta)| = |\dom(\gamma)|$ and there exists $\vec{v} \in \mathbb{Z}^2$ such that for all $x \in \dom(\gamma), \gamma(x) = \beta(x+\vec{v})$.
\end{definition}

\begin{definition}
Let $\calT = (T, \sigma, 2)$ be a directed TAS and let $\alpha \in \termasm{\calT}$.  Let $\gamma \sqsubseteq \alpha$ and $\beta \sqsubseteq \alpha$.  Then we say $\gamma$ is a congruent subconfiguration of $\beta$ and write $\gamma \congsub \beta$ if there exists a subconfiguration $\beta' \sqsubseteq \beta$ such that $\gamma \cong \beta'$.
\end{definition}

Let $i, j \in \mathbb{N}$ We call $P$ a left (right) probe if $P$ is grown from $\leftcomp/^S_{i_j}$ ($\rightcomp/^S_{i_j}$) and there exists $P' \sqsubseteq \alpha^S$ such that $P'$ grows from $\rightcomp/^S_{i_j}$ ($\leftcomp/^S_{i_j}$) and $P$ and $P'$ are a distance of $1$ apart.  Throughout this paper, we assume $\alpha^S$ is the single terminal assembly of $\calS$.  

\subsection{Path-crossing subconfigurations}
We now show that in a directed system if a subconfiguration $\gamma$ ``grows through'' another subconfiguration $\gamma'$, then the subconfiguration $\gamma'$ can grow the portion of $\gamma$ that assembles after $\gamma$ ``grows through'' $\gamma'$.

We now begin with some definitions to allow us to more concretely define what it means for a subconfiguration to grow another subconfiguration and what it means for a subconfiguration to grow through another subconfiguration.  As we will see, the intuitive notion of a subconfiguration $\gamma$ growing a subconfiguration $\gamma'$ means that there exists a path in the directed binding graph from $\gamma$ to $\gamma'$.  Also, we will see that the idea of a subconfiguration $\gamma$ growing through a subconfiguration $\gamma'$ means that all paths in the binding graph from $\gamma$ ``cut'' through the subconfiguration $\gamma'$.

\begin{definition}
Let $\calT$ be a directed TAS such that $\vec{\alpha}$ is an assembly sequence of $\calT$.  We now define the directed binding graph of an assembly sequence $\vec{\alpha}$ which we denote by $G_{\vec{\alpha}}$.  The vertices of the directed binding graph $G_{\vec{\alpha}}$ are tiles in $res(\vec{\alpha})$ and there is an edge from tile $t$ to tile $t'$ in $G_{\vec{\alpha}}$ provided that a glue on $t$ serves as an input glue to $t'$.
\end{definition}

\begin{definition}
Let $\calT$ be a directed TAS such that $\vec{\alpha}$ is an assembly sequence of $\calT$.  Also let $\alpha \in \termasm{\calT}$ and let $\gamma \sqsubseteq \alpha$.  Let $p$ be a path from a tile $t$ to a tile $t'$ in $G_{\vec{\alpha}}$ such that $p$ contains tiles that belong to a connected subconfiguration $\gamma$.  Then we say that the path $p$ cuts $\gamma$ provided that the subgraph of $G_{\vec{\alpha}}$ which contains only tiles in $\gamma$ is disconnected when the vertices in $p$ are removed.
\end{definition}

\begin{figure}[htp]
\begin{center}
\includegraphics[width=4.0in]{}
\caption{A schematic diagram of the assemblies and subconfigurations discussed in Lemma~\ref{lem:independent-subassemblies}.  Part (a) shows the subconfigurations $\gamma_S$, $\gamma_M$, $\gamma_N$ and the terminal assembly $\alpha$.  Part (b) shows a schematic representation of the assembly $\gamma$ from condition (1) of the lemma statement.  Part (c) shows a schematic representation of all the paths in the binding graph of $G_{\vec{\alpha}}$ from tiles in $\gamma_S$ to $\gamma_N$. Part (d) shows the producible assembly $\alpha'$ in the conclusion of the statement.}
\label{fig:digraphCrossing}
\end{center}
\end{figure}

Figure~\ref{fig:digraphCrossing} shows the schematic representation of the conditions listed in the statement of Lemma~\ref{lem:independent-subassemblies} and it's conclusion.  Intuitively, the first condition of Lemma~\ref{lem:independent-subassemblies} states that the growth of $\gamma_M$ is not dependent on the growth of $\gamma_S$.  Consequently, in our schematic representation, this means that there exists an assembly which looks like the one shown in part (b) of Figure~\ref{fig:digraphCrossing}.  The second condition of the lemma statement says that there exists some assembly sequence such that $\gamma_N$ grows independently of $\gamma_S$ or $\gamma_S$ always grows through $\gamma_M$ to grow $\gamma_N$.  This is represented schematically in part (c) of Figure~\ref{fig:digraphCrossing}.  The result of Lemma~\ref{lem:independent-subassemblies} is that $\gamma_N$ can be grown without growing $\gamma_S$ which is represented schematically in part (d) of Figure~\ref{fig:digraphCrossing}.

\begin{lemma}\label{lem:independent-subassemblies}
Let $\calT$ be a directed TAS and let $\alpha \in \termasm{\calT}$.  If $\gamma_N, \gamma_M, \gamma_S \sqsubseteq \alpha$ are subconfigurations such that
\begin{enumerate}
\item there exists $\gamma \in \prodasm{\calT}$ such that $\gamma_M \sqsubseteq \gamma$ and for all $x \in \dom(\gamma_S)$, $x \not\in \dom(\gamma)$, and
\item there exists $\vec{\alpha}$ of $\calT$ such that $\res{\vec{\alpha}} = \alpha^*$ where $\gamma_N, \gamma_M \sqsubseteq \alpha^*$ and for all paths $p$ in $G_{\vec{\alpha}}$ from a tile in $\gamma_S$ to a tile in $\gamma_N$, $p$ cuts $\gamma_M$,
\end{enumerate}
then there exists $\alpha' \in \prodasm{\calT}$ such that $\gamma_M, \gamma_N \sqsubseteq \alpha'$ and for all $x \in \dom(\gamma_S)$, $x \not\in \dom(\alpha')$.
\end{lemma}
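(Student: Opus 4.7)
The plan is to construct $\alpha'$ by starting from the producible assembly $\gamma$ furnished by hypothesis (1)---which already contains $\gamma_M$ and avoids $\gamma_S$---and then appending, in a suitable topological order, a carefully chosen subset of the tiles placed by the assembly sequence $\vec{\alpha}$ from hypothesis (2). Concretely, the subset I will append consists of $\gamma_N$ together with every tile that is an ancestor of $\gamma_N$ in the directed binding graph $G_{\vec{\alpha}}$ along a path that avoids $\gamma_M$.

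Reading the cut condition in the natural graph-theoretic way, hypothesis (2) says that in $G_{\vec{\alpha}}$ every directed path from a vertex of $\gamma_S$ to a vertex of $\gamma_N$ must include some vertex of $\gamma_M$; equivalently, deleting the vertices of $\gamma_M$ from $G_{\vec{\alpha}}$ destroys every directed $\gamma_S \to \gamma_N$ path. Let
\[
U \;=\; \dom \gamma_M \;\cup\; \{\, t \in \res{\vec{\alpha}} : \text{a directed path in } G_{\vec{\alpha}}\setminus \gamma_M \text{ leads from } t \text{ to } \gamma_N \,\},
\]
so that $\dom \gamma_M \cup \dom \gamma_N \subseteq U$. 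The argument then rests on two observations. First, $U \cap \dom \gamma_S = \emptyset$: tiles of $\gamma_M$ avoid $\gamma_S$ by hypothesis (1), and any other element of $U$ lying in $\gamma_S$ would yield, by definition of $U$, a $\gamma_S$-to-$\gamma_N$ path in $G_{\vec{\alpha}} \setminus \gamma_M$, contradicting the cut condition. Second, for any $t \in U$, every directed predecessor $t_0$ of $t$ in $G_{\vec{\alpha}}$ lies in $U \cup \dom \gamma$: if $t_0 \in \gamma_M$ then $t_0 \in \gamma$ by hypothesis (1); otherwise the edge $t_0 \to t$ can be prepended to the path witnessing $t \in U$, so $t_0 \in U$.

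I will then order the (finite, since $\gamma_N$ is finite) set $U \setminus \dom \gamma$ by some topological order of $G_{\vec{\alpha}}$ restricted to $U$, and attach the tiles one by one to $\gamma$. The second observation guarantees that, when each tile is attached, every predecessor that supported it in $\vec{\alpha}$ is already present---either because it belonged to $\gamma$ from the start, or because it was inserted earlier in the topological order---so the strength-$\tau$ glue configuration that stabilized that tile during $\vec{\alpha}$ is reproduced and the attachment is legal. The positions are consistent with $\gamma$ because $\gamma$ and $\res{\vec{\alpha}}$ are both subassemblies of the unique terminal $\alpha$. The resulting assembly $\alpha' \in \prodasm{\calT}$ contains both $\gamma_M$ and $\gamma_N$ and, by the first observation combined with hypothesis (1), intersects $\dom \gamma_S$ nowhere.

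The main obstacle is pinning down the proper interpretation of the cut condition; once one reads it as forcing every $\gamma_S \to \gamma_N$ path in $G_{\vec{\alpha}}$ to traverse $\gamma_M$, the rest is a clean DAG argument. A secondary subtlety is that an inserted tile may have $\gamma_S$-neighbors in the full terminal assembly $\alpha$, but this is harmless: stable attachment at insertion time only requires some strength-$\tau$ subset of neighbors to be present, and I reproduce exactly the subset that sufficed during $\vec{\alpha}$; any missing $\gamma_S$-neighbors simply never get the chance to attach in $\alpha'$.
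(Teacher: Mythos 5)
Your proof is correct. The paper's construction (Algorithm~\ref{alg:cross}) also begins by building $\gamma$ and then replays a subset of the tile attachments of $\vec{\alpha}$, so the basic skeleton is shared; what differs is the criterion for selecting which tiles to replay. The paper walks $\vec{\alpha}$ in its original order and discards a tile $t$ whenever some $\gamma_S$-to-$t$ path in $G_{\vec{\alpha}}$ fails to cut $\gamma_M$ --- a backward, $\gamma_S$-based filter. You instead build the set $U$ forward from the goal, taking $\dom\gamma_M$ together with all ancestors of $\gamma_N$ in $G_{\vec{\alpha}}\setminus\gamma_M$, and append $U\setminus\dom\gamma$ in a topological order of $G_{\vec{\alpha}}$. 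Your characterization has a concrete advantage: $U$ is manifestly closed under taking $G_{\vec{\alpha}}$-predecessors modulo $\gamma_M$ (any predecessor of a tile in $U\setminus\dom\gamma_M$ is itself in $U$ or lies in $\gamma_M\sqsubseteq\gamma$), so the legality of every attachment step in the replay is nearly immediate. The paper needs the analogous predecessor-closure fact but leaves it implicit in the terse claim that ``the algorithm ensures all of $t$'s input glues are present.'' Finally, your reading of hypothesis (2) as ``every directed $\gamma_S\to\gamma_N$ path in $G_{\vec{\alpha}}$ meets $\gamma_M$'' is a benign weakening of the paper's definition of ``cuts'' (which additionally asks that $p$ disconnect $\gamma_M$'s induced subgraph); since the weaker reading is all that either argument actually uses, your proof in fact establishes the lemma under slightly weaker hypotheses.
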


\begin{proof}
Let the hypotheses hold.  We now construct an assembly sequence $\vec{\beta}$ which contains an assembly $\alpha'$ such that $\gamma_M, \gamma_N \sqsubseteq \alpha'$ and for all $x \in \dom(\gamma_S)$, $x \not\in \dom(\alpha')$.  Let the assembly sequence $\vec{\gamma}^{i=k}_{i=0}$ be an assembly sequence in $\calT$ such that $res(\vec{\gamma}) = \gamma$.  We construct $\vec{\beta}$ by passing $\vec{\alpha}$ and $\vec{\gamma}$ as arguments to Algorithm~\ref{alg:cross} and store the output of the algorithm in $\vec{\beta}$.
\begin{algorithm}
\caption{An algorithm for constructing $\vec{\beta}$.~\label{alg:cross}}
	\KwIn{$\vec{\gamma} = (\gamma_0, \gamma_1, ... , \gamma_k)$, $\vec{\alpha} = (\alpha_0, \alpha_1, ...)$}
	\KwOut{$\vec{\beta}=(\beta_0, \beta_1, ... )$}
	\For{$i \in [0, k]$}{
        $\beta_i \defeq \gamma_i$\;
	}
    \For{$i \in [1, |\vec{\alpha}|]$}{
        $t \defeq \dom(\alpha_i) \setminus \dom(\alpha_{i-1})$\;
        \If{$t \not\in \dom(res(\beta))$ and $t \not\in \dom(\gamma_S)$}{
		  $\beta_{k+i} \defeq \beta_{k+i-1}+t$\;
	    }
        unless there exists a path from $\gamma_S$ to $t$ in $G_{\vec{\alpha}}$ which does not cut $\gamma_M$\;
	}
\end{algorithm}
Note that for all $\beta_i \in \vec{\beta}$, $\beta_i \longrightarrow^t \beta_{i+1}$ is valid since 1) the algorithm ensures that $t \not\in \beta_i$ and 2) $t$ attaches with strength $\tau$ since the algorithm ensures all of $t's$ input glues are present.  Also note that since it is assumed that for any tile $t$ in $\gamma_N$ all paths from $\gamma_S$ to $t$ cut $\gamma_M$, there exists an assembly $\alpha'' \in \vec{\beta}$ such that $\gamma_M \sqsubseteq \alpha''$, $\gamma_N \sqsubseteq \alpha''$ and for all $x \in \dom{\gamma}_S$, $x \not\in \dom(\alpha'')$.
\end{proof}

\subsection{Necessity of Probes}

We say a macrotile is an $L$-macrotile or an $R$-macrotile if, under the representation function $R$, the macrotile maps to either a $0_L$ or $1_L$ tile type or either a $0_R$ or $1_R$ tile type respectively. Additionally, we say a macrotile is a bit-alley macrotile if the macrotile maps to either a $1_M$, $1_B$, $0_M$, or $0_B$ tile under the representation function. Finally, we call the $m\times (m+4)$ region which consists of the macrotile region between $L$-macrotiles and $R$-macrotiles extended by two tile widths to the west and east the \emph{probing region}.

\begin{figure}[htp]
\begin{center}
\includegraphics[width=4.0in]{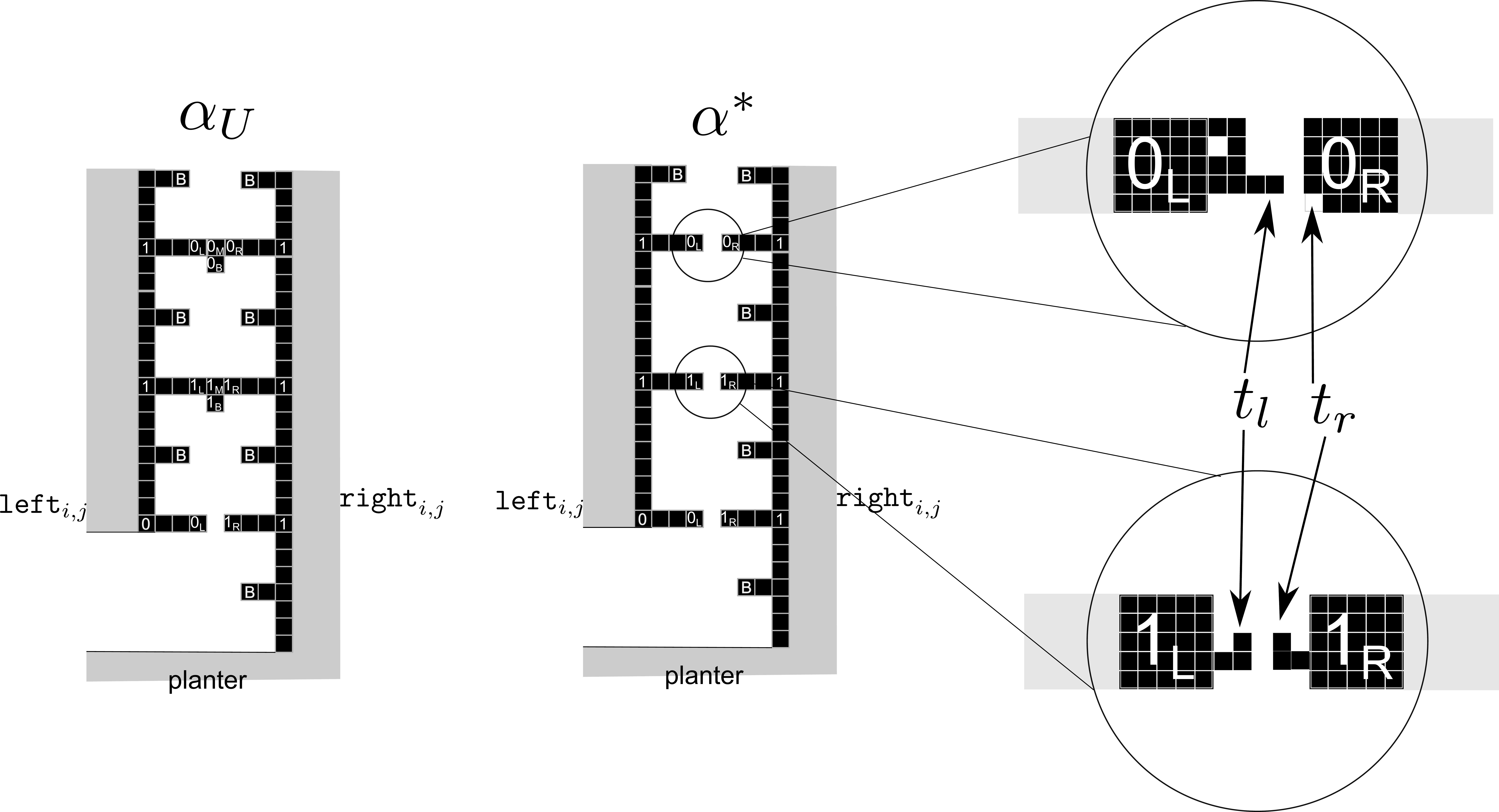}
\caption{An example of subconfigurations of assemblies discussed in Lemma~\ref{lem:gap-must-exist} along with the tiles $t_l$ and $t_r$ also discussed in the lemma.  This left part of this figure shows the \bitalley/ subconfiguration of subiteration $i, j$ in $\alpha^S \in \termasm{\mathcal{S}}$.  The right part of this image shows the \bitalley/ subconfiguration of subiteration $i, j$ in $\alpha^* \in \prodasm{\mathcal{S}}$ and also shows a zoomed in view which shows examples of the tiles $t_l$ and $t_r$ as described in the lemma statement.}
\label{fig:probe-gap}
\end{center}
\end{figure}

Next we show a lemma which intuitively says that for a non-empty subiteration, subiteration $i,j$ say, a valid simulation of $\calT$, it must be the case that simulated probes on the right side of $\leftcomp/^S_{i,j}$ and the simulated probes on the left side of $\rightcomp/^S_{i,j}$ must in fact cooperate. Referring to Figure~\ref{fig:probe-gap}, in order for $\mathcal{S}$ to simulated $\calT$, the macrotiles $0_L$ and $0_R$ depicted in the figure must assemble probes that come within one tile of each other to allow for cooperation across the simulated bit-alley. The high-level idea is that the macrotile $0_M$ cannot assemble to represent a non-empty tile of $T$ until the macrotiles $0_L$ and $0_R$ have assembled. Therefore, the macrotiles $0_L$ and $0_R$ must ``coordinate''. Moreover, this coordination cannot be the result of growing a path through the macrotile regions corresponding to simulations of bumper tiles.

\begin{lemma}\label{lem:gap-must-exist}
Let $i, j \in \mathbb{N}$.  There exists $\alpha^* \in \prodasm{\mathcal{S}}$ such that 1) $\leftcomp/_{i,j}$ and $\rightcomp/_{i,j}$ modules are grown without assembling any bit-alley macrotiles, and 2) if $\alpha^S$, the terminal assembly of $\mathcal{S}$,  contains a bit-alley macrotile in the probing region $R$, then $\alpha^*$ contains the tiles $t_l$ and $t_r$ in the probing region $R$ which grow from an $L$ macrotile and $R$ macrotile respectively such that $t_l$ and $t_r$ are at most one tile width apart.
\end{lemma}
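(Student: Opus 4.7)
The plan is to construct $\alpha^*$ as a maximal producible assembly of $\mathcal{S}$ in which $\leftcomp/^S_{i,j}$ and $\rightcomp/^S_{i,j}$ are fully grown while no bit-alley macrotile has yet been defined at the macrotile location inside $R$, and then to show that whenever $\alpha^S$ eventually contains a bit-alley macrotile there, the next tile attachment beyond $\alpha^*$ must be cooperative between $L$- and $R$-originating probes, forcing $\alpha^*$ to already contain such a pair of probes within one tile of each other. To build $\alpha^*$, I first apply the simulation definition: in $\calT$ there is an assembly sequence producing an assembly $\gamma$ in which subiteration $i,j$ has $\leftcomp/_{i,j}$ and $\rightcomp/_{i,j}$ complete but all bit-alley tiles $0_M, 1_M, 0_B, 1_B$ omitted, since those tiles attach only by cooperation across the gap and their attachment can be deferred by the nondeterministic order of placements. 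The clauses $\mathcal{S} \models_R \calT$ and $\calT \dashv_R \mathcal{S}$, together with the clean-mapping condition, yield a producible $\alpha^*_0 \in \prodasm{\mathcal{S}}$ with $R^*(\alpha^*_0) = \gamma$; hence no bit-alley macrotile has yet entered $\dom R$ at the target location inside $R$. Since $\mathcal{S}$ is assumed directed, $\alpha^*_0 \sqsubseteq \alpha^S$, so I extend $\alpha^*_0$ greedily inside $\alpha^S$ by every additional tile whose attachment does not bring the bit-alley supertile into $\dom R$, and let $\alpha^*$ be any maximal such extension.

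Next, assume $\alpha^S$ contains a bit-alley macrotile at the macrotile location inside $R$. Pick an assembly sequence from $\alpha^*$ to $\alpha^S$ and let $t^*$ be the first tile attached whose addition brings that supertile into $\dom R$. By maximality of $\alpha^*$, the site of $t^*$ must lie inside the $m\times m$ bit-alley macrotile region (otherwise $t^*$ could have been absorbed into $\alpha^*$ without defining a bit-alley macrotile, contradicting maximality). Consequently, every neighbor of $t^*$'s site is either already a tile of $\alpha^*$ or is vacant, so the glues enabling $t^*$ to attach with combined strength at least $\tau'$ must come entirely from neighbors that already lie in $\alpha^*$.

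The crucial step is to prove that these contributing neighbors include at least one $L$-originating tile and at least one $R$-originating tile. Suppose, toward contradiction, that every glue-contributing neighbor of $t^*$ traces through paths inside $\alpha^*$ back to the $L$-macrotile only (the symmetric case is identical). For the same iteration $i$, choose another subiteration $(i,j')$ whose $R$-output bit at the corresponding vertical offset differs from that of $(i,j)$, so that in $\calT$ no bit-alley tile is placed at the analogous location of $(i,j')$. Using the simulation together with Lemma~\ref{lem:independent-subassemblies} to transplant the relevant $L$-side history of $\alpha^*$ into a producible $\beta^* \in \prodasm{\mathcal{S}}$ that agrees with $\alpha^*$ on all $L$-originating neighbors of $t^*$'s site, the locality of aTAM attachment lets $t^*$ attach to $\beta^*$ as well; the resulting assembly then contains a bit-alley macrotile whose $\calT$ preimage is bit-alley-empty, contradicting the equivalent-productions clause of $\mathcal{S} \Leftrightarrow_R \calT$. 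Hence $t^*$ has at least one $L$-originating neighbor $t_l$ and one $R$-originating neighbor $t_r$ in $\alpha^*$, and since both are grid-adjacent to the single site of $t^*$ they lie in $R$ at most one tile width apart.

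The main obstacle I anticipate is making the transplant step in the third paragraph fully rigorous: the $L$-side simulator realizations in subiterations $(i,j)$ and $(i,j')$ need not be tile-identical, since several distinct macrotile tilings of $U$ may represent the same $\calT$-tile. Handling this will require either a careful application of Lemma~\ref{lem:independent-subassemblies} to reuse the exact $L$-side subassembly of $\alpha^*$ across the two subiterations, or a case analysis showing that every alternate $L$-side realization which still supports $t^*$'s attachment forces the same inconsistency with the representation function. Modulo this step, the rest of the argument is driven entirely by the maximality of $\alpha^*$ and the locality of $\tau'$-stable tile attachment in the aTAM.
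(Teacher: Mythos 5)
Your approach is genuinely different from the paper's, and it has a real gap that the paper's own argument avoids. The paper's proof stays entirely within the single subiteration $(i,j)$: it first proves (its Claim~\ref{obs:nogap}) that in $\beta'$, every probing region containing a bit-alley macrotile must contain a glue-connected path of tiles spanning west to east, by observing that otherwise there is a zero-strength cut separating the region, so any left-to-right path must detour through the $\bumper/$ macrotile regions; then, by choosing an assembly sequence of $\calT$ in which the bumper tiles attach first and invoking Lemma~\ref{lem:independent-subassemblies}, it shows that this detour path would permit a macrotile of $\eta$ (mapping to $0_B$ or $1_B$) to attach \emph{before} the $R$-macrotile exists, placing a tile outside all fuzz regions and contradicting valid simulation. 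The single-tile gap in $\alpha^*$ is then obtained by rewinding $\beta'$ inside each probing region. You instead fix a maximal $\alpha^*$ and analyze the first tile $t^*$ attached afterward, then try to show $t^*$ has both $L$- and $R$-originating neighbors by transplanting the $L$-side history of subiteration $(i,j)$ into another subiteration $(i,j')$.

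The gap you flag in your final paragraph is the fatal one, and it is worse than you suggest. The cross-subiteration transplant is not merely a matter of a ``careful application'' of Lemma~\ref{lem:independent-subassemblies}: the simulator is under no obligation to realize the $\leftcomp/$ macrotiles of subiterations $(i,j)$ and $(i,j')$ with the same tiles of $U$, and in fact the \planter/ rows to the west of each carry different $j$-dependent contents, so the lower boundary conditions feeding the two $\leftcomp/^S$ regions differ. Ruling out a $j$-dependent $L$-side realization is precisely what the paper's separate signature argument (Lemma~\ref{lem:noSneak}) does, and it only succeeds for an infinite \emph{subset} of iterations, not for every $(i,j)$ as your lemma requires. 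Your argument also never addresses the possibility the paper explicitly handles: a path of tiles from the $L$-macrotile to the $R$-macrotile that does \emph{not} cross the probing region at all but runs through the bumper macrotile regions. Such a path could supply the glues for $t^*$ without $t^*$ having both an $L$- and $R$-originating neighbor at unit distance, so even your intermediate conclusion about $t^*$'s neighbors does not follow from maximality and locality alone. You should replace the transplant step with the in-subiteration cut-and-bumper argument that the paper uses, which requires only the fuzz constraint and Lemma~\ref{lem:independent-subassemblies} and needs no comparison across subiterations.
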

\begin{proof}

Let $C$ denote the $(i,j)$ subconfiguration. Let $\beta_{i,j}$ be an assembly in $\prodasm{T}$ such that $C$ contained in $\beta_{i,j}$ has assembled to where every tile of type $1_M$, $1_B$, $0_M$, or $0_B$ that can bind has. Now, suppose that no tile of type $B$ in this subconfiguration has attached. Refer to Figure~\ref{fig:gap-overview} for a depiction of these tiles types and the tile locations where they bind in a subconfiguration. Under the assumption that $\mathcal{S}$ simulates $\mathcal{T}$, it must that there exists $\beta'$ in $\prodasm{S}$ such that $R^*(\beta') = \beta$. We now use $\beta'$ to construct $\alpha^*$. First, we have the following observation.

\begin{claim}\label{obs:nogap}
For each probe region $R$ of $\beta'_{i,j}$ that contains a bit-alley macrotile, there must be a path of adjacent tiles with matching glues along their adjacent edge which starts with a tile at a westernmost location in $R$ and ends with a tile at an easternmost location of $R$.
\end{claim}

We prove Claim~\ref{obs:nogap} by contradiction. Therefore, suppose that there exists a probe region $R$ in $\beta'_{i,j}$ that contains a bit-alley macrotile such that there is no path of adjacent tiles with matching glues along their adjacent edge which starts with a tile on the west edge of $R$ and ends with a tile on the east edge of $R$. Let $\mu$ and $\eta$ be the macrotiles subassemblies of $\beta'_{i,j}$ that map to $1_M$ or $0_M$ and $1_B$ or $0_B$ respectively. Note that no tile of $\eta$ can bind until the macrotile region $M$ above $\eta$ maps to $1_M$ or $0_M$ under $R$, for otherwise $\mathcal{S}$ is not a valid simulation.  Moreover, $M$ must map to the empty tile until the macrotile to west of $M$ maps to either a $0_L$ or $1_L$ tile type and the macrotile to the east of $M$ maps to either a $0_R$ or $1_R$ tile type.

Under the assumption that there is no path of adjacent tiles with matching glues along their adjacent edge which starts with a tile at a westernmost location of $R$ and ends with a tile at an easternmost location of $R$, it must be the case that there is a cut $v$ of $R$ such that the sum of all glue strengths corresponding to glues of adjacent tiles on each side of this cut is zero.

Therefore, in any assembly sequence of $\beta'_{i,j}$ there must be a path $p$ of adjacent tiles starting from a tile in the $L$-macrotile and ending with a tile in the $R$-macrotile. $p$ cannot cross the cut $v$ since the sum of all glues along this cut is zero.
Hence, $p$ must contain tiles in macrotile regions that map to $B$ tile types or map to tile types of the \leftcomp/ or \rightcomp/ module that adjacent to $B$ tile types.
In any case, now consider the assembly sequence in $\mathcal{T}$ where tiles of type $B$ always bind before tiles of type $0_R$, $1_R$, $0_L$, or $1_L$.
In the simulation of this sequence, it must be the case that the macrotile regions mapping to $B$ tile types or to a tile adjacent to a $B$ tile type contain enough tiles of $p$ to assemble the portion of $p$ starting from the macrotile regions mapping to $B$ tile types or to a tile adjacent to a $B$ tile type and ending with a tile of $\eta$. This follows from Lemma~\ref{lem:independent-subassemblies}. This portion of the path $p$ either contains a tile in an $L$-macrotile or an $R$-macrotile. Suppose this portion of the path $p$ contains a tile in an $L$-macrotile (the $R$-macrotile case is similar).
Then, note that this portion of the path can assemble even in the absence of any tiles of the $R$-macrotile. Therefore, there is an assembly sequence of $\mathcal{S}$ such that a tile of $\eta$ binds before any tiles of the $R$-macrotile bind, which is before $M$ can map to $1_M$ or $0_M$. This violates valid simulation of $\mathcal{T}$ by $\mathcal{S}$ since the tile of $\eta$ that binds is outside of any fuzz region.  Therefore, Claim~\ref{obs:nogap} holds.

To finish the proof of Lemma~\ref{lem:gap-must-exist}, start with the assembly $\beta'$. By Claim~\ref{obs:nogap}, it must be the case that for each probing region $R$, there is no strength zero cut of $R$ separating tiles of $\beta'|_R$. Hence, we can obtain $\alpha^*$ as follows. For each probing region $R$, if a single tile wide gap does not exist in a probing region $R$, remove a tile from the subassembly of $\beta'$ contained in $R$ in such a way that the resulting assembly is still valid until there is a single tile wide gap. This tile removal corresponds to ``rewinding'' the assembly sequence of $\beta'$ in the region $R$ just to the point were there is a single tile wide gap between tiles stably attached to tiles of an $L$-macrotile and tiles stably attached to tiles of an $R$-macrotile.
\end{proof}

\subsection{Narrowing down the outputs of a set of Turing machines}
Let $(M_i)^{i=n}_{i=0}$ be a sequence of Turing machines which are almost everywhere hard for any TM which uses less space than $T(n)$.  The next lemma essentially says that if there is a Turing machine $M'$ which outputs a set of strings such that there is a string $w$ in the set such that the $i^{th}$ bit of $w$ corresponds to the output of TM $M_i$ on input $x$, then the TM $M'$ must use either at least as much space as the least space complex TM $M_i$ or it uses at least as much space as the most complex TM $M_j$ depending on the strings in the set.

Lemma~\ref{lem:tech-tm-constant} is used in Section~\ref{sec:contradiction}.  In Section~\ref{sec:contradiction}, we must show that in some empty subiteration $i,j$, the modules $\leftcomp/^S_{i,j}$ and $\rightcomp/^S_{i,j}$ cannot somehow narrow down the arms which are going to be grown from the $\topcomp/_{i,j}$ and grow the probes in the \bitalley/ region so that the probes which they grow are ``consistent'' with the arm which will grow from $\topcomp/_{i,j}$.  In Section~\ref{sec:contradiction} we show that it is impossible for the adversary to narrow down the arms which will grow from $\topcomp/_{i,j}$ to a set of $t$ arm types.  We use Lemma~\ref{lem:tech-tm-constant} to show this.  Intuitively, this lemma says it is impossible for the adversary to narrow down the arms which can grow for the following reasons.  First note, that there are $2^t$ possible arm types that can grow from $\topcomp/_{i,j}$ each of which represents the output of the series of $B$ computations on input $i$.  So, we can think of this set of arm types as a set of strings which correspond to the output of the series of $B$ computations on input $i$, and we denote this set by $E$.  Also, recall that the series of $B$ computations require space $O(3^i)$ and the $\leftcomp/^S_{i,j}$ module has space $O(2^i)$.  Suppose the adversary can narrow down the arm type grown from the $\topcomp/^S_{i,j}$ module to $t$ choices (implying $|E| = t$). If there exists a bit position $k$ such that all the strings in set $E$ agree on bit position $k$, then the adversary knows the output of machine $B_k$ on input $i$, which contradicts the space complexity of the language decided by $B_k$.  So it must be the case that for bit position $k$ in every $X^0 \in E$ there exists a string $X^1 \in E$ so that $X^0$ and $X^1$ disagree on bit position $k$.  But if this is the case, note that the adversary could then run at most the first $|E|$ $B$ computations to prune his set $E$ down to a single string which must contain the answer to the last computation $B_H$ which requires asymptotically more space.  This implies the adversary is able to recognize the language decided by $B_H$ in less space than required which is impossible.  So, it must be the case he can't narrow down the set of arms that can grow from $\topcomp/^S_{i,j}$ to $t$ choices.

The preceding discussion highlighted the main idea of Lemma~\ref{lem:tech-tm-constant} which we sate and prove now.

\begin{lemma} \label{lem:tech-tm-constant}
Let $T(n) \in o(T'(n))$.  Also, let $(M_i)^{i=h}_{i=0}$ be a sequence of Turing machines which decide the languages $L_i$ such that for all $0 \leq i < h$, $L_i \in DSPACE(T(n))$ and cannot be recognized in i.o best case complexity $H(n)$ where $H(n) \in o(T(n))$, and $L_h \in DSPACE(T'(n))$ and cannot be recognized in i.o best case complexity $H'(n)$ where $H'(n) \in o(T'(n))$.  Suppose $M'$ is a Turing machine which on input $x$ outputs a set of $m < h$ strings $E\subset \{0,1\}^h$ such that there exists $X \in E$, so that the $i^{th}$ bit of $X$ is $M_i(x)$.  Then for almost all $x \in \{0, 1\}^*$, 1) if there exists $0 \leq k < h$ such that all strings output by $M'$ on input $x$ have the same value at bit position $k$, $M'$ requires at least space $T(n)$ or 2) if there does not exist such a $k$, $M'$ on input $x$ requires space $T'(n)$.
\end{lemma}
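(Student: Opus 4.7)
The plan is to prove each case by contradiction, in both cases constructing an explicit Turing machine that decides the corresponding language using asymptotically less space than the hypothesis allows, thereby violating the stated lower bound. Throughout, the promise that $X^* \in E$, where $X^*$ denotes the ``correct'' string whose $i$-th bit is $M_i(x)$, is the key hook for turning low-space output behavior of $M'$ into a low-space decider for one of the hard languages.

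For Case 1, I would assume for contradiction that there are infinitely many inputs $x$ satisfying Case 1's premise on which $M'$ uses space in $o(T(n))$. Since there are only $h$ possible witnessing bit positions, pigeonhole yields a fixed $k^* \in [0, h)$ such that, for infinitely many of these inputs, all strings of $M'(x)$ agree on bit $k^*$. Then define $N_{k^*}$: on input $x$, simulate $M'(x)$, and if the resulting set $E$ has all strings agreeing on bit $k^*$, output that common value; otherwise fall back to running $M_{k^*}(x)$ directly. Correctness follows from the promise $X^* \in E$, since the $k^*$-th bit of $X^*$ is $M_{k^*}(x)$, so any unanimous $k^*$-th bit in $E$ must equal $M_{k^*}(x)$. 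On the pigeonholed input set the space used by $N_{k^*}$ is bounded by the space of $M'$ plus $O(mh)$, which lies in $o(T(n))$, so $L_{k^*}$ would be recognized in infinitely often best-case space in $o(T(n))$, contradicting the hypothesized lower bound on $L_{k^*}$.

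For Case 2, I would assume for contradiction that $M'$ uses space in $o(T'(n))$ on infinitely many inputs $x$ satisfying Case 2's premise. The plan is to design $N$ that uses $M'$ together with adaptive filtering by the easy machines to pin down $X^*$ and recover $M_h(x)$. Specifically, $N(x)$ simulates $M'(x)$ to obtain $E$, and then in each round picks a bit position $k \in [0, h)$ at which the current $E$ exhibits disagreement, simulates $M_k(x)$, and discards every string of $E$ whose $k$-th bit disagrees with $M_k(x)$. The correct string $X^*$ survives every round, and each round strictly shrinks $E$; combined with $|E| = m < h$ and the initial disagreement on every position in $[0, h)$ supplied by Case 2, this should force the procedure to terminate with $E = \{X^*\}$, at which point $N$ outputs the $h$-th bit of the sole surviving string as $M_h(x)$. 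The total space used by $N$ is bounded by $\max$ of $M'$'s space, $T(n)$ for a single easy machine simulation, and $O(mh)$ for bookkeeping on $E$, which lies in $o(T'(n))$ since $T(n) \in o(T'(n))$, contradicting the lower bound on $L_h$.

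The main obstacle I anticipate is rigorously establishing the Case 2 collapse of $E$ to a singleton. The concern is that adaptive filtering could stall at a configuration with two strings in $E$ that agree on every easy bit but differ precisely on the hard bit $h$, at which point no further easy-bit filter can break the tie. Overcoming this will require a careful structural argument showing either that such a stalled configuration is incompatible with the combination of Case 2's blanket disagreement hypothesis and $m < h$ (perhaps by tracking how the filtering order interacts with the initial disagreement profile), or alternatively that arriving at such a configuration forces $M'$ itself to have performed computation requiring space $T'(n)$ in order to have produced that particular output set while still honoring $X^* \in E$. This delicate interplay between the combinatorial profile of $E$ and the information-theoretic demands of guaranteeing $X^* \in E$ under a small-space constraint on $M'$ is the technical heart of the argument.
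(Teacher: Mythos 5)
Your proposal follows the paper's proof closely in structure for both parts. For part~1, the pigeonhole over bit positions followed by the construction of the low-space decider $N_{k^*}$ (which the paper calls $M^*$) is exactly the paper's argument. For part~2, sequentially filtering $E$ with the easy machines and then outputting the hard bit of the survivor is also exactly the paper's strategy, so in outline the two arguments coincide.

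The obstacle you flag at the end is real, and the paper's own write-up does not actually resolve it. The paper simply asserts that running at most $|E|$ of the easy machines leaves a single string; this would require each filtering round to discard at least one string, but the Case~2 premise only guarantees a disagreement at each bit position within the \emph{original} set $E$, not within the shrunk sets obtained along the way, so a round can be vacuous.

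Unfortunately neither of the two remedies you propose appears available. The stalled configuration \emph{is} compatible with the Case~2 premise and $m < h$ once $m \ge 3$: let $X^*$ be the correct string, let $\bar{X}^*$ agree with $X^*$ on every easy bit but differ at the hard bit, and let $C$ be the bitwise complement of $X^*$ on the easy bits. Then $E = \{X^*, \bar{X}^*, C\}$ has a disagreement at every easy bit (supplied by $C$), one round of filtering discards $C$, and every remaining round stalls at $\{X^*, \bar{X}^*\}$. Moreover, this $E$ is produced by an $M'$ that simply computes $M_0(x), \ldots, M_{h-1}(x)$ (possible in space $O(T(n)) \subseteq o(T'(n))$) and outputs the two hard-bit extensions together with the easy-bit complement, so reaching the stalled $E$ does not force $M'$ to spend space $T'(n)$ either. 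Closing this gap seems to require some further hypothesis on $E$ (e.g., $m \le 2$, or an explicit prohibition on $\bar{X}^*$ appearing in $E$), and one would have to verify that the lemma's intended application actually supplies such a restriction; as written, neither your argument nor the paper's establishes the claimed collapse to a singleton.
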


\begin{proof}
Assume the hypotheses, and suppose for the sake of contradiction that on infinitely many $x \in \{0, 1\}^*$, there exists $0 \leq k < h$ such that all strings output by $M'$ on input $x$ have the same value at bit position $k$, and suppose $M'$ uses space $G(n)$ where $G(n) \in o(T(n))$.  Let $0 \leq k' < h$ be the position such that for infinitely many $x \in \{0, 1\}^*$, the set of strings output by $M'(x)$ agrees on the bit position $k'$.  We know such a $k'$ exists by the pigeonhole principle.  Then on infinitely many $x \in \{0,1\}^*$, we can construct a Turing machine $M^*$ which decides the language $L_{k'}$ by first running the TM $M'$ on $x$.  Then, $M^*$ determines if all the strings output by $M'$ on $x$ agree on bit position $k'$.  If the strings do all agree on bit position $k'$, then $M^*$ outputs that bit.  Otherwise, $M^*$ runs the TM $M_{k'}$ on $x$ and outputs $M_{k'}(x)$.  Now, observe that on the infinitely many $x \in \{0,1\}^*$ where the output of $M'$ agrees on the bit position $k'$, the machine $M^*$ only uses the amount of space required by $M'$ on input $x$ which is $G(n)$.  This contradicts the assumption that $L_{k'}$ cannot be recognized in i.o best-case space complexity $H(n)$ for $H(n) \in o(T(n))$.

Now, suppose that on infinitely many $x \in \{0, 1\}^*$, the set $E$ of strings $M'$ outputs when run on $x$ is such that for every bit position $i$ in $X^0 \in E$, there exists $X^1 \in E$ such that $X^0$ and $X^1$ disagree on bit position $i$.  And once again, assume for the sake of contradiction that on such $x$, $M'$ runs in space $G(n) \in o(T'(n))$.  Then on infinitely many $x \in \{0, 1\}^*$, we construct a Turing machine $M^*$ which decides the language $L_h$ in the following manner.  First, $M^*$ runs $M'$ on input $x$ and saves the set $E$ of strings output by $M'$ on input $x$.  Now, $M^*$ runs the TM $M_i$ on input $x$ for $i < h$ and tosses out any string $Y \in E$ where the $i^{th}$ bit position of $Y$ disagrees with the output of $M_i$ on input $x$.  Note that by the assumption that for every bit position $i$ in $X^0 \in E$, there exists $X^1 \in E$ such that $X^0$ and $X^1$ disagree on bit position $i$, we need to run at most $|E|$ of the $(M_i)^{i=|E|-1}_{i=0}$ Turing machines before we are left with a single string which contains the answer to the computation $M_h(x)$.  Now, $M^*$ outputs the value of the bit position $h$ in the last string left.  By assumption, this value is equal to $M_h(x)$.  Observe that our machine can be designed to use space $F(n)$ where
\begin{eqnarray} \label{eqn:space}
F(n) &=& |E|*h + T(n +|E|)
 \end{eqnarray}
 since $|E|*h$ space is required to save the output of $M'$ and the machine $M_{|E|}$ requires space $T(|x| +|E|)$. Note that the space $M^*$ uses to compute $M_i$ can be reused to compute the TM $M_{i+1}$.  Since $|E|$ and $|H|$ are fixed, we have that the TM $M^*$ uses space $O(T(n))$ on infinitely many $x$.  This contradicts the assumption that $L_h$ cannot be recognized in i.o. best-case space complexity $H(n)$ for $H(n) \in o(T(n))$.
\end{proof}

The idea behind Observation~\ref{obs:tech-tm} is the same as that behind Lemma~\ref{lem:tech-tm-constant}, but is more generalized to allow the series of Turing machines and set of ``guesses'' to change size with the input.  This will be needed in the proof of Lemma~\ref{lem:noSneak}.

\begin{observation} \label{obs:tech-tm}
Let $A^+$ be the machine described in Section~\ref{sec:the-system} and let $c \in \mathbb{N}$.  Let $M'$ be any Turing machine such that on input $x$ $M'$ outputs a set of $\log^c(|x|)$ strings, denoted $E_x \subset \{0, 1\}^{|x|}$, such that $A^+(x) \in E_x$.  Then for almost all $x \in \{0, 1\}^*$, 1) if there exists $0 \leq k < n$ such that all strings output by $M'$ on input $x$ have the same value at bit position $k$, $M'$ requires at least space $2^{\frac{|x|}{2}}$ or 2)if there does not exist such a $k$, $M'$ on input $x$ requires space $F(n)$ where $F(n)$ is such that $|x| \log^c(|x|) + F(|x| + \log^c(|x|)) \in \Omega(2^n)$.
\end{observation}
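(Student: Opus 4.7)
The plan is to prove this observation by essentially replaying the proof of Lemma~\ref{lem:tech-tm-constant}, but with the sequence of ``inner'' Turing machines instantiated as those composing $A^+$ (namely $A$ on the inputs $2^{|x|},\ldots,2^{|x|+1}-2$ followed by $A_H$ on $2^{|x|+1}$), and with the extra bookkeeping needed to accommodate a set-size $|E_x|=\log^c(|x|)$ and a string-length $|x|$ that both vary with the input. Both clauses are proved contrapositively: in each case I would assume a too-efficient $M'$ and build a decider for one of the ``hard'' computations that violates the infinitely-often best-case space lower bounds that $L_A$ and $L_{A_H}$ enjoy by Theorem 4.1 of \cite{VeryHardLanguages}.

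For clause (1), suppose toward contradiction that there is an infinite set $S$ of inputs on which case (1) holds while $M'$ uses space $G(|x|)\in o(2^{|x|/2})$. I would build a TM $N$ which on input $x$ simulates $M'(x)$, scans the $|x|\log^c(|x|)$ bits of $E_x$ to find the smallest bit position $k_x$ on which every string of $E_x$ agrees, and outputs that common bit. Since $A^+(x)\in E_x$, the output equals the $k_x$-th bit of $A^+(x)$, which is either $A(2^{|x|}+k_x)$ or $A_H(2^{|x|+1})$. Packaging $N$ as a decider for $L_A$ (or $L_{A_H}$) on the infinite family of input lengths induced by $S$, I obtain a decider using space $G(|x|)+O(|x|\log^c|x|)=o(2^{|x|/2})$ infinitely often, contradicting the $\tfrac12 2^{n/2}$ infinitely-often best-case lower bound on $L_A$ from \cite{VeryHardLanguages}.

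For clause (2), suppose $M'$ uses space $F(|x|)$ with $|x|\log^c(|x|)+F(|x|+\log^c(|x|))\in o(2^{|x|})$. I would mirror the pruning construction from Lemma~\ref{lem:tech-tm-constant}: build $N$ that runs $M'$ (after padding $x$ to length $|x|+\log^c(|x|)$), stores $E_x$, then repeatedly picks a bit position on which the current candidate set disagrees, simulates the corresponding instance of $A$ (each requiring only $O(2^{(|x|+1)/2})$ reusable space), and discards the candidates whose bit at that position is wrong. The absence of any fully agreed-upon bit guarantees each such round strictly shrinks the candidate set, so after at most $\log^c(|x|)$ rounds a single string remains, whose last bit must equal $A_H(2^{|x|+1})$. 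The total space used is $O\!\left(|x|\log^c(|x|)+F(|x|+\log^c(|x|))+2^{(|x|+1)/2}\right)$, which under the assumption lies in $o(2^{|x|+1})$, contradicting the $\tfrac12 2^{n}$ infinitely-often best-case lower bound on $L_{A_H}$ from \cite{VeryHardLanguages}.

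The main obstacle is the clause (1) argument: unlike Lemma~\ref{lem:tech-tm-constant}, where a fixed constant $h$ let pigeonhole collapse all of the varying $k_x$'s to a single $k$, here $k_x$ can grow without bound with $|x|$, so there is no single language $L_{k'}$ to diagonalize against. I would handle this by observing that the best-case lower bound on $L_A$ is stated pointwise across infinitely many inputs rather than for a single indexed machine, so the collection $\{2^{|x|}+k_x : x\in S\}$ supplies the needed infinite family of hard $A$-instances directly; a small amount of case analysis handles the possibility that $k_x=2^{|x|}-1$ for all sufficiently large $x\in S$, in which case the same argument instead contradicts the stronger lower bound on $L_{A_H}$.
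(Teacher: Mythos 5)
Your proposal follows the same route the paper intends: it is the argument of Lemma~\ref{lem:tech-tm-constant} re-instantiated with the $A$- and $A_H$-computations composing $A^+$, and with $|E|$ and the string length $h$ now allowed to grow with the input. The paper itself disposes of this observation with only a one-sentence remark after the statement (it just records that Equation~\ref{eqn:space} becomes $|E|\cdot h + T(n+|E|)$ when $|E|$ and $h$ are non-constant), so your writeup is substantially more explicit than the paper's, not a different approach.

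Two things are worth flagging, both of which you are already partially aware of. First, your observation that pigeonhole no longer yields a fixed $k'$ when $h$ grows is a real point the paper glosses over; your resolution — that since every bit position indexes an instance of the \emph{same} hard language $L_A$ (or $L_{A_H}$ for the last bit), and the best-case lower bound of \cite{VeryHardLanguages} is quantified over all inputs at almost every length, one hard instance per length suffices — is the right idea, but as stated $N$ computes a value of $A$, it is not a decider for $L_A$. To close the gap you need to wrap $N$ in a decider: on input $y$ test whether $y = 2^{n-1}+k$ for some small $k$, enumerate $x\in\{0,1\}^{n-1}$ running $M'(x)$ to locate one with $k_x = k$, and fall back to $A(y)$ otherwise; the space accounting still goes through because the enumeration costs only $O(n)$. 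Second, the pruning argument in clause (2) — inherited verbatim from Lemma~\ref{lem:tech-tm-constant} — does not quite guarantee a unique survivor: after discarding on $A$-bits the residual set can consist of two strings agreeing on all $A$-positions and differing only on the $A_H$-bit. The fix is to make the pruning adaptive: stop as soon as the current candidate set agrees on the $A_H$ position (since $A^+(x)$ is always retained, the agreed value is correct), rather than insisting on a singleton. This is a small repair, and it applies equally to the paper's own proof of Lemma~\ref{lem:tech-tm-constant}, so it is not a flaw introduced by you.
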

Note that if $|E|$ and $h$ are not constant in Lemma~\ref{lem:tech-tm-constant}, as is the case in Lemma~\ref{lem:noSneak}, then it must be the case that $|E|*h + T(n +|E|) \in \Omega(T'(n))$ where $T(n)$ and $T'(n)$ are as defined in the statement of Lemma~\ref{lem:tech-tm-constant}.  This follows from the fact that $M^*$ decides the language $L_h$ in space $|E|*h + T(n +|E|)$ as shown in Equation~\ref{eqn:space}.

\subsection{Zig-zag assembly systems}
In \cite{SingleNegative}, a system $\mathcal{T} = (T, \sigma, \tau)$ is called a zig-zag tile assembly system provided that (1) $\mathcal{T}$ is directed, (2) there is a single sequence $\vec{\alpha} \in \mathcal{T}$ with $\termasm{\calT} = \{\vec{\alpha}\}$, and (3) for every $\vec{x} \in \dom \alpha, (0,1) \not\in$ IN$^{\vec{\alpha}}(\vec{x})$. We say that an assembly sequences satisfying (2) and (3) is a \emph{zig-zag assembly sequence}.
Intuitively, a zig-zag tile assembly system is a system which grows to the left or right, grows up some amount, and then continues growth again to the left or right.  Again, as defined in \cite{SingleNegative}, we call a tile assembly system $\mathcal{T} = (T, \sigma, \tau)$ a \emph{compact zig-zag tile assembly system} if and only if $\termasm{\calT} = \{\vec{\alpha}\}$ and for every $\vec{x} \in \dom \alpha$ and every $\vec{u} \in U_2$, $\textmd{str}_{\alpha(\vec{x})}(\vec{u}) + \textmd{str}_{\alpha(\vec{x})}(-\vec{u}) < 2\tau$. Informally, this can be thought of as a zig-zag tile assembly system which is only able to travel upwards one tile at a time before being required to zig-zag again. The assembly sequence of a compact zig-zag system is called a \emph{compact zig-zag assembly sequence}. Figure~\ref{fig:zig-zag} depicts a compact zig-zag assembly sequence. As in the definition of a zig-zag system and throughout this section, we assume that each row of a zig-zag systems binds to the north of the previous row.

\subsection{Space complexity of zig-zag systems is invariant under simulation}

In this section, we give a formal definition of a language defined by a zig-zag system. We next show that such a language can be computed in space on the order of the maximal width of the zig-zag assembly grown to a finite height. While this result is fairly straightforward, we include it for the sake of completeness and because it serves as a basic example of how we will prove the main result of this section (Lemma~\ref{lem:noCheating}). We give a formal definition of a language defined by a simulation of a zig-zag system, and Lemma~\ref{lem:noCheating} states that such a language can be computed in space on the order of the maximal width of the zig-zag assembly grown to a finite height.

Here is some of the notation used in this section. Let $\mathcal{T} = (T, \sigma, \tau)$ be a temperature $\tau$ compact zig-zag system with a seed $\sigma$ consisting of a single tile, and let $\alpha$ be an assembly in $\prodasm{T}$. Since all of the results in this section hold regardless of the location of $\sigma$, without loss of generality, throughout this section, we assume that the location of $\sigma$ is $(0,0)$. Finally, we will use the term \emph{configuration} to denote a partial function from a finite domain in $\Z^2$ to $T$, and \emph{finite configuration} when the domain of the partial function from $\Z^2$ to $T$ is finite.

\subsubsection{Computational complexity and zig-zag systems}

Let $T_1\subseteq T$ be a subset of $T$, and let $r:\N \to \{0,1\}$ be the function defined as

\[
   r(n) = \left\{
     \begin{array}{ll}
       1 & (0,n)\in \dom\alpha \text{ and } \alpha((0,n)) \in T_1\\
       0 & \text{otherwise.}
     \end{array}
   \right.
\]

\noindent Now, let $f:\N \to \N$ be the function
$$f(n) = \max\{w_j\mid w_j \text{ is the width of the } j^{th} \text{ row of } \alpha \text{ for } 0\leq j\leq n \}.$$

\noindent Finally, let $L_r = \{ n\in \N\mid r(n)=1 \}$. We call $r$ the \emph{characteristic function for} $\mathcal{T}$ given $T_1$, and $L_r$ the \emph{language defined by} $\mathcal{T}$ \emph{given} $r$. Notice that $r$ is a computable function, $f$ is a proper function, and $L_r$ is a computable set. See Figure~\ref{fig:zig-zag} for a description of how $r(n)$ is computed.

\begin{figure}[htp]
\begin{center}
\includegraphics[width=2.0in]{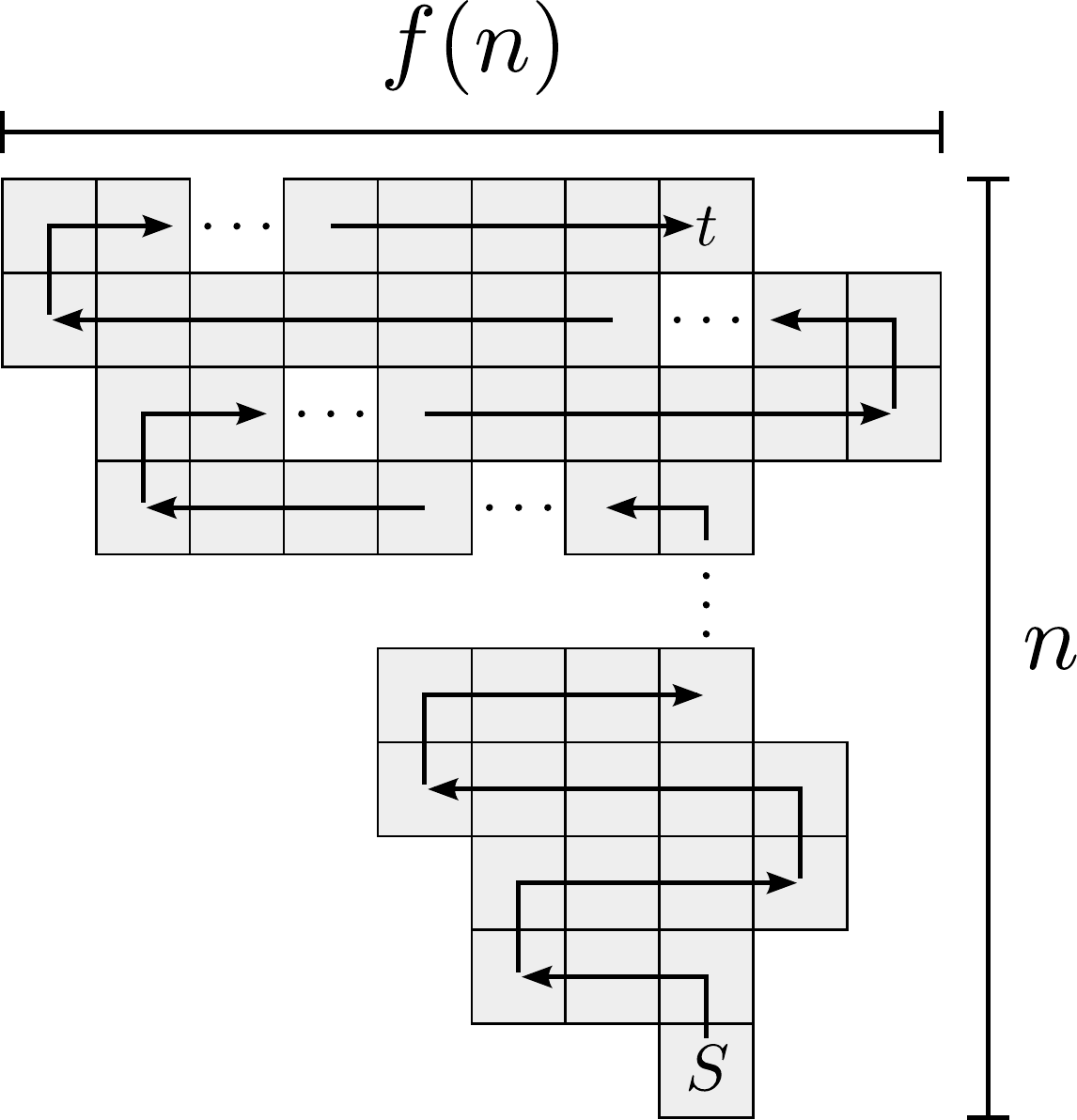}
\caption{An assembly with a zig-zag	 assembly sequence. The assembly sequence is indicated with arrows. The tile labeled $S$ makes up the seed $\sigma$, and $r(n)=1$ if and only if the tile type of the tile labeled $t$ is in $T_1$.}
\label{fig:zig-zag}
\end{center}
\end{figure}

The following lemma gives an upper bound on the space complexity of a language defined by a zig-zag system.

\begin{lemma}\label{lem:ZigZagSpaceComplexity}
Let $\mathcal{T} = (T, \sigma, \tau)$ be a zig-zag system with tile set $T$, seed assembly $\sigma$, and temperature $\tau$. Let $T_1$ be some subset of $T$, let $r$ be the characteristic function for $\mathcal{T}$ given $T_1$, and let $L_r$ be the language defined by $\mathcal{T}$ given $r$.  Finally, let $f(n)$ denote the width of the longest row of the assembly of $\mathcal{T}$ consisting of $n$ completed rows. Then, $L_r \in \DSPACE(f(n))$.
\end{lemma}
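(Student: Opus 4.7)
The plan is to exhibit a deterministic Turing machine $M$ that, given input $n$, simulates the zig-zag assembly sequence of $\mathcal{T}$ row by row and outputs $r(n)$, using only $O(f(n))$ tape cells. Since $L_r = \{ n \in \mathbb{N} \mid r(n) = 1 \}$, this suffices to establish $L_r \in \DSPACE(f(n))$.

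First, I would fix the unique zig-zag assembly sequence $\vec{\alpha}$ of $\mathcal{T}$ and exploit the defining property that for every placed tile $\vec{x} \in \dom\alpha$, the unit vector $(0,1)$ is not in $\mathrm{IN}^{\vec{\alpha}}(\vec{x})$; equivalently, every tile attaches using glues only on its south, east, and west sides. In particular, row $k+1$ of $\alpha$ is determined entirely by the north-facing glues along row $k$ together with the tile set $T$ and the temperature $\tau$: a zig-zag pass proceeds unidirectionally across row $k+1$, each new tile binding with strength $\tau$ via its south glue plus (when applicable) the east/west glue of the preceding tile in the same row. So $M$ only ever needs to hold the contents of one completed row in order to generate the next.

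The machine $M$ then operates as follows. It maintains on its work tape a representation of a single row: the leftmost $x$-coordinate of that row (in binary, using $O(\log f(n))$ bits), followed by the sequence of tile types occupying that row from left to right (using $O(\log |T|)$ bits per tile). It initializes this representation to the single seed tile at $(0,0)$, then iterates a counter $k$ from $0$ to $n-1$, and for each $k$ replaces the stored row $k$ by row $k+1$, computed in place (or with one auxiliary buffer of the same size, which is reused each iteration). Producing row $k+1$ is done by starting at whichever endpoint of row $k$ begins the zig-zag pass (determined by the compact zig-zag structure), extending the row one tile at a time by consulting $T$ for the unique tile whose south glue matches the north glue of the tile below and whose west (or east) glue matches the previously placed tile in the new row, and terminating when no further tile can attach with strength $\tau$. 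Once the counter reaches $n$, $M$ inspects the tile stored at column $0$ of the current row, accepts if its type lies in $T_1$, and rejects otherwise (including when column $0$ is outside the domain of that row).

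The space analysis is immediate: at every point in the simulation $M$ is storing the current row and at most one buffered row, each of width at most $\max\{w_j \mid 0 \le j \le n\} = f(n)$ tiles. Together with the $O(\log f(n))$ bits for the row's leftmost coordinate and the $O(\log n)$ bits for the loop counter $k$, the total workspace is $O(f(n))$. The main ``obstacle'' is really a bookkeeping point rather than a deep one: I must be careful to argue that the width bound $f(n)$ applies not only to row $n$ but to every row used during the simulation, which is why $f(n)$ is defined as the maximum width over rows $0,\ldots,n$ rather than just the width of row $n$. With that definition in hand, the simulation never exceeds $O(f(n))$ space, completing the proof.
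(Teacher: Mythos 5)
Your proof is correct and takes essentially the same approach as the paper's: both simulate the zig-zag assembly row by row, observing that at any moment only the last completed row (plus the row under construction) is needed to continue, so the workspace is bounded by $O(f(n))$ tile entries plus lower-order bookkeeping. The paper frames this as a generic frontier-following assembly algorithm and then observes that only two rows ever matter, while you construct each row directly from the north glues of the previous one, but the underlying idea and the space bound are the same.
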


\begin{proof}

Algorithm~\ref{alg:zig-zag-algorithm} defines steps for computing $r(n)$.
\begin{algorithm}
\caption{An algorithm for computing $r(n)$.~\label{alg:zig-zag-algorithm}}
	\KwIn{$n\in \N$}
	\KwOut{$r(n)$}
	$\alpha \defeq \sigma$\;
	\While{$\partial^\tau \alpha \neq \emptyset$}{
		choose $(i, j)\in \Z^2$ such that $(i,j) \in \partial^\tau \alpha$\;
		\If{$j = n+1$}{
			break\;
		}
		$l\defeq (i,j)$\;
		choose $t\in T$ such that $l\in \partial^\tau_t \alpha$\;
		$\alpha \defeq \alpha + (l \mapsto t)$\;
	}
	\eIf{$(0,n)\in \dom(\alpha)$ and $\alpha((0,n))\in T_1$}{
		return $1$\;
	}{
		return $0$\;
	}
\end{algorithm}

For an input $n\in \N$, notice that Algorithm~\ref{alg:zig-zag-algorithm} decides if $n$ belongs to $L_r$.  Notice that to perform this algorithm, as the top row of $\alpha$ assembles, the $j^{th}$ row say,
only the top two non-empty rows of the zig-zag assembly are needed. In other words, there is a computation that only requires space on the order of the tiles with locations $(i,j-1)$ or $(i,j)$ for some $i\in\Z$. This is due to the fact that for zig-zag systems, once a tile is added to the $j^{th}$ row, only tiles in these locations have glues that allow for the binding of an additional tile.
There are at most $2f(n)$ tiles with locations $(i,j-1)$ or $(i,j)$ for some $i\in\Z$ by the definition of $f$, and hence at most $2|T|f(n)$ tiles are required to compute $r(n)$. Therefore, $L_r\in \DSPACE(f(n))$.
\end{proof}

There are many generalities of Lemma~\ref{lem:ZigZagSpaceComplexity} that can be made at this point.
In the lemma, $r$ is defined in terms of $\alpha$, a compact zig-zag system, such that each row of $\alpha$ assembles to the north of the previous row. We could just as easily have defined a compact zig-zag system so that each row assembles to the south, east, or west of a previous row.
Also, note that it is not necessary that $\alpha$ is an assembly in a zig-zag system. In fact, $\mathcal{T}$ could be any TAS. It is only necessary that $\alpha$ in $\prodasm{T}$ has a compact zig-zag assembly sequence. Finally, the function $r$ can also be generalized. $r$ is defined using a single tile location, namely $(0,n)$, and the output of $r$ is determined by $\alpha((0,n))$ and a subset of tile types $T_1$. Lemma~\ref{lem:ZigZagSpaceComplexity} also holds if we define $r$ to be defined using any finite number of tile locations and a set of configurations.

\subsubsection{Paths in the binding graph}

Before we proceed with the proof of Lemma~\ref{lem:noCheating1}, we give a way to find special paths in the binding graph of assemblies. Let $\mathcal{S} = (S,\sigma_S, \tau')$ be a TAS and let $\alpha$ be in $\prodasm{S}$. Let $S_1$ and $S_2$ be some nonempty finite sets of nonempty finite subassemblies in $\alpha$. We say that $S_1$ and $S_2$ are \emph{pairwise nonoverlapping} if for any pair of subassemblies $\beta_1 \in S_1$ and $\beta_2 \in S_2$, $\dom(\beta_1) \cap \dom(\beta_2) = \emptyset$.
Moreover, we say that $S_2$ \emph{requires} $S_1$ if for any assembly sequence $\vec{\alpha}$ with result $\alpha$, there exists an assembly $\alpha_1$ in $\vec{\alpha}$ such that some subassembly of $S_1$ is a subassembly of $\alpha_1$ and no subassembly of $S_2$ is a subassembly of $\alpha_1$. Less formally, $S_2$ requires $S_1$ if at least one subassembly in $S_1$ must completely assemble prior to the assembly of any subassembly in $S_2$. Now we can state the following lemma about finding paths in the binding graph of an assembly.

\begin{lemma}\label{lem:paths}
Let $\alpha$ be a stable finite assembly with an arbitrary valid assembly sequence $\vec{\alpha}$, and let $S_1$ and $S_2$ be nonempty finite sets of nonempty finite subassemblies of $\alpha$ such that 1) $S_1$ and $S_2$ are pairwise nonoverlapping and 2) $S_2$ requires $S_1$. Also, let $\alpha_1$ be the first assembly in the assembly sequence $\vec{\alpha}$ containing any assembly of $S_1$, and let $\alpha_0$ be the assembly in $\vec{\alpha}$ that immediately proceeds $\alpha_1$ in $\vec{\alpha}$. Then there exists a path $p$ in the binding graph of $\alpha$ with vertex set $V \subseteq \Z^2$ such that
	\begin{enumerate}\vspace{-5pt}
		\itemsep1pt \parskip0pt \parsep0pt
		\item for some $\gamma_1\in S_1$ and $\gamma_2\in S_2$, $V\cap \dom(\gamma_1) \neq \emptyset$ and $V\cap \dom(\gamma_2) \neq \emptyset$, and \label{itm:prop1}
		\item $V\cap \dom(\alpha_0) = \emptyset$\label{itm:prop2}
	\end{enumerate}
\end{lemma}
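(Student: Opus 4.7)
The plan is a proof by contradiction: I would assume no path $p$ with properties (1) and (2) exists, and use this to exhibit an assembly sequence producing $\alpha$ in which some $\gamma_2 \in S_2$ becomes a complete subassembly before any $\gamma_1 \in S_1$ does, directly contradicting the hypothesis that $S_2$ requires $S_1$.

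First I would set $N = \dom(\alpha)\setminus \dom(\alpha_0)$ and let $H$ denote the subgraph of the binding graph of $\alpha$ induced on $N$. The contradiction hypothesis is precisely that no connected component of $H$ simultaneously contains a tile in some $\gamma_1 \in S_1$ and a tile in some $\gamma_2 \in S_2$. Partition the vertex set of $H$ into $V_1$, the union of those components that contain at least one tile from $\bigcup_{\gamma \in S_1}\dom(\gamma)$, and $V_2$, the union of the remaining components; every tile of $N$ lying in any $\gamma_2 \in S_2$ then lies in $V_2$. Because $\alpha_1$ is the first assembly of $\vec{\alpha}$ containing any $S_1$ subassembly, no $\gamma_1 \in S_1$ is contained in $\alpha_0$, so each $\gamma_1$ has at least one tile in $V_1$; similarly, the requires hypothesis forbids $\gamma_2 \subseteq \alpha_0$, so each $\gamma_2 \in S_2$ has at least one tile in $V_2$.

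Next I would build the alternate assembly sequence in three phases: replay $\vec{\alpha}$ up to $\alpha_0$, then add the tiles of $V_2$ in the relative order given by $\vec{\alpha}$, then add the tiles of $V_1$ in the relative order given by $\vec{\alpha}$. The key validity claim is that every tile added in this rearranged sequence still attaches $\tau$-stably. For a tile $t \in V_2$, every binding-graph neighbor of $t$ in $\alpha$ lies in $\alpha_0 \cup V_2$: a neighbor outside $\alpha_0$ is in $N$ and shares a binding edge with $t$ in $H$, placing it in $t$'s own component and hence in $V_2$. Since relative order within $\alpha_0 \cup V_2$ is preserved, the subset of $t$'s binding neighbors that had been placed by the time $t$ originally attached coincides with the subset present when $t$ is reattached in the alternate sequence, so $t$ attaches with exactly the same strength. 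An analogous argument handles the final $V_1$ phase.

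The assembly produced at the end of the $V_2$ phase contains $\alpha_0$ together with every tile of $V_2$; in particular it contains the first $\gamma_2 \in S_2$ completed in $\vec{\alpha}$, because all of that $\gamma_2$'s tiles in $N$ lie in $V_2$. At that moment no $\gamma_1 \in S_1$ is yet present, since each such $\gamma_1$ requires a tile from $V_1$, which is added only in the final phase. Consequently the alternate sequence has no intermediate assembly in which some $S_1$ subassembly is present while every $S_2$ subassembly is absent, violating the definition of $S_2$ requires $S_1$. The main obstacle and only delicate step is the stability verification for the $V_2$ phase: it is essential that separating $V_1$ from the rest of $H$ severs no binding edge of $\alpha$, which is exactly what lets the $V_2$ tiles attach in isolation from $V_1$ using only their original binding neighbors in $\alpha_0 \cup V_2$.
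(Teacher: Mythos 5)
Your proposal is correct and makes fully rigorous what the paper's two-sentence proof merely asserts (namely that Property~1 ``follows from stability'' and Property~2 ``follows from $S_2$ requires $S_1$''): the contrapositive argument you give — partitioning the post-$\alpha_0$ components by whether they touch $S_1$, reordering the suffix of $\vec{\alpha}$ so that all of $V_2$ attaches before any of $V_1$, and checking that each tile still sees exactly its original already-placed neighbors — is precisely the missing content, and the stability verification is carried out correctly. This is essentially the same approach as the paper, just carried out in full detail.
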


\begin{proof}
Since $\alpha$ is a stable assembly, it is clear that at least one path $p$ in the binding graph of $\alpha$ satisfying Property~\ref{itm:prop1} always exists. That we can find such a path satisfying Property~\ref{itm:prop2} follows from the assumption that $S_2$ requires $S_1$.
\end{proof}

The path $p$ in Lemma~\ref{lem:paths} corresponds to a path of tiles in $\alpha$, and we call such a path the \emph{a new path of tiles connecting} $S_1$ and $S_2$ to emphasize Property~\label{itm:prop2}. Property~\label{itm:prop2} essentially says that such a path $p$ forms only after a configuration in $S_1$ has assembled.
To help motivate Lemma~\ref{lem:paths}, consider that for a simulation $\mathcal{S}$ of a zig-zag system $\mathcal{T}$, in order for $\mathcal{S}$ to capture the dynamics of $\mathcal{T}$ correctly, for a row of macrotiles simulating a row of zig-zag growth from left to right from a tile $t_1$ to a tile $t_2$ say (respectively right to left), the set $S_1$ of macrotile subassemblies that represent $t_1$ requires the set $S_2$ of macrotile subassemblies that represent $t_2$. By Lemma~\ref{lem:paths}, this implies that there is a new path of tiles connecting $S_1$ and $S_2$. In the proof of Lemma~\ref{lem:noCheating1}, we will use Lemma~\ref{lem:paths} to limit the amount of space that can be used as assembly in the simulating system proceeds.

\subsubsection{Simulations of zig-zag systems}

Let $\mathcal{S} = (S, \sigma_S, \tau')$ be a TAS that simulates $\mathcal{T}$ with representation function $R$ and scale factor $c$. Since all of the results here hold upto translation of assemblies in $\Z^2$, without loss of generality, throughout this section we assume that the bottom-right tile of $\sigma_{\mathcal{T}}$ has location $(0,0)$. Also let $\alpha'$ in $\prodasm{S}$ and $\alpha$ in $\prodasm{T}$ be assemblies such that $R^*(\alpha') = \alpha$. Furthermore, let $\vec{\alpha'}$ be an assembly sequence with result $\alpha'$.
 Here we give a similar result to Lemma~\ref{lem:ZigZagSpaceComplexity} for systems such as $\mathcal{S}$ that simulate zig-zag systems.

First, we introduce some notation similar to the notation used for stating Lemma~\ref{lem:ZigZagSpaceComplexity}. Let $\alpha'_n$ denote the subassembly of $\alpha'$ such that
for all $(i,j) \in \dom(\alpha'_n)$, $j\leq n$ and for all $(i,j) \in \partial^{\tau'}\alpha'_n$, $j\geq n+1$. For some $L\subset \Z^2$ such that $|L|< \infty$ and for $\vec{v}\in\Z^2$, let $L_{\vec{v}}$ denote $\{\vec{l} + \vec{v} \mid \vec{l}\in L\}$.
Also, for $\vec{n} = (0,n)$, let $C_n \subseteq \{w \mid w:L_{\vec{n}} \to S \text{ is a partial function} \}$ be a subset of configurations over $S$ with domain in $L_{\vec{n}}$. Then we let $r':\N \to \{0,1\}$ be the function
\[r'(n) = \left\{
     \begin{array}{ll}
       1 & \alpha'_n|_{L_{\vec{n}}} \in C_n \\
       0 & \text{otherwise.}
     \end{array}
   \right. \]

Essentially, $r'$ is the function obtained by growing $\alpha'$ to the point where the next tile added must be at a location above the line $y=n$, and then considering some finite configuration of this assembly. $r'(n)=1$ if and only if this configuration is in $C_n$. Furthermore, let $f':\N \to \N$ be the function defined as the maximum width $w$ such that $w = |x_1 - x_2|$ where $(x_1,y_1)$ and $(x_2,y_2)$ are locations of some tiles in $\alpha'$ such that $0\leq y_1, y_2\leq n$.
Finally, let $L_r' = \{ n\in \N\mid r'(n)=1 \}$. As with zig-zag systems, we call $r'$ the \emph{characteristic function for} $\mathcal{S}$ \emph{given} $C_n$ and $L_{r'}$ the \emph{language defined by} $\mathcal{S}$ \emph{given} $r'$ \emph{and} $C_n$. Notice that $r'$ is a computable function
and $L_{r'}$ is a computable set. Lemma~\ref{lem:noCheating1} gives an upper bound on the space complexity of a language defined by a system that simulates a zig-zag system.

\begin{lemma}\label{lem:noCheating1}
Let $\mathcal{T} = (T, \sigma, \tau)$ be a zig-zag system and let $\mathcal{S} = (S, \sigma_\mathcal{T}, \tau')$ be a directed system that simulates $\mathcal{T}$ with terminal assembly $\alpha'$. Moreover, let $L$ be a finite subset of $\Z^2$ and let $C_n$ be a set of finite configurations over $S$ with domain in $L_{\vec{n}}$, $r'$ be the characteristic function for $\mathcal{S}$ given $C_n$. Finally, $L_{r'}$ the language defined by $\mathcal{S}$ given $r'$ and $C_n$, and
let $f'(n)$ be the maximum width $w$ such that $w = |x_1 - x_2|$ where $(x_1,y_1)$ and $(x_2,y_2)$ are locations of some tiles in $\alpha'$ such that $0\leq y_1, y_2\leq n$.
Then, $L_{r'}\in \DSPACE(f'(n))$.
\end{lemma}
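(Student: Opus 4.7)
The plan is to adapt the algorithm from Lemma~\ref{lem:ZigZagSpaceComplexity} to the simulation setting. I would design a Turing machine $M$ that, on input $n$, simulates the growth of $\alpha'$ row by row (in the simulator's coordinates) and at every point stores only a ``band'' of tiles of height $O(c)$ near the topmost incomplete row. Since by hypothesis every row of the band has width at most $f'(n)$, this band occupies $O(c \cdot f'(n)) = O(f'(n))$ tape cells. Once $M$ has grown $\alpha'$ to the point that $\partial^{\tau'}\alpha'_n \subseteq \{(i,j) : j \geq n+1\}$, it reads $\alpha'_n|_{L_{\vec n}}$ from the stored band and outputs $1$ if this configuration belongs to $C_n$ and $0$ otherwise, deciding $L_{r'}$.

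The crucial step is justifying why maintaining only a constant number of macrotile rows suffices. First I would partition $\alpha'$ into horizontal macrotile rows of height $c$, each representing a single zig-zag row of $\mathcal{T}$. Because $\mathcal{T}$ is zig-zag, every tile of $\mathcal{T}$ in row $i+1$ binds north of a tile in row $i$, and within a row the growth proceeds monotonically left-to-right or right-to-left. Let $S_1$ be the set of all subassemblies of $\alpha'$ representing a given tile $t$ of $\mathcal{T}$ that are fully assembled, and $S_2$ the analogous set for the next tile $t'$ in the same row (or the next row above). By the valid-simulation requirement, $S_2$ requires $S_1$, so by Lemma~\ref{lem:paths} there is a new path of tiles connecting them in $\alpha'$. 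A clean-map and fuzz analysis then shows this path lies inside the macrotile regions of $t$ and $t'$ together with their permitted fuzz, i.e. within the top band.

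Given this, the simulation step is routine: using the stored top band, $M$ repeatedly picks any location in $\partial^{\tau'}$ at which some tile of $\alpha'$ appears and attaches it, exactly as in Algorithm~\ref{alg:zig-zag-algorithm}, except that the termination condition is ``no attachable location has $y$-coordinate $\leq n$''. Whenever a full macrotile row below the current one has completed and is no longer needed to support further attachments within the band, $M$ erases it. Directedness of $\mathcal{S}$ ensures the final contents are $\alpha'_n$ independent of the assembly-sequence choices made.

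The main obstacle will be the rigorous handling of fuzz and of tiles that attach to a completed macrotile row from above after the row has been ``closed'' by $M$. The argument has to show that any such late attachment either lies inside the band currently in memory, or is irrelevant to the tile types at $L_{\vec n}$ and to any subsequent attachments at height $\leq n$. This is exactly where the zig-zag hypothesis on $\mathcal{T}$, together with the no-diagonal-fuzz clean-mapping condition of Definition~\ref{def-equiv-prod} and the path property of Lemma~\ref{lem:paths}, must be combined to rule out long-range dependencies; once that is done, the space bound $O(f'(n))$ follows directly from the band size, giving $L_{r'} \in \DSPACE(f'(n))$.
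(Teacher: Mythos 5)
Your outline captures the right high-level idea — maintain only a sliding window of the simulator's assembly and argue that growth below that window cannot influence what happens at height $\le n$ — and you correctly identify Lemma~\ref{lem:paths} as the tool for bounding cross-row dependencies. But you explicitly flag the ``main obstacle'' (tiles attaching below a closed band and feeding information back up) without offering a mechanism to resolve it, and this is precisely where a naive band of height $O(c)$ fails. In a simulator, a tile placed near the top of the current band may expose a glue that spawns a path of tiles growing \emph{downward} along the leftmost or rightmost fuzz columns of the assembly. Such a path can descend arbitrarily far (its \emph{number} is bounded — at most $2c$ such excursions — but its \emph{length} is not), and when it returns upward it may present new glues at the band. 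If $f'(n)$ is small (say $O(\log n)$) while such a path plunges $\Theta(n)$ deep, your Turing machine must either retain the whole column (blowing the space bound) or erase it and lose the information needed to know which glues eventually reappear at the cut. Your criterion ``no longer needed to support further attachments within the band'' is not effective: there is no local way to test it, because the need can arise from a future glue exposure far above.

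The paper's proof closes this gap with the \emph{glue sequence table} (GST). Rather than remembering the tiles below the cut, it precomputes, for every possible sequence of (at most $4c$) glues that might cross the cut from above to below, the full set of glues that would eventually cross back from below to above once assembly under the cut reaches quiescence. This table has constant size depending only on $|S|$ and the scale $c$, so it costs $O(1)$ space, and it lets \texttt{UpdateAssembly} treat the entire assembly below the cut as a black box. Your Claim about $S_1$/$S_2$ and new paths (Lemma~\ref{lem:paths}) is the right ingredient for the companion fact — Claim~\ref{clm:shortDip} in the paper — that dependencies \emph{away} from the left/right side regions have bounded vertical span ($<c^2+2c+2$), justifying the band height; but you still need the GST for the side regions. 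Without introducing the GST (or an equivalent finite summary of below-cut dynamics) and the bookkeeping of \texttt{InitGST}/\texttt{UpdateGST} that keeps it current as the cut slides upward, the space bound $O(f'(n))$ does not follow, and the proof as sketched does not go through.
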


\subsubsection*{Overview of the proof of Lemma~\ref{lem:noCheating1}}

In order to prove Lemma~\ref{lem:noCheating1}, we show that $r'(n)$ can be computed using space in $O(f'(n))$. Let $c$ denote the scale factor of the simulation of $\mathcal{T}$ by $\mathcal{S}$, and let $h$ denote the height of the smallest rectangle bounding $L$. We let $k = \max\{ c^2 + 2c + 2, h \}$. As assembly proceeds from $\sigma_\mathcal{T}$, we take note of two regions where tiles may bind. First, tiles may bind in macrotile regions representing leftmost or rightmost tiles of the simulated zig-zag system or fuzz regions to the left or right of these macrotile regions. We call these macrotile regions the \emph{left or right sides} of an assembly of $\mathcal{S}$. The second region where tiles may bind is the complement of the left or right sides of an assembly of $\mathcal{S}$.

The proof relies on a data structure called a \emph{glue sequence table}\footnote{A glue sequence table is closely related to a \emph{window-movie} (GST) as described in~\cite{IUNeedsCoop}.}. This table, which we define later, is essentially a constant size (depending on $U$ and $c$) lookup table that maps sequences of glues along a portion of a cut corresponding to the left or right sides of an assembly to sets of glues. The cut divides the $\Z^2$ lattice into two regions. One ``above'' the cut and one ``below''. The sets of glues mapped to by the table correspond to the glues that appear on the cut under the assumption that the sequence of glues are exposed along the cut in the order given by the sequence. Then, by using these exposed glues, tiles bind below the cut until there are no tile locations below the cut where a tile can stably bind. The glue sequence table has an entry for every possible glue sequence that can appear along the portion of the cut. Basically, the glue sequence table captures all ``what-if'' scenarios in the sense that it tells us what glues may be exposed along a cut by tiles binding below the cut if a glue (or a sequence of glues) is exposed along the cut by tiles located above the cut. See Figure~\ref{fig:glue-map} for a depiction of this process. We will describe the construction of a GST in the formal proof. It should be noted that it may not be possible for a sequence in the domain of a glue sequence table to correspond to an actual valid assembly.

\begin{figure}[htp]
\centering
  \subfloat[][InitAssembly and InitGST]{%
        \label{fig:noCheating-overview1}%
	        \includegraphics[width=2.75in]{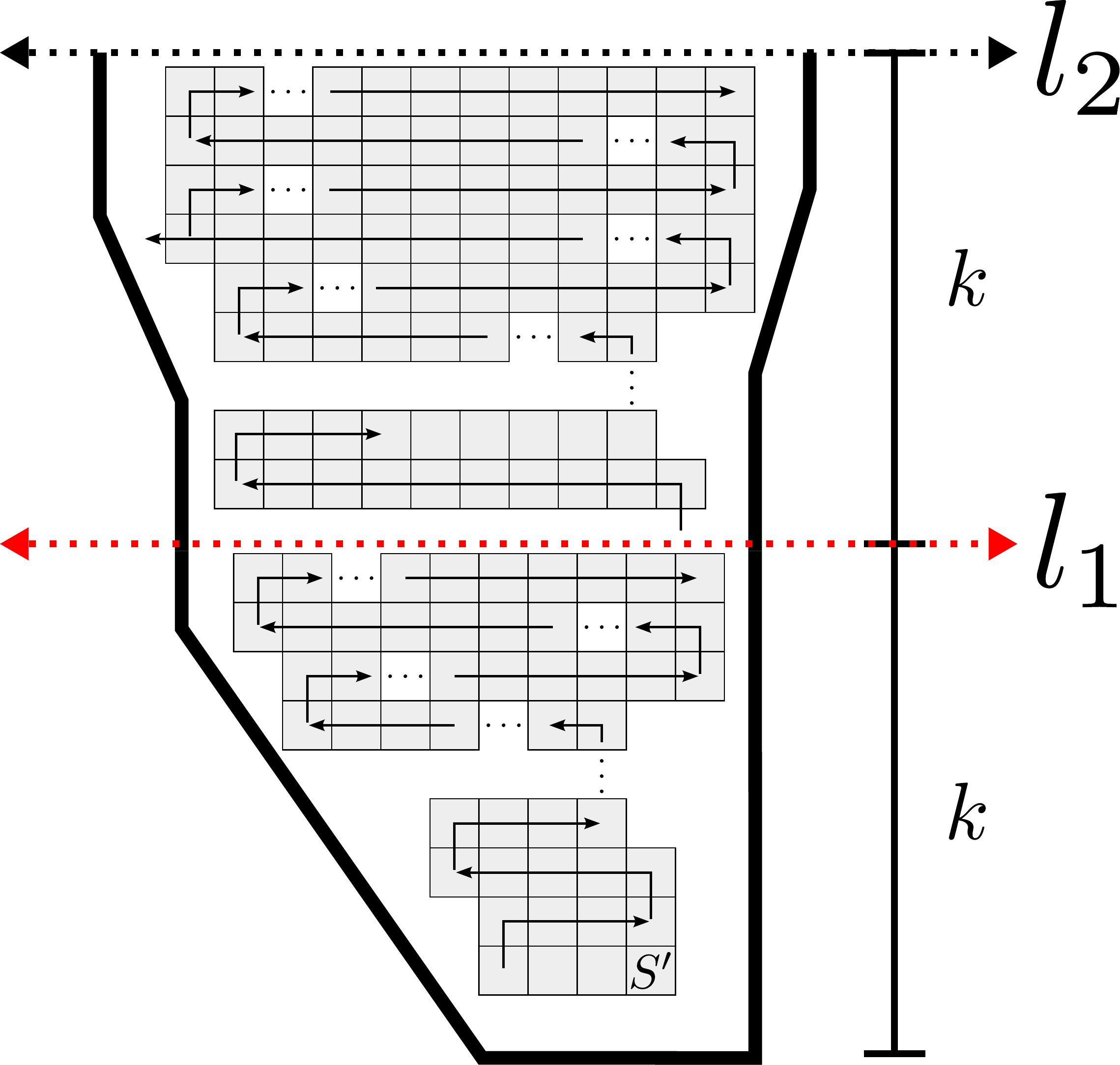}
        }%
        \quad\quad
  \subfloat[][UpdateAssembly and UpdateGST]{%
        \label{fig:noCheating-overview2}%
        		\includegraphics[width=2.75in]{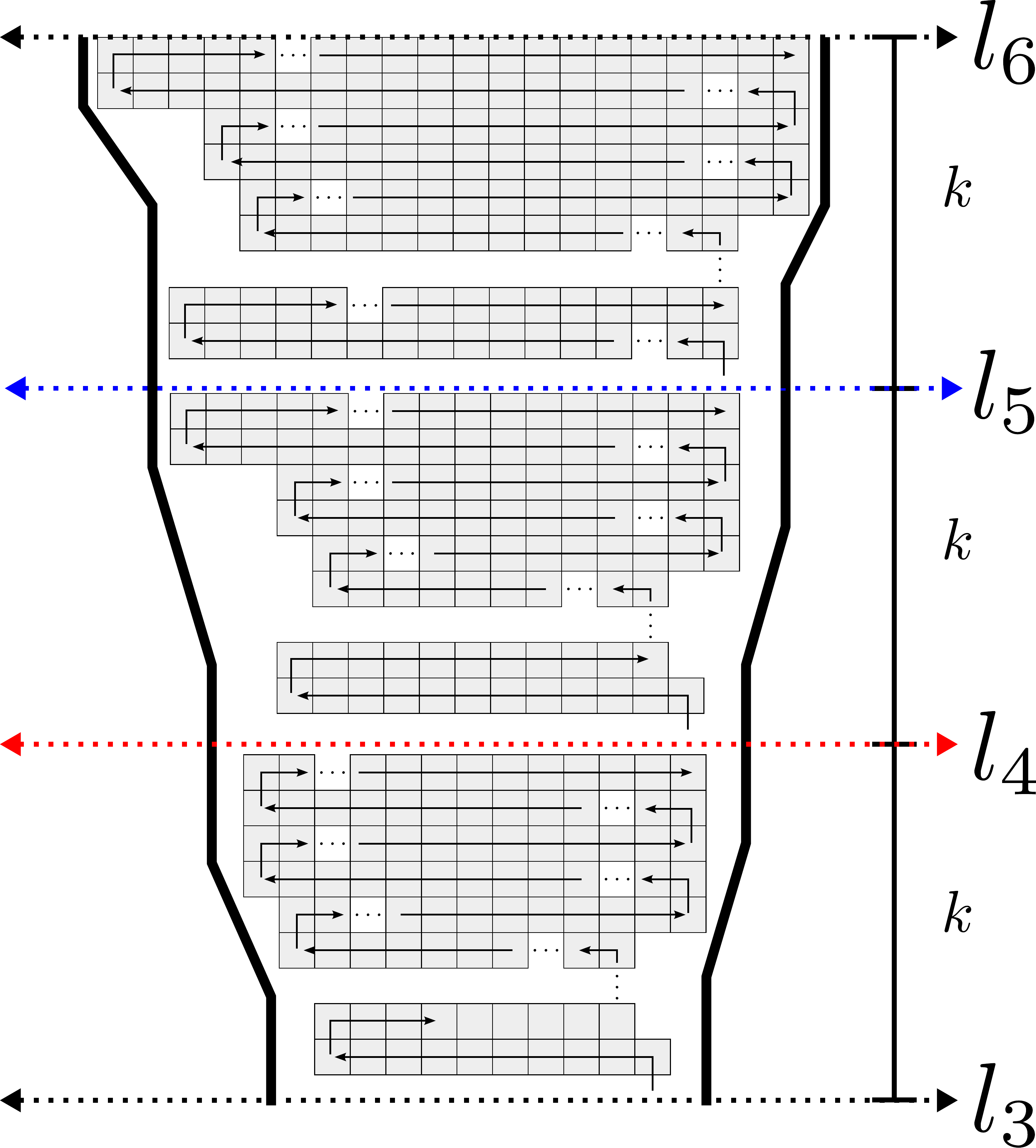}
        }%
  \caption{Determining configurations over $L_{\vec{n}}$.\vspace{-15pt}}
  \label{fig:noCheating-overview}
\end{figure}

In the proof, we describe four procedures for computing configurations over $L_{\vec{n}}$ starting from $\sigma_\mathcal{T}$. These procedures are 1) InitAssembly, 2) InitGST, 3) UpdateAssembly, and 4) UpdateGST. Figure~\ref{fig:noCheating-overview} depicts a sketch of the assemblies formed by these procedures. Referring to Figure~\ref{fig:noCheating-overview1}, starting from the seed $\sigma_\mathcal{T}$, InitAssembly is the process of attaching all tiles that can stably bind at tile locations with $y$ values $\leq 2k$. InitGST constructs the glue sequence table corresponding to cut shown in red by considering assemblies below the line that may form assuming that some glue (or sequence of glues) is exposed along the cut.
Referring to Figure~\ref{fig:noCheating-overview2}, with the assembly below the line $l_5$ present, UpdateAssembly is the process of attaching all tiles that can stably bind a tile locations below $l_6$. This process uses the existing assembly between $l_3$ and $l_5$ and the glue sequence table corresponding to a cut shown in red. The result of UpdateAssembly is that the assembly below the the line $l_6$ and above the the line $l_3$ is terminal. Once UpdateAssembly is finished, UpdateGST constructs a new glue sequence table corresponding to a cut shown in blue. When the assembly below $l_6$ and above $l_3$ has formed and the glue sequence table has been updated, the tiles below the line $l_4$ are no longer necessary for continuing the assembly of the simulation of the zig-zag system to the north. Using this fact, to finish the proof, we show that the four methods InitAssembly, InitGST, UpdateAssembly, and UpdateGST require space on the order of $f'(n)$ and can be used to compute $\alpha'_n|_{L_{\vec{n}}}$ in order to determine $r'(n)$.

\subsubsection*{Proof of Lemma~\ref{lem:noCheating1}}
\begin{proof}

For $n\in \N$, we let $H_n$ denote the set $\{(x,y)\in \Z^2\mid y\leq n \}$. Based on Algorithm~\ref{alg:simAlgo}, we will give a means of computing $r'(n)$ that requires space in $O\left(f'(n)\right)$.

\begin{algorithm}
\caption{An algorithm for computing $r'(n)$.\label{alg:simAlgo}}
	\KwIn{$n\in \N$}
	\KwOut{$r'(n)$}
	\SetKwFunction{initAssembly}{InitAssembly}
	\SetKwFunction{initGST}{InitGST}
	\SetKwFunction{updateAssemble}{UpdateAssembly}
	\SetKwFunction{updateGST}{UpdateGST}
	\initAssembly\;
	
	\initGST\;
	
	\While{$\partial^{\tau'} \alpha' \cap H_n \neq \emptyset$}{
		\updateAssemble\;
		
		\updateGST\;
		
	}
	\eIf{$\alpha'(L_{\vec{n}}) \in C_n$}{
		return $1$\;
	}{
		return $0$\;
	}
\end{algorithm}

The algorithm consists of four procedures that we describe next.
Before we describe each of the procedures used in Algorithm~\ref{alg:simAlgo}, we introduce some notation. Let $h$ denote the height of $L_{\vec{n}}$ and let $k = \max\{ c^2 + 2c + 2, h \}$.

\subsubsection*{InitAssembly}

We begin assembly in $\mathcal{S}$ by attaching tiles to $\sigma_S$ until any tile that can stably bind to below the line $y=2k$ has attached. We denote this step as the ``InitAssembly'' procedure. The InitAssembly procedure consists of the steps given in Algorithm~\ref{alg:primeAssembly}.

\begin{algorithm}
\caption{An algorithm describing the InitAssembly procedure.\label{alg:primeAssembly}}
	\KwIn{$\sigma_\mathcal{T}$}
	\KwOut{$\alpha'_{2k}$}
	$\alpha' \defeq \sigma_S$\;
	\While{$\partial^{\tau'} \alpha' \cap H_{2k} \neq \emptyset$}{
		choose $(i, j)\in \Z^2$ such that $(i,j) \in \partial^{\tau'} \alpha'$ and $j \leq 2k$\;
		$l\defeq (i,j)$\;
		choose $t\in T$ such that $l\in \partial^{\tau'}_t \alpha'$\;
		$\alpha' \defeq \alpha' + (l \mapsto t)$\;
	}
\end{algorithm}

\subsubsection*{InitGST}

Before we describe the InitGST procedure, we define a data structure that is used in the algorithm called a \emph{glue sequence table}. A glue sequence table is defined relative to a cut in the binding graph of $\alpha'_{k}\in \prodasm{S}$. Figure~\ref{fig:zig-zag-simulation} depicts this cut, which we now describe.

To define the cut, we find two tiles such that the location of each of these tiles is above the line $y=k$.
Let $i = \ceil{k/c}$ and let $M^r_i$ be the macrotile region that maps to the tile, $t^r_i$ say, in $\alpha$ that is farthest to the right on the $i^{th}$ row $\alpha$.  Similarly, let $M^l_i$ be the macrotile region that maps to the tile, $t^l_i$ say, in $\alpha$ that is farthest to the left on the $i^{th}$ row $\alpha$. Moreover, let $S^r_i$ (respectively $S^l_i$) be sets of configurations over $S$ with domain in $M^r_i$ (respectively $M^i_i$) such that each configuration of $S^r_i$ (respectively $S^l_i$) maps to $t^r_i$ (respectively $t^l_i$). Note that depending on the direction of growth (left or right) of the $i^th$ row of the zig-zag system $\mathcal{T}$, either $S^r_i$ requires $S^l_i$ or $S^l_i$ requires $S^r_i$. In either case, there must be a new path $p_i$ of tiles from a tile of a configuration in $S^r_i$ to a tile of a configuration in $S^l_i$. Note that for $i$ between $k$ and $2k$, each $p_i$ is disjoint. This follows from Lemma~\ref{lem:paths}. Now, $p_i$ either contains a tile at a tile location that is below any tile of $\sigma_{\mathcal{T}}$ or it does not. The former case can only occur fewer than $2c$ times. This follows from the fact that if there were $2c$ or more such paths $p_i$, then these paths must contain tiles outside of any valid macrotile region representing a tile as well as outside of any fuzz regions of such macrotiles. In the latter case,
there must be two tiles contained in $M^r_i$ and $M^l_i$, which we denote by $tl$ and $tr$ respectively, such that these tiles are connected by a path of tiles in $\alpha'_{2k}$ and this path does not contain a tile at a location below any tile of $\sigma_{\mathcal{T}}$. We will use the locations of $tl'_i$ and $tr'_i$ to define a cut of the grid graph.

Let $(x_r, y_r)$ be the location of $tr$ and let $(x_l, y_l)$ be the location of $tl$. For the assembly $\alpha'_{2k}$ produced by the InitAssembly procedure, the cut, $f_k$ say, is defined by the edges in the grid graph that intersects lines
\begin{enumerate}
\item $y=y_l+\dfrac{1}{2}$ for $x$ in $(-\infty, x_l)$\label{item:cut-portion0}
\item $x=x_l + \dfrac{1}{2}$ for $y$ between $y_l$ and $y_r$\label{item:cut-portion1}
\item $y=y_r + \dfrac{1}{2}$ for $x$ in $(x_l, x_r)$\label{item:cut-portion2}
\item $y=y_r+\dfrac{1}{2}$ for $x$ in $(x_r, \infty)$\label{item:cut-portion4}
\end{enumerate}

At a high-level, we have chosen this cut so that as assembly proceeds, tiles must either cross the portion of the cut corresponding to Lines~\ref{item:cut-portion0} and~\ref{item:cut-portion4} or are prevented from growing lower than all of the tiles of the path of tiles from $tl$ to $tr$ divide the plane into two disjoint sets. This idea is depicted in Figure~\ref{fig:zig-zag-simulation}.

We then denote the glue sequence table associated to the cut $f_k$ by $GST_{f_k}$. Note that this cut extends infinitely to the left and right dividing the $\Z^2$ lattice into points ``above'' $f_k$ and points ``below'' $f_k$. Analogously, for $j\in\N$, $f_{jk}$ is defined as in $f_{k}$ so that $f_{jk}$ is a cut corresponding to lines that lie between the lines $y=ik$ and $y=(i+1)k$.

\begin{figure}[htp]
\begin{center}
\includegraphics[width=3.0in]{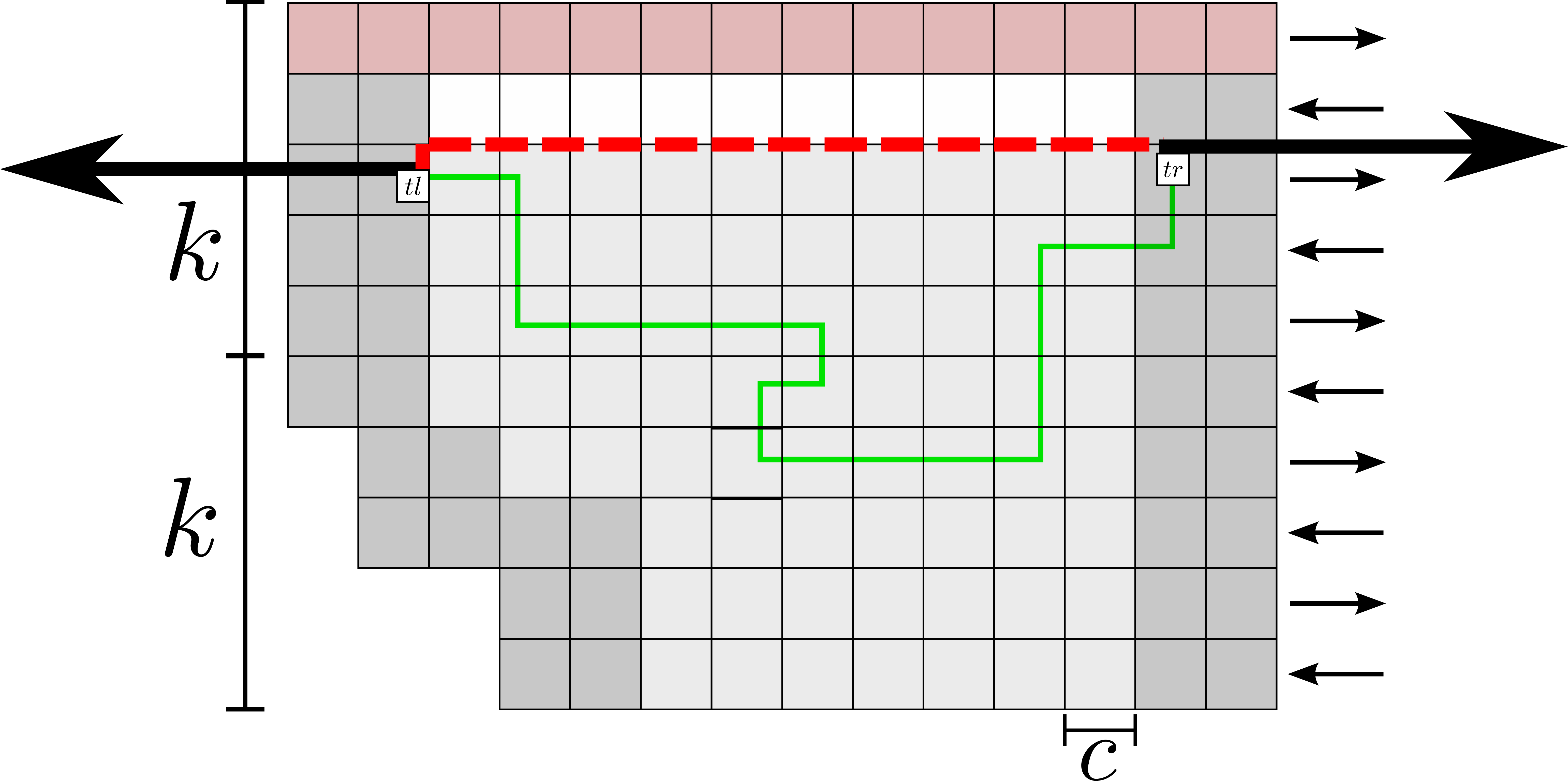}
\caption{The cut used to define a GST. Glues which cross the red dotted portion of the cut are not in the domain of the GST. The green path is a representation of a path of tiles from $tl$ to $tr$.}
\label{fig:zig-zag-simulation}
\end{center}
\end{figure}

Let $G$ be the set of glues associated to tiles in $S$ and let $D=\{(0,1), (0,-1),$ $(1,0), (-1,0)\}$ be a set corresponding to the north, south, east, and west edges of a tile respectively. In addition, let $\tilde{G}$ denote the product $G\times D\times \Z^2$ such that $(t, d, l)$ is in $\tilde{G}$ iff $l=(x,y)$ where $y-\dfrac{1}{2}$ is not a point on Line~\ref{item:cut-portion1} or~\ref{item:cut-portion2}. We say that $\tilde{g}\in \tilde{G}$ \emph{crosses the cut $f_k$} iff $\tilde{g} = ( g, d, l )$ for $g\in G$, $d\in D$, and $l\in \Z^2$ such that $l$ lies on one side of the cut $f_k$ and $l+d$ lies on the other side of the cut $f_k$. Moreover, if $l$ is above (respectively below) the cut, we say that $\tilde{g}$ \emph{crosses the cut from above to below} (respectively \emph{crosses the cut from below to above}). Also, we take the phrase \emph{$\tilde{g}$ is present} to mean that for $\tilde{g} = ( g, d, l )$ for $g\in G$, $d\in D$, and $l\in \Z^2$, there is a tile at location $l$ with glue $g$ exposed on its edge corresponding to $d$.

Then, $GST_{f_k}$ is a relation between the set of all sequences of length at most $4c$ over $\tilde{G}$ to the power set $\mathcal{P}\left(\tilde{G}\right)$. For a sequence of glues $\Sigma = \langle \tilde{g}_i \rangle_{0\leq i<4c, \tilde{g}_i\in \tilde{G}}$ and a set $F$ in $\mathcal{P}\left(\tilde{G}\right)$, a pair $\left(\Sigma, F\right)$ is in $GST_{f_k}$ if and only if each $\tilde{g}_i$ in $\Sigma$ crosses the cut $f_k$ from above to below and $F$ is the set of elements of $\tilde{G}$ obtained as follows. First, assume that the subassembly below the cut $f_k$ is terminal and that the set $F$ is empty. Now, consider assembly below the cut while assuming that $\tilde{g}_0$ is present. Note that it may not even be possible for a tile to be part of a valid assembly so that $\tilde{g_0}$ is present. With $\tilde{g_0}$ present, attach tiles at tile locations below the cut $f_k$. $\tilde{g}_0$ must be used to start tile attachment below the cut $f_k$. Continue attaching tiles until the newly assembled subassembly below the cut $f_k$ is terminal. Call the new configuration of tiles $\beta'_0$. Then add any glues crossing the cut $f_k$ from below to above to the set $F$. See Figure~\ref{fig:glue-map} for more detail. Now, assume that $\tilde{g}_1$ is present and attach tiles to $\beta'_0$ below the cut $f_k$ until the produced assembly below the cut $f_k$ is terminal. Add any new glues of this assembly that are crossing the cut $f_k$ from below to above to the set $F$. Continue this process for each $\tilde{g}_i$ in $\Sigma$ to construct the set $F$. This process is described in Algorithm~\ref{alg:initGST}. In this algorithm, we use the notation $\partial^{\tau'}(\alpha'\cup \tilde{g})$ to denote empty tile locations where, under the assumption that $\tilde{g}$ is present, exposed glues of $\alpha'$ and $\tilde{g}$ allow for a tile to be placed so that the sum of the glue strengths of glues of this tile that match exposed glues of $\alpha'$ and/or $\tilde{g}$ is greater than or equal to $\tau'$. It should be noted that in the following algorithm, $\alpha'$ does not necessarily denote a \emph{stable} assembly; it only denotes a configuration of tiles.

\begin{algorithm}
\caption{An algorithm describing the procedure InitGST.\label{alg:initGST}}
	\KwIn{$\alpha'_{2k}$}
	\KwOut{$GST_{f_k}$}
	$\alpha' \defeq \alpha'_{2k}$\;
	set $\Sigma^*$ to the set of all sequence over $\tilde{G}$\;
	set $B$ to the set of all $(i,j)\in \Z^2$ below $f_k$\;	
	\For{$\Sigma\in \Sigma^*$}{
	set $F$ to the empty set\;
	set $E$ to the empty set\;
	\For{$\tilde{g} \in \Sigma$}{
	assume $\tilde{g}$ is present\;
	\While{$B \cap \partial^{\tau'}(\alpha'\cup \tilde{g}) \neq \emptyset$}{
		choose $(i, j)\in \Z^2$ such that $(i,j) \in B \cap \partial^{\tau'}(\alpha'\cup \tilde{g})$\;
		$l\defeq (i,j)$\;
		
		choose $t\in T$ such that $l\in \partial^{\tau'}_t (\alpha'\cup \tilde{g})$\;
		$\alpha' \defeq \alpha' + (l \mapsto t)$\;
		$\tilde{b} \defeq (t, d, l)$\;
		\If{$\tilde{b}$ crosses the cut $f_k$ from below to above}{
			add $\tilde{b}$ to $E$\;
		}

		}
	}
	set $F$ to $F \cup E$\;
	add $(\Sigma, F)$ to $GST_{f_k}$\;
	}
\end{algorithm}

\begin{figure}[htp]
\begin{center}
\includegraphics[width=3.0in]{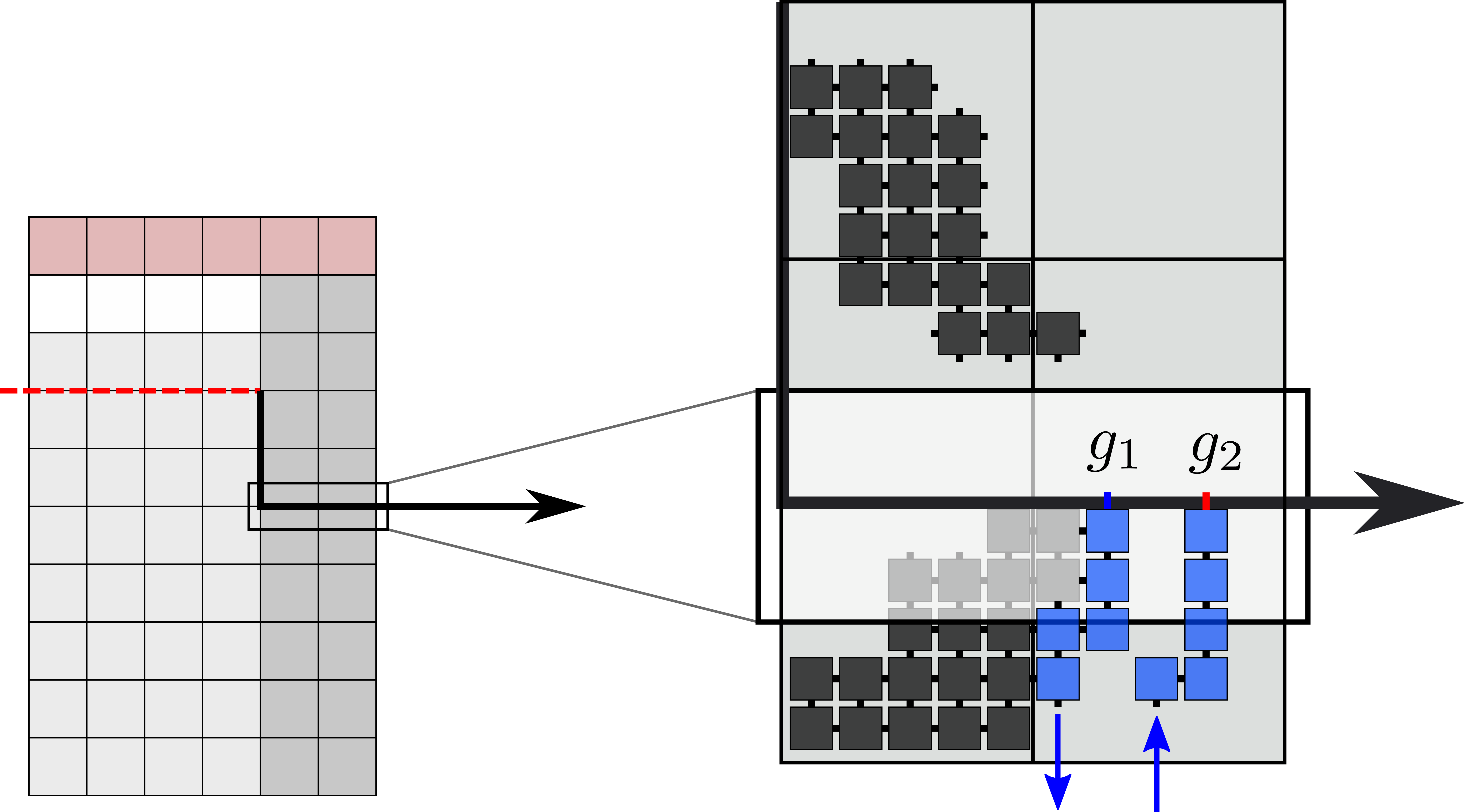}
\caption{Construction of a glue sequence table. The bold line along with the red dotted line depict the cut $f_k$. Dark grey tiles are part of the existing assembly. The subassembly below the cut $f_k$ is assumed to be terminal. If the blue glue $g_1$ is present, then assembly of the blue tiles may be possible resulting in glue $g_2$ crossing the cut $f_k$ from below to above. This glue is added to the set $F$.}
\label{fig:glue-map}
\end{center}
\end{figure}

\subsubsection*{UpdateAssembly}

Let $B_i$ be the set of all $(x,y)\in\Z^2$ such that $(i-2)k \leq y \leq ik$. Moreover, let $\beta'_{ik}$ be the subconfiguration of $\alpha'_{ik}$ contained in ${B_i}$. In other words, $\beta'_{ik}$ is the map $\alpha'_{ik}|_{B_i}$. The UpdateAssembly procedure consists of the steps given in Algorithm~\ref{alg:updateAssembly}. This algorithm computes $\beta'_{ik}$ using $\beta'_{(i-1)k}$ and $GST_{f_{(i-2)k}}$. The idea is to assemble the portion of $\beta'_{ik}$ by allowing tiles to bind to tiles of $\beta'_{(i-1)k}$ when appropriate glues are present for strength two binding. In addition, when a tile is placed so that a glue, $g$ say, on the tile crosses the cut $f$, rather than continue to attach tiles below the cut using this exposed glue, the table $GST_{f_{(i-2)k}}$ is used to lookup which glues will cross the $f$ from below to above as a result of $g$.

\begin{algorithm}
\caption{An algorithm describing the procedure UpdateAssembly.\label{alg:updateAssembly}}
	\KwIn{$\beta'_{(i-1)k}$ and $GST_{f_{(i-2)k}}$}
	\KwOut{$\beta'_{ik}$}
	$\alpha' \defeq \beta'_{(i-1)k}$\;
	set $\Sigma$ to the empty sequence over $\tilde{G}$\;
	set $B$ to the set of $(x,y)\in\Z^2$ such that $(i-2)k \leq y \leq ik$\;
	\While{$\partial^{\tau'} \alpha' \cap B \neq \emptyset$}{
		choose $(i, j)\in \Z^2$ such that $(i,j) \in \partial^{\tau'} \alpha' \cap B$\;
		$l\defeq (i,j)$\;
		
		choose $t\in T$ such that $l\in \partial^{\tau'}_t \alpha'$\;
		$\alpha' \defeq \alpha' + (l \mapsto t)$\;

		\For{$d\in\{(0,-1), (1,0), (-1,0)\}$}{
			$\tilde{g} \defeq (t, d, l)$\;
			
			\If{$\tilde{g}$ crosses the cut $f_{(i-2)k}$ from above to below}{
				add $\tilde{g}$ to $\Sigma$ as last element\;
				assume that any unexposed glues in $GST_{f_{(i-2)k}}(\Sigma)$ along the cut $f_{(i-2)k}$ are present\;	
			}		
		}
	}

	$\beta'_{ik} = \alpha'|_{B}$\;
\end{algorithm}

It still remains to be shown that this algorithm correctly yields $\beta'_{ik}$. To see this, we prove the following claim.

\begin{claim}\label{clm:shortDip}
For tiles $a$ and $b$ in $\alpha'_{ik}$ but not in $\alpha'_{(i-1)k}$ such that the location $(x_1,y_1)$ of $a$ is above the line $y=(i-1)k$ and $b$ is a tile belonging to a macrotile such that the location $(x_2,y_2)$ of $b$ is not contained in a macrotile belonging to the left or right side regions of $\alpha'_{ik}$, it must be the case that $|y_2 - y_1| < c^2 + 2c + 2$.
\end{claim}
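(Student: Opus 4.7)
The plan is to prove Claim~\ref{clm:shortDip} by contradiction. Suppose $|y_2 - y_1| \geq c^2 + 2c + 2$. Since both $a$ and $b$ lie in $\alpha'_{ik}\setminus\alpha'_{(i-1)k}$, both have $y$-coordinates in $((i-1)k,\,ik]$, so it suffices to rule out large vertical separation between two tiles that are both freshly added during \texttt{UpdateAssembly}. Without loss of generality, I treat the two cases $y_2 > y_1$ and $y_1 > y_2$; I sketch the first, the second being symmetric.

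First, I would identify the macrotile $M_b$ containing $b$ and the macrotile $M_a$ containing (or adjacent to, in the case of fuzz) $a$. Because $M_b$ is interior, it represents a tile of $\mathcal{T}$ that is neither the leftmost nor the rightmost tile of its row; in particular, $M_b$ sits at some row $r_b$ of the simulated zig-zag. Since $\mathcal{S}$ simulates the zig-zag $\mathcal{T}$ faithfully and is directed, the macrotile $M_b$ cannot begin to place tiles until the adjacent macrotile in row $r_b$ (the one ``behind'' it in the direction of zig-zag propagation) has been assembled sufficiently to present the correct row-$r_b$ glues across the shared macro-edge; otherwise, since $\mathcal{S}$ is a faithful and clean simulation, some tile of $M_b$ would appear in a macrotile location whose mapped assembly in $\mathcal{T}$ lacks the corresponding tile, violating the fuzz condition. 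Iterating this dependency along row $r_b$ yields a chain of macrotiles in row $r_b$ back to its edge macrotile, whose assembly in turn depends (via the vertical transition of the zig-zag) on the completion of row $r_b-1$.

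Next, I would apply the same reasoning to $a$. If $a$ lies in a macrotile of row $r_a > r_b$ with $|y_1 - y_2|$ large, then $r_a - r_b$ must be large enough that the entire vertical span between $b$'s location and $a$'s location is traversed by macrotiles of rows $r_b, r_b+1, \dots, r_a$ together with their fuzz buffers. The vertical extent of a single row of macrotiles with fuzz above and below is at most $3c$, and the width of any path of tiles inside a single $c \times c$ macrotile is bounded by $c^2$. Using the chains of required macrotile completions derived above, plus Lemma~\ref{lem:independent-subassemblies} (to guarantee that the necessary sub-chains could in fact be grown without any other tiles), I would derive an assembly sequence in which row $r_b$ is completed while row $r_a$ is also being grown, forcing some macrotile of an intermediate row $r$ with $r_b < r < r_a$ to be only partially filled while both $M_a$ and $M_b$ admit new tile attachments. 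This partial growth forces either a violation of clean simulation (a non-empty macrotile whose neighbors force incompatible glues under $R$), or a violation of directedness (two distinct assembly orderings produce different tiles at some location in $M_a$ or $M_b$), contradicting the hypotheses on $\mathcal{S}$. Choosing the slack $c^2 + 2c + 2$ exactly absorbs the within-macrotile path length ($c^2$), the macrotile height plus fuzz ($2c$), and a constant buffer for the transition region.

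The main obstacle will be formalizing precisely how much of an intermediate row must be assembled in order for $M_a$ to admit a new tile while $M_b$ still admits one: one must rule out degenerate cases in which the chain to $a$ comes through fuzz rather than through complete macrotiles, and cases where $a$ itself lies in an edge macrotile (which is allowed to begin early via vertical transition). I expect the cleanest path to handle this is a careful structural case analysis of $M_a$ (interior, edge, or fuzz), combined with an application of Lemma~\ref{lem:paths} to produce the required dependency paths explicitly from the assembly sequence $\vec{\alpha'}$.
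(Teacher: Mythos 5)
Your proposal does not follow the paper's argument, and the argument you sketch contains a genuine gap. The paper's proof is a direct counting argument, not a contradiction derived from a ``partially filled intermediate row.'' Specifically, the paper first observes that $a$ requires $b$ (so $b$ is placed before $a$) and applies Lemma~\ref{lem:paths} to get a path $p_{ab}$ of tiles from $a$ down to $b$. Then, for each row $r$ of the simulated zig-zag lying between the levels of $a$ and $b$, it again applies Lemma~\ref{lem:paths} to obtain a \emph{new path} $p_r$ of tiles from the set of configurations representing one endpoint macrotile of row $r$ to the set representing the other endpoint; because these are new paths, they are pairwise disjoint and may be taken to avoid $p_{ab}$. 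Each $p_r$ either contains a tile below the seed (and therefore passes through the left or right side regions, crossing the cut $f_{ik}$) or it does not (in which case, to avoid $p_{ab}$, it must place a tile in or below $b$'s macrotile). There can be at most $2c$ disjoint paths of the first kind because they would otherwise exceed the permissible fuzz width on the sides, and at most $c^2$ disjoint paths of the second kind because otherwise they would completely fill $b$'s macrotile and leave no room for $b$ itself. Summing, there can be fewer than $c^2+2c+2$ rows between $a$ and $b$, which gives the claimed vertical bound.

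The step in your proposal that would fail is the claim that forcing ``some macrotile of an intermediate row $r$ to be only partially filled while both $M_a$ and $M_b$ admit new tile attachments'' already yields ``a violation of clean simulation \ldots or a violation of directedness.'' Neither of those conclusions follows: a simulator is perfectly free to have many rows growing concurrently, including partially filled interior macrotiles, as long as each partially filled macrotile either maps to the correct tile or is still undefined under $R$. Nothing about this picture forces incompatible glues or two different tile types at the same location. You also attribute the magnitude $c^2+2c+2$ to ``within-macrotile path length, macrotile height plus fuzz, and a buffer,'' but that is not where the bound actually comes from: $c^2$ bounds the number of disjoint row-paths through $b$'s macrotile (its area), and $2c$ bounds the number of disjoint row-paths squeezed through the side regions (the width allowed by the fuzz constraint). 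The missing idea is the pigeonhole count over the disjoint new paths produced by Lemma~\ref{lem:paths}, one per intermediate row, against the two fixed finite ``budgets'' (side-region width and $b$'s macrotile area).
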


Note that $a$ requires $b$. That is, $b$ must be placed prior to $a$ in any assembly sequence. By Lemma~\ref{lem:paths} there is a path of tiles from $a$ to $b$. Let $p_{ab}$ denote this path.
Tiles $a$ and $b$ as described in Claim~\ref{clm:shortDip} are depicted in Figure~\ref{fig:memory-use}. To prove the claim, consider the two tiles on the left and right ends of each row in the simulated zig-zag system. Call the tile farthest to the left $t_1$ and the other $t_2$. For the macrotile regions $M_1$ and $M_2$, where $M_1$ is to the left of $M_2$, of the simulating system that map to the tiles $t_1$ and $t_2$ respectively. We define $S_1$ to be the set of configurations over $M_1$ that map to $t_1$ under $R^*$. Similarly, we define $S_2$ to be the set of configurations over $M_2$ that map to $t_2$ under $R^*$. Note that for $\mathcal{S}$ to be a valid simulation of $\mathcal{T}$, either $S_2$ requires $S_1$ or $S_1$ requires $S_2$. Assume that $S_2$ requires $S_1$ (the other case is similar). Lemma~\ref{lem:paths} implies that there is a new path $p$ of tiles from $S_1$ to $S_2$. In Figure~\ref{fig:memory-use1} and~\ref{fig:memory-use2}, $p$ is depicted as the green line. Note that since $b$ is not a tile in $\alpha'_{(i-1)k}$, we can assume that the path $p$ from $S_1$ to $S_2$ does not intersect $p_{ab}$. Now we consider two cases. First, as the path $p$ assembles, a tile is placed below any tile of the seed $\sigma_\mathcal{T}$. Second, as the path $p$ assembles, a tile does not bind at a tile location that is below any tile location of the seed $\sigma_\mathcal{T}$. In the first case, the path must grow down the left or right side regions of $\alpha'_{ik}$, and by our choice of cut $f_{ik}$, this path must cross the cut. For a valid simulation, the maximum number of such paths that can assemble is $2c$, for otherwise the paths would exceed the fuzz region allowable in simulation. In the second case, it must be the case that the tiles belonging to the path $p$ place a tile in the middle region below $b$. In this case, the most such paths that can assemble is $c^2$. This follows from the fact that each such path places tiles either in the macrotile region containing the location of $b$ or below the macrotile region containing $b$. $c^2$ such paths would prevent any tiles from being placed in the macrotile region containing $b$, however, this would contradict the fact that $\mathcal{S}$ is a valid simulation. Therefore, Claim~\ref{clm:shortDip} holds.

\begin{figure}[htp]
\centering
  \subfloat[][In this case, the tile $b$ is contained in a macrotile region that is not a subregion of the left or right side regions shown as gray blocks.]{%
        \label{fig:memory-use1}%
	        \includegraphics[width=2.75in]{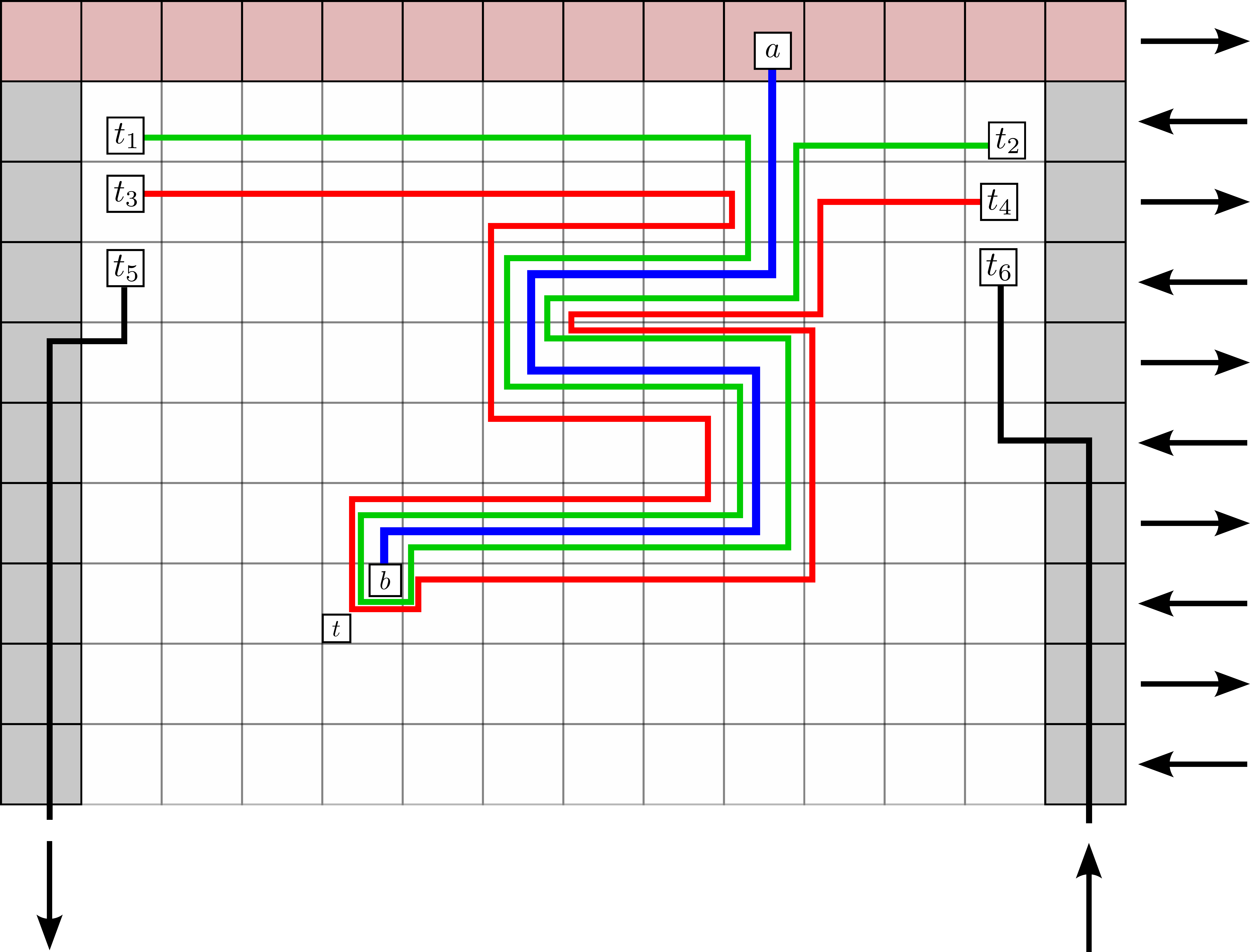}
        }%
        \quad\quad
  \subfloat[][In this case, the tile $b$ is contained in a macrotile region that is a subregion of the left or right side regions.]{%
        \label{fig:memory-use2}%
        		\includegraphics[width=2.75in]{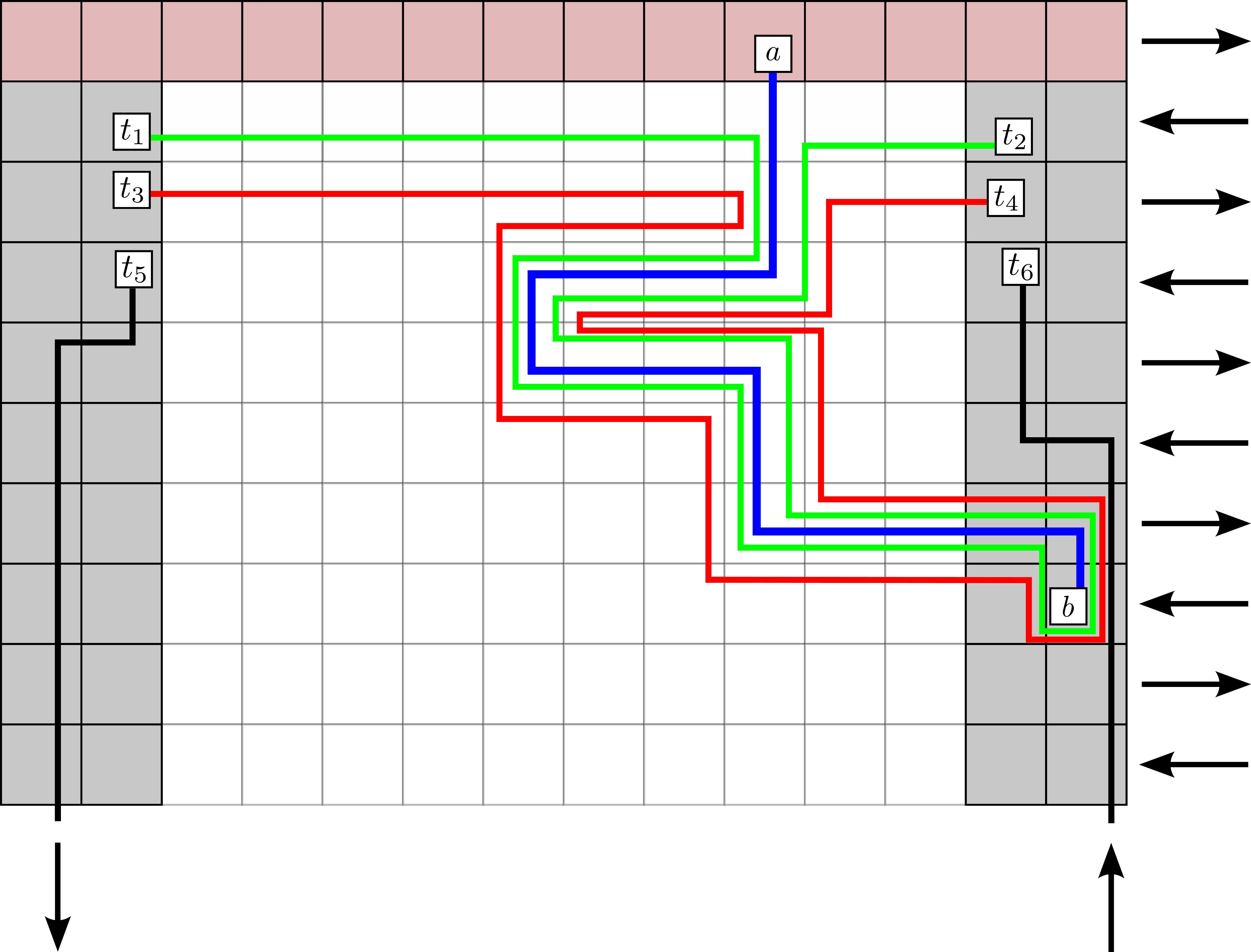}
        }%
  \caption{The blue, green, and red paths represent non-overlapping paths tiles of tiles that must assemble by Lemma~\ref{lem:paths}.\vspace{-5pt}}
  \label{fig:memory-use}
\end{figure}

By Claim~\ref{clm:shortDip}, we see that since we have chosen $k$ to be larger than $c^2+2c+2$, UpdateAssembly can be used to determine the tiles of $\beta'_{ik}$ from $\beta'_{(i-1)k}$ and $GST_{f_{(i-2)k}}$.

\subsubsection*{UpdateGST}

For each $i < n/k$, the procedure UpdateGST is used to create a new glue sequence table relative to the cut $f_{ik}$. The UpdateGST procedure is similar to the InitGST procedure, except that $GST_{f_{ik}}$ is constructed with the help of $GST_{f_{(i-1)k}}$. This procedure is given as Algorithm~\ref{alg:updateGST}.

\begin{algorithm}
\caption{An algorithm describing the procedure UpdateGST.\label{alg:updateGST}}
	\KwIn{$\beta'_{ik}$ and $GST_{f_{(i-2)k}}$}
	\KwOut{$GST_{f_{(i-1)k}}$}
	$\alpha' \defeq \beta'_{ik}$\;
	set $\Sigma^*$ to the set of all sequence over $\tilde{G}$\;
	set $B$ to the set of all $(x,y)\in \Z^2$ below $f_{ik}$\;	
	\For{$\Sigma\in \Sigma^*$}{
	set $F$ to the empty set\;
	set $M$ to the empty set\;
	\For{$\tilde{g} \in \Sigma$}{
	set $\Pi$ to the empty set;
	assume that $\tilde{g}$ is present\;
	\While{$B \cap \partial^{\tau'}(\alpha'\cup \tilde{g}) \neq \emptyset$}{
		choose $(x, y)\in \Z^2$ such that $(x,y) \in B \cap \partial^{\tau'}(\alpha'\cup \tilde{g})$\;
		$l\defeq (x,y)$\;
		
		choose $t\in T$ such that $l\in \partial^{\tau'}_t (\alpha'\cup \tilde{g})$\;
		$\alpha' \defeq \alpha' + (l \mapsto t)$\;		
		$\tilde{b} \defeq (t, d, l)$\;
		\If{$\tilde{b}$ crosses the cut $f_{(i-1)k}$ from below to above}{
			add $\tilde{b}$ to $M$\;
		}

		\For{$d\in\{(0,-1), (1,0), (-1,0)\}$}{
			$\tilde{d} \defeq (t, d, l)$\;
			
			\If{$\tilde{d}$ crosses the cut $f_{(i-2)k}$ from above to below}{
				add $\tilde{d}$ to $\Pi$ as last element\;
				expose any unexposed glues in $GST_{f_{(i-2)k}}(\Pi)$ along the cut $f_{(i-2)k}$\;	
			}		
		}

		}
	}
	set $F$ to $F \cup M$\;
	add $(\Sigma, F)$ to $GST_{f_{(i-1)k}}$\;
	}
\end{algorithm}

\subsubsection{The space complexity of the computation of $r'(n)$ is $O\left(f'(n)\right)$}

In this section, we prove two claims. First, that Algorithm~\ref{alg:simAlgo} correctly computes $r'(n)$, and second, that Algorithm~\ref{alg:simAlgo} can be computed in space $O(f'(n))$. We first argue that Algorithm~\ref{alg:simAlgo} can be computed in space $O(f'(n))$. Algorithm~\ref{alg:simAlgo} consists of four procedures: InitAssembly, InitGST, UpdateAssembly, and UpdateGST. First, we make the following observation.

\begin{observation}
$|\{(x,y)\mid GST_{f_k}(x) = y \}|$ is bounded by a constant that only depends on $c$ and $|U|$. We denote this constant by $K_{GST}$. In particular, if $g$ is the number of glues of tiles of $U$, $(g+1)^{4c}(4c)!$ is such a constant as this is the total number of sequences of $g$ glues (plus the null glue) of length $4c$.
\end{observation}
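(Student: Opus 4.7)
The plan is to treat $GST_{f_k}$ as a single-valued function on its domain and to bound the size of that domain directly. The first step is to verify single-valuedness: Algorithm~\ref{alg:initGST} builds $F$ by, for each $\tilde{g}_i$ in $\Sigma$, extending the assembly below the cut and collecting new glues that cross back upward. Although the intermediate tile placements are nondeterministic, $\mathcal{S}$ is directed by hypothesis, so the terminal subassembly below the cut under the assumption that all of $\Sigma$ is present is unique; consequently the set $F$ of glues crossing from below to above is uniquely determined by $\Sigma$, and $|\{(x,y)\mid GST_{f_k}(x)=y\}|$ equals the size of the domain of valid input sequences.

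Next I would count the valid input sequences directly. By definition $|\Sigma|\le 4c$, and each entry $\tilde{g}_i=(g_i,d_i,l_i)\in\tilde{G}$ crosses the cut from above to below. The direction $d_i$ is forced by the local orientation of the cut at the position $l_i$, so each entry reduces to a (position, glue) pair. At each fixed position one of at most $g+1$ values can occur, namely any of the $g$ glue labels appearing on tiles of $U$ (with $g$ bounded in terms of $|U|$) or the null glue indicating that no glue crosses at that position. Selecting a glue for each of up to $4c$ positions and ordering the resulting slots into a sequence then yields at most $(g+1)^{4c}(4c)!$ sequences, a constant $K_{GST}$ depending only on $c$ and $|U|$, as claimed.

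The main obstacle is justifying that only a bounded number of distinct positions $l_i$ need be enumerated, despite the cut $f_k$ being infinite along Lines~\ref{item:cut-portion0} and~\ref{item:cut-portion4}. I would handle this by observing that the cut has only $O(c)$ edges in a neighborhood of the step, which are the only positions where both the above-cut configuration of $\alpha'_{2k}$ and candidate tile placements below the cut can contribute nontrivially to $F$; positions along the cut that lie far from the step either admit no cooperating tile growth above and below the cut, or produce outputs that are mere horizontal translations of sequences already counted and thus do not enlarge the effective domain. Formalising this reduction to at most $4c$ distinct positions per sequence is the step that would require the most care in a full write-up, and it is what ultimately validates the $(g+1)^{4c}(4c)!$ enumeration.
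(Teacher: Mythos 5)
The paper states this observation without a standalone proof; the embedded justification is a single counting phrase, and you are right that a rigorous argument must both bound the set of positions at which a glue can cross the cut and handle whether the table is single- or multi-valued. Your proposal is in the right spirit, but the two places it leans are the two places it wobbles.

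Your single-valuedness argument overreaches. The set $F$ is built by \emph{hypothetically} exposing glues along the cut and growing what attaches below; those hypothetical configurations need not be producible assemblies of $\mathcal{S}$, so directedness of $\mathcal{S}$ (a statement about terminal assemblies actually reachable from the seed) does not by itself make $F$ unique. The paper explicitly remarks, just after Lemma~\ref{lem:noCheating1}, that the directedness hypothesis can be dropped by letting the GST map each sequence to a \emph{set} of candidate glue-sets, still of cardinality bounded by a constant in $c$ and $|U|$. So single-valuedness is neither automatic nor required; what matters is that both the input sequences and the candidate outputs are each bounded by such a constant.

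The position-bounding step you flag as needing ``the most care'' is indeed the crux, but translation-collapsing and ``no cooperating growth far from the step'' are not the mechanism the construction relies on. What actually bounds the relevant positions is the fuzz constraint combined with the choice of cut: $f_k$ is routed along a new path of tiles from $tl$ to $tr$, so any further crossing path must run down the left or right side regions, and the paper argues (in the paragraph constructing the cut and again in Claim~\ref{clm:shortDip}) that at most $2c$ such paths can cross on each side before they would exceed the allowable fuzz around valid macrotile regions. That fuzz argument is where the length cap $4c$ in the definition of the GST comes from; the observation itself takes it as given. Once that cap is in hand, your enumeration --- a glue or null at each of at most $4c$ positions, in some order --- does recover the claimed $(g+1)^{4c}(4c)!$.
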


From Algorithm~\ref{alg:primeAssembly} and~\ref{alg:initGST}, it is clear that InitAssembly and InitGST each require $O(f'(n) + K_{GST})$ space. Moreover, since for all $i$, $\beta'_{ik}$ is bounded in width by $f'(n)$ and in height by $2k$, UpdateAssembly and UpdateGST each require $O(f'(n))$ space, with each procedure requiring at most $f'(n)*3k$ tile locations.

Now let $B_i$ denote the set of $(x,y)\in\Z^2$ such that $(i-2)k \leq y \leq ik$. It remains to be shown that Algorithm~\ref{alg:simAlgo} correctly computes $r'(n)$.
To see this, note that UpdateAssembly computes $\beta'_{ik} = \alpha'_{ik}|_B$ and $L_{n}\subseteq B$. It follows that, $\beta'_{ik}|_{L_{n}} = \alpha'_{ik}|_{L_{n}}$. Therefore, Algorithm~\ref{alg:simAlgo} correctly computes $r'(n)$.

\end{proof}

Note that in Lemma~\ref{lem:noCheating1}, the assumption that $\mathcal{S}$ is directed can be removed by defining the glue sequence table to map into the set of sets of glues corresponding to each possible set of glues that may cross the cut corresponding to this glue sequence table. This set of sets is still bounded by a constant depending on $c$ and $S$. Fix an enumeration of this set of sets. Then, we modify the procedures UpdateAssembly and UpdateGST so that if a glue crosses the cut from above to below, we expose glues corresponding to the first set of glues in the enumeration of the set of sets of glues. Now we can state Lemma~\ref{lem:noCheating} which we refer to as the ``no cheating lemma''.

\begin{lemma}[No Cheating Lemma]\label{lem:noCheating}
Let $\mathcal{T} = (T, \sigma, \tau)$ be a zig-zag system and let $\mathcal{S} = (S, \sigma_S, {\tau'})$ be a system that simulates $\mathcal{T}$ at temperature $\tau'$ with scale factor $c$. Let $n$ be in $\N$, and let $f(n)$ be the width of the longest row of the assembly of $\mathcal{T}$ consisting of $n$ completed rows. Moreover, let $C_{cn}$ be a set of finite configurations, let $r'$ be the characteristic function for $\mathcal{S}$ given $C_{cn}$, and let $L_{r'}$ be the language defined by $\mathcal{S}$ given $r'$. Then, $L_{r'}\in \DSPACE(f(n))$.

\end{lemma}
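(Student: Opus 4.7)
The plan is to obtain Lemma~\ref{lem:noCheating} as a direct generalization of Lemma~\ref{lem:noCheating1} via two structural modifications to the argument, without redoing the geometric/combinatorial core. The entire skeleton (the choice of the cuts $f_{ik}$ based on the paths given by Lemma~\ref{lem:paths}, Claim~\ref{clm:shortDip} that growth only ever ``dips'' a constant amount below the currently processed band, and the alternation between the procedures InitAssembly, InitGST, UpdateAssembly, UpdateGST) carries over verbatim once those two modifications are made. So I would first restate Algorithm~\ref{alg:simAlgo} in the new setting, highlight exactly where directedness and where the bound $f'$ were actually used, and then patch both.

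First, I would drop the directedness hypothesis by upgrading the glue sequence table from a function into a relation. Where the directed GST records, for each sequence $\Sigma$ of length at most $4c$ of glues crossing the cut $f_{ik}$ from above to below, the unique set $F \subseteq \tilde G$ of below-to-above glues produced when the subassembly below the cut is completed, the non-directed version records the family of all such sets $F$ that could appear under some legal completion. Because the cut intersects only $O(c)$ edges on each side and $|S|$ is finite, the number of candidate sets $F$ is still bounded by a constant $K_{GST}'$ depending only on $c$ and $|S|$; hence the size of the enlarged table is still a constant. Following the remark before the lemma, I would fix once and for all an enumeration of these candidate sets; whenever UpdateAssembly or UpdateGST encounters a glue crossing $f_{(i-2)k}$ from above to below, it exposes the glues corresponding to the first candidate set consistent with the sequence built so far. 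This makes the algorithm deterministic and ensures that the assembly it produces is a legitimate producible assembly $\alpha'\in\prodasm{\mathcal S}$ (because every choice made along the way is realized by some valid assembly sequence), which is the content we need for the characteristic function $r'$ to be well defined for this $\alpha'$.

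Second, I would translate the bound $f'(n)$ obtained in Lemma~\ref{lem:noCheating1} into the stated bound $f(n)$. Since $\mathcal{S}$ simulates $\mathcal{T}$ at scale $c$ with $c$ constant, a row of width $w$ in $\mathcal{T}$ is represented in $\mathcal{S}$ by a block of macrotiles of horizontal extent at most $cw$, plus at most one extra macrotile of fuzz on each side. Therefore, if $f$ is the max width of the first $n$ rows of $\mathcal{T}$ and we evaluate $r'$ at the macrotile row level corresponding to these $n$ rows (i.e.\ the $C_{cn}$ in the statement), the corresponding quantity $f'$ for the simulator satisfies $f'(cn) \le cf(n)+O(c) = O(f(n))$. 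Plugging this into the conclusion of (the non-directed version of) Lemma~\ref{lem:noCheating1} yields $L_{r'}\in\DSPACE(f(n))$, as desired.

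The main obstacle I expect is the first modification: making sure that the non-deterministic extension of the GST still plays correctly with Claim~\ref{clm:shortDip}. Concretely, Claim~\ref{clm:shortDip} was crucial because it guaranteed that the UpdateAssembly step only needs the last $2k$ rows of the simulator's assembly in memory; once the GST is multi-valued, we must verify that committing to the first candidate set at some cut does not later force an inconsistent state above the cut that would require re-examining tiles more than $2k$ rows below. This is resolved by observing that the choice at a given cut is a local choice consistent with \emph{some} producible assembly, and the repeated application of Lemma~\ref{lem:paths} and the geometric bound $c^2+2c+2$ depend only on the simulation being a valid simulation of a zig-zag system, not on directedness. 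Once that is spelled out, the space-accounting argument from the proof of Lemma~\ref{lem:noCheating1} transfers unchanged, and the lemma follows.
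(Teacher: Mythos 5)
Your proposal is correct and follows essentially the same route as the paper: the paper's proof of Lemma~\ref{lem:noCheating} is just the scale-factor bound $f'(n) \le cf(n)+2c$, relying on the immediately preceding remark, which (as you do) drops directedness by making the GST multi-valued, fixing an enumeration, and committing to the first candidate set. Your write-up is a faithful and somewhat more explicit rendering of both parts, including the observation (which the paper leaves implicit) that the constant dip bound of Claim~\ref{clm:shortDip} and Lemma~\ref{lem:paths} do not depend on directedness.
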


\begin{proof}
For the scale factor $c$, this follows from the fact that $f'(n) \leq cf(n) + 2c$. The addition of $2c$ accounts for fuzz regions.
\end{proof}

\fi

\section*{Acknowledgements}
The authors would like to thank Jack Lutz for helpful guidance while searching for much needed computational complexity results.

\bibliographystyle{plain}
\bibliography{tam,complexity,experimental_refs}

\end{document}